\newcommand{\argmax}{\mathop{\rm arg~max}\limits}
\global\long\def\G{\mathcal{G}}%
\global\long\def\S{\mathcal{S}}%
\global\long\def\R{\mathbb{R}}%
\global\long\def\C{\mathbb{C}}%
\global\long\def\N{\mathbb{N}}%
\global\long\def\M{\mathcal{M}}%
\global\long\def\X{\mathcal{X}}%
\global\long\def\F{\mathcal{F}}%
\global\long\def\B{\mathcal{B}}%
\global\long\def\H{\mathcal{H}}%
\global\long\def\Y{\mathcal{Y}}%
\global\long\def\A{\mathcal{A}}%
\global\long\def\bra#1{\left\langle #1\right|}%
\global\long\def\ket#1{\left|#1\right\rangle }%
\global\long\def\Tr{{\rm Tr}\,}%
\global\long\def\braket#1#2{\left\langle #1\mid#2\right\rangle }%
\global\long\def\re{{\rm Re}\,}%
\global\long\def\id{{\rm id}}%
\global\long\def\ii{\sqrt{-1}}%
\newtheorem{thm}{\protect\theoremname}
\newtheorem{lem}[thm]{\protect\lemmaname}
\providecommand{\lemmaname}{Lemma}
\providecommand{\theoremname}{Theorem}
\begin{document}

\title{Gill and Massar type bound for estimation of $SU(2)$ channel}
\author{
	Koichi Yamagata%
	\thanks{yamagata@se.kanazawa-u.ac.jp}\\
	{Institute of Science and Engineering, Kanazawa University} \\
	{Kanazawa, Ishikawa, 920-1192, Japan}
}%
\date{}

\maketitle

\begin{abstract}
In the estimation for a parametric family of quantum state on a Hilbert
space $\H$, the Gill and Massar bound is known as a lower bound of
weighted traces of covariances of unbiased estimators. The Gill and
Massar bound is derived by considering the convexity of the set of
classical Fisher information matrices, and the bound is locally achievable
by using randomized strategies when $\H=\C^{2}$. In this paper, we
show that estimation for a parametric $SU(2)$ unitary channel model
has a similar convex structure as qubit state model, and a Gill and
Massar type lower bound of weighted traces of covariances of unbiased
estimators can be derived for any weight matrix. We show that the
Gill and Massar type lower bound is achievable by using randomized
strategies when certain conditions are satisfied. To derive a convex
structure of the set of classical Fisher information matrices, we
introduce a Fisher information matrix $J^{(U)}$ for a $SU(2)$ unitary
channel model, and we show a upper bound of inverse $J^{(U)}$ weighted
trace of classical Fisher information matrix. The optimal randomized
strategy we construct in this paper does not require ancilla systems
in many cases. 

\end{abstract}

%
%
%
%
%

\section{Introduction}

Quantum state estimation and quantum channel estimation are important
topics in quantum information processing. When dealing with multi-parameter
models, the estimation problem is complicated due to the non-commutativity
in quantum mechanics\cite{rev1,rev2,rev3,rev4}. This paper deals
with the $SU(2)$ unitary channel estimation problem based on quantum
state estimation theory, so let us begin with an introduction to the state
estimation.

Let $\S(\H)$ be a set of density operators on a Hilbert space $\H$.
In quantum state estimation theory of a parametric family of density
operators $\left\{ \rho_{\theta}\in\S(\H)\mid\,\theta\in\Theta\subset\R^{d}\right\} $,
an estimator is a pair $(M,\hat{\theta})$ of a POVM $M$ taking values
in a finite set $\X$ and a map $\hat{\theta}:\X\mapsto\Theta$. An
estimator $(M,\hat{\theta})$ is called \textbf{unbiased} if
\begin{equation}
E_{\theta}[M,\hat{\theta}]=\sum_{x\in\X}\hat{\theta}(x)\Tr\rho_{\theta}M_{x}=\theta\label{eq:unbias}
\end{equation}
is satisfied for all $\theta\in\Theta$. An estimator $(M,\hat{\theta})$
is called \textbf{locally unbiased} \cite{holevo} at a fixed point
$\theta_{0}\in\Theta$ if the condition (\ref{eq:unbias}) is satisfied
around $\theta_{0}$ up to the first order of the Taylor expansion,
i.e.,
\begin{align*}
\sum_{x\in\X}\hat{\theta}^{i}(x)\Tr\rho_{\theta_{0}}M_{x} & =\theta_{0}^{i}\qquad(i=1,\dots,d),\\
\sum_{x\in\X}\hat{\theta}^{i}(x)\Tr\partial_{j}\rho_{\theta_{0}}M_{x} & =\delta_{j}^{i}\qquad(i,j=1,\dots,d),
\end{align*}
where $\partial_{j}\rho_{\theta_{0}}=\left.\frac{\partial}{\partial\theta^{j}}\rho_{\theta}\right|_{\theta=\theta_{0}}$.
To find optimal estimator, lower bounds of the weighted trace of the
covariance matrix $\Tr WV_{\theta_{0}}[M,\hat{\theta}]$ for a locally
unbiased estimator $(M,\hat{\theta})$ was investigated, where $W$
is a $d\times d$ positive real matrix and $V[M,\hat{\theta}]$ is
the covariance matrices. Because of the classical Cram\'er-Rao inequality
$V_{\theta_{0}}[M,\hat{\theta}]\geq J_{\theta_{0}}^{(C,M)^{-1}}$
with the classical Fisher information matrix 
\[
J_{\theta_{0},ij}^{(C,M)}=\sum_{x\in\X}\frac{\left(\Tr\partial_{i}\rho_{\theta_{0}}M_{x}\right)\left(\Tr\partial_{j}\rho_{\theta_{0}}M_{x}\right)}{\Tr\rho_{\theta_{0}}M_{x}},
\]
and its sharpness, $\inf_{(M,\hat{\theta})}\Tr WV_{\theta_{0}}[M,\hat{\theta}]$
is identical to $\inf_{M}\Tr WJ_{\theta_{0}}^{(C,M)^{-1}}$. 

When $\dim\H$=2, it is known that $\Tr WV[M,\hat{\theta}]$ satisfies
the following inequality:
\[
\Tr WV_{\theta_{0}}[M,\hat{\theta}]\geq c_{\theta_{0},W}^{(GM)}
\]
with the Gill and Massar bound
\begin{equation}
c_{\theta_{0},W}^{(GM)}:=\left(\Tr\sqrt{J_{\theta_{0}}^{(S)^{-1/2}}WJ_{\theta_{0}}^{(S)^{-1/2}}}\right)^{2},\label{eq:gill_massar_intr}
\end{equation}
where $J_{\theta_{0}}^{(S)}$ is the symmetric logarithmic derivative
(SLD) Fisher information matrix \cite{gill_massar,qubit_hayashi}
(see also \cite{yama_tomo}). The lower bound $c_{\theta_{0},W}^{(GM)}$
is achievable by using a randomized strategy of $d$ projective measurements
for $d=2,3$. The bound $c_{\theta_{0},W}^{(GM)}$ can be derived
by using the convex structure of the set $\F_{\theta_{0}}:=\left\{ J_{\theta_{0}}^{(C,M)}\mid M\text{ is a POVM}\right\} $
of classical Fisher information matrices. In the estimation of a qubit
state model $\left\{ \rho_{\theta}\in\S(\C^{2})\mid\theta\in\Theta\subset\R^{3}\right\} $,
the set $\F_{\theta_{0}}$ of classical Fisher information matrices
satisfies
\[
\F_{\theta_{0}}=\left\{ F\mid F\text{ is a \ensuremath{d\times d} real positive matrix such that }\Tr J^{(S)^{-1}}F\leq1\right\} .
\]
Thus finding the minimum of weighted trace of covariance matrices
for a weight $W$ is reduced to the convex optimization problem to
find
\[
\min_{F\in\F_{\theta_{0}}}\Tr WF^{-1}(=c_{\theta_{0},W}^{(GM)}),
\]
and the minimum is achieved by a randomized strategy. In this paper,
we show that the estimation of $SU(2)^{\otimes n}$ channel has a
similar structure and a similar bound can be derived. 

Let $\left\{ U_{\theta}\mid\,\theta\in\Theta\subset\R^{d}\right\} $
be a parametric family of $SU(2)$ unitary matrices on a Hilbert space
$\H=\C^{2}$ with $d=1,2,3$, and let us consider a corresponding
parametric family of unitary channels 
\[
\left\{ \Gamma_{\theta}:\B(\C^{2})\to\B(\C^{2}),\,\rho\mapsto U_{\theta}\rho U_{\theta}^{*}\mid\,\theta\in\Theta\subset\R^{d}\right\} 
\]
 on the Hilbert space $\C^{2}$, where $\B(\C^{2})$ is a set of linear
operators on $\C^{2}$. In this study, we deal with the $n$th i.i.d.
extended model
\begin{equation}
\G^{(n)}:=\left\{ \Gamma_{\theta}^{\otimes n}:\rho\mapsto U_{\theta}^{\otimes n}\rho U_{\theta}^{*\otimes n}\mid\theta\in\Theta\subset\R^{d}\right\} .\label{eq:su2n_model}
\end{equation}

To estimate an unknown parameter $\theta\in\Theta$, an ancilla Hilbert
space $\H_{a}^{(n)}$ may improve the estimation accuracy, and an
estimator is a triple $(\rho^{(n)},M^{(n)},\hat{\theta}^{(n)})$ of
an input quantum state $\rho^{(n)}\in\S(\C^{2^{n}}\otimes\H_{a}^{(n)})$,
a POVM $M^{(n)}$ on $\C^{2^{n}}\otimes\H_{a}^{(n)}$ taking values
in a finite set $\X^{(n)}$, and a map $\hat{\theta}^{(n)}:\X^{(n)}\mapsto\Theta$.
An estimator $(\rho^{(n)},M^{(n)},\hat{\theta}^{(n)})$ is called
locally unbiased at $\theta_{0}\in\Theta$ if
\begin{align*}
\sum_{x\in\X^{(n)}}\hat{\theta}^{(n)i}(x)\Tr(\Gamma_{\theta_{0}}^{\otimes n}\otimes\id_{\H_{a}^{(n)}})(\rho^{(n)})M_{x}^{(n)} & =\theta_{0}^{i}\qquad(i=1,\dots,d),\\
\sum_{x\in\X^{(n)}}\hat{\theta}^{(n)i}(x)\Tr\partial_{j}(\Gamma_{\theta_{0}}^{\otimes n}\otimes\id_{\H_{a}^{(n)}})(\rho^{(n)})M_{x}^{(n)} & =\delta_{j}^{i}\qquad(i,j=1,\dots,d),
\end{align*}
is satisfied, where $\id_{\H_{a}^{(n)}}$ is an identity channel on
$\H_{a}^{(n)}$. 

There are several existing studies on the estimation of unitary channels
\cite{fujiwara_dense,imai_sud,hayashi_su2,Kahn_sud}. In their studies,
the symmetric structure of the unitary group was focused, and three
symmetric parameters $\theta^{1},\theta^{2}$ and $\theta^{3}$ were
treated as equal, and maximally entangled states were used as input.
However, if the estimation accuracy of $\theta^{1}$ is more important
than $\theta^{2}$ and $\theta^{3}$ for an experimenter, the existing
estimation methods may not be optimal. In this study, we consider
the lower bound of the weighted trace of covariance matrix $\Tr WV_{\theta_{0}}^{(n)}[\rho^{(n)},M^{(n)},\hat{\theta}^{(n)}]$
for a locally unbiased estimator $(\rho^{(n)},M^{(n)},\hat{\theta}^{(n)})$
of $\left\{ \Gamma_{\theta}^{\otimes n}\right\} _{\theta\in\Theta}$
with an arbitrary $d\times d$ positive real matrix $W$ as a weight.
We show that maximally entangled states are not optimal in general,
and randomized strategies without using an ancilla Hilbert space are
optimal in many cases. 

In this paper, we define a Fisher information matrix at $\theta_{0}\in\Theta$
for the $SU(2)$ channel model by
\begin{equation}
J_{\theta_{0}}^{(U)}:=2\left[\Tr\left(\partial_{i}U_{\theta_{0}}\right)^{*}\left(\partial_{j}U_{\theta_{0}}\right)\right]_{1\leq i,j\leq d}\label{eq:ch-Fisher}
\end{equation}
for $\theta_{0}\in\Theta$, and we show a matrix inequality
\begin{equation}
n^{2}V_{\theta_{0}}^{(n)}[\rho^{(n)},M^{(n)},\hat{\theta}^{(n)}]\geq J_{\theta_{0}}^{(U)^{-1}}.\label{eq:matrix_inq_intr}
\end{equation}
Note that $J_{\theta_{0}}^{(U)}$ is a real positive matrix. When
$n=1$, the equality is achievable by a maximally entangled input
state. When $n=2$ and $d=2$, the equality is achievable by a pure
state input given in Theorem \ref{thm:bound_matrix} without an ancilla
Hilbert space. When $(d-1)n>2$, the matrix inequality (\ref{eq:matrix_inq_intr})
is not sharp. 

Similar to the estimation of a qubit state model that has Gill and
Massar bound, there exists a similar convex structure in the estimation
of $SU(2)$ channel model $\G^{(n)}$. We prove that the set $\F_{\theta_{0}}^{(n)}$
of classical Fisher information matrices at $\theta_{0}$ for $\G^{(n)}$
satisfies
\begin{equation}
\F_{\theta_{0}}^{(n)}\subset\left\{ F\mid F\text{ is a \ensuremath{d\times d} real positive matrix such that }\Tr J_{\theta_{0}}^{(U)^{-1}}F\leq n^{2}+2n\right\} .\label{eq:gill_massar_int0}
\end{equation}
By using this, we prove
\begin{equation}
(n^{2}+2n)\Tr WV_{\theta_{0}}^{(n)}[\rho^{(n)},M^{(n)},\hat{\theta}^{(n)}]\geq c_{\theta_{0},W}\label{eq:gill_massar_int}
\end{equation}
with a lower bound
\[
c_{\theta_{0},W}:=\left(\Tr\sqrt{J_{\theta_{0}}^{(U)^{-1/2}}WJ_{\theta_{0}}^{(U)^{-1/2}}}\right)^{2},
\]
for a real positive $d\times d$ matrix $W$. We can see that this
lower bound $c_{\theta_{0},W}$ is similar to the Gill and Massar
bound (\ref{eq:gill_massar_intr}) for the qubit state estimation.
When $d=3$ and $n\geq\max\left\{ 3,\frac{\sqrt{w_{2}}+\sqrt{w_{3}}}{\sqrt{w_{1}}}-1\right\} $,
the lower bound $c_{\theta_{0},W}$ is achievable by using a randomized
strategy without an ancilla Hilbert space, where $w_{1},w_{2},w_{3}$
are eigenvalues of $\tilde{W}:=J_{\theta_{0}}^{(U)^{-1/2}}WJ_{\theta_{0}}^{(U)^{-1/2}}$
such that $0\leq w_{1}\leq w_{2}\leq w_{3}$. When $d=3$ and $n=2$,
more informative and achievable bound can be obtained by considering
\[
\F_{\theta_{0}}^{(n)}\subset\left\{ F\mid F\text{ is a \ensuremath{d\times d} real positive matrix such that }\Tr J_{\theta_{0}}^{(U)^{-1}}F\leq n^{2}+2n\text{ and }n^{2}F\leq J_{\theta_{0}}^{(U)}\right\} .
\]
When $d=2$, $n^{2}+2n$ in (\ref{eq:gill_massar_int0}) and (\ref{eq:gill_massar_int})
can be replaced by $2\left\lfloor (n^{2}+2n)/2\right\rfloor $, and
the lower bound $c_{\theta_{0},W}$ is achievable if $W=J_{\theta_{0}}^{(U)}$.
For an arbitrary weight $W$ and $d=2$, the lower bound $c_{\theta_{0},W}$
is asymptotically achievable by using randomized strategies. 

Although our theory assume locally unbiased estimators, in many cases
the local theory of estimation can be used for global estimation by
using adaptive estimation methods\cite{adaptive}. On the other hand,
the above results show that the optimal covariance have the so-called
Heisenberg scaling $1/n^{2}$. It was reported that if a model has
the Heisenberg limit, the variance with respect to the total number
of samples may not reach the local limit \cite{rev1,rev5}. It is
open problem whether the local lower bound (\ref{eq:gill_massar_int})
can be achieved asymptotically if $n$ is the total number of samples.
At least, if the estimation of $U_{\theta}^{\otimes n}$ for a fixed
$n$ is repeated a sufficiently large number of times $M$, the local
limit (\ref{eq:gill_massar_int}) can work globally. 

This paper is organized as follows: In Section \ref{sec:mix_input},
we introduce some basic facts that commonly hold in the estimation
of quantum channel models. We show here that a set of classical Fisher
information matrices is convex by considering randomized strategies.
Further, we show that any estimator for a channel estimation can be
reproduced by using a pure state as input with an ancilla Hilbert
space whose dimensions is less than $\dim\H_{in}$ due to a generalized
purification. In Section \ref{sec:pure_est}, we review the estimation
theory for pure state motels. Section \ref{sec:est_su2} is devoted
to the estimation of $SU(2)$ channel models. In Subsection \ref{subsec:su2_model},
we prepare $SU(2)$ channel models and notations. In Subsection \ref{subsec:matrix_inequality},
we derive the matrix inequality (\ref{eq:matrix_inq_intr}). In Subsection
\ref{subsec:tr_inquality}, we show upper bounds of traces of classical
Fisher information matrices to derive Gill and Massar type inequalities.
In Subsection \ref{subsec:3para} and \ref{subsec:2para}, we derive
Gill and Massar type lower bounds for three parameter models and two
parameter models with optimal randomized strategies. In Subsection
\ref{subsec:estN2}, we derive Gill and Massar type bounds for $d=3$
and $n=2$. In Section \ref{sec:plots}, we show some figures to visualize
the sets of classical Fisher information matrices. Section \ref{sec:Conclusion}
is the conclusion. For the reader's convenience, some additional materials
are presented in the Appendix. In Appendix \ref{sec:purification},
we show a generalized purification. In Appendix \ref{sec:SLD_Holevo},
we review a relation between the Holevo bound and the SLD bound for
a quantum state estimation. In Appendix \ref{sec:pure_holevo}, we
review the Holevo bound for a pure state estimation. In Appendix \ref{sec:numerical_calculations},
we detail numerical calculations used to draw figures.

\section{Estimation of quantum channel and its convex structure\label{sec:mix_input}}

In this section, we introduce some basic facts that commonly hold
in the estimation of a quantum channel model. Let $\left\{ \Gamma_{\theta}:\B(\H_{in})\to\B(\H_{out})\mid\,\theta\in\Theta\subset\R^{d}\right\} $
be a parametric family of quantum channels with Hilbert spaces $\H_{in}$
and $\H_{out}$. To estimate unknown $\theta$, an ancilla Hilbert
space $\H_{a}$ may improve the estimation accuracy, and an estimator
is a triple $(\rho,M,\hat{\theta})$ of an input quantum state $\rho\in\S(\H_{in}\otimes\H_{a})$,
a POVM $M\in\M(\H_{out}\otimes\H_{a},\X)$ on $\H_{out}\otimes\H_{a}$
taking values in a finite set $\X$, and a map $\hat{\theta}:\X\mapsto\Theta$.
An estimator $(\rho,M,\hat{\theta})$ is called unbiased if
\begin{equation}
E_{\theta}[\rho,M,\hat{\theta}]=\sum_{x\in\X}\hat{\theta}(x)\Tr(\Gamma_{\theta}\otimes\id_{\H_{a}})(\rho)M(x)=\theta\label{eq:unbias2}
\end{equation}
is satisfied for all $\theta\in\Theta$, where $\id_{\H_{a}}$ is
an identity channel on $\H_{a}$. An estimator $(\rho,M,\hat{\theta})$
is called locally unbiased at a fixed point $\theta_{0}\in\Theta$
if the condition (\ref{eq:unbias2}) is satisfied around $\theta_{0}$
up to the first order of the Taylor expansion, i.e.,

\begin{align*}
E_{\theta}[\rho,M,\hat{\theta}^{i}] & =\theta_{0}^{i}\qquad(i=1,\dots,d),\\
\partial_{j}E_{\theta}[\rho,M,\hat{\theta}^{i}] & =\delta_{j}^{i}\qquad(i,j=1,\dots,d).
\end{align*}
Note that for fixed $\rho$ and $M$, 
\[
\left\{ \left.p_{\theta}^{(\rho,M)}(x):=\Tr(\Gamma_{\theta}\otimes\id_{\H_{a}})(\rho)M(x)\right|\,\theta\in\Theta\subset\R^{d}\right\} 
\]
is a parametric family of classical probability distributions, and
the covariance matrix $V_{\theta_{0}}[\rho,M,\hat{\theta}]$ of a
locally unbiased estimator $(\rho,M,\hat{\theta})$ at $\theta=\theta_{0}$
satisfies the classical Cram\'er-Rao inequality
\begin{equation}
V_{\theta_{0}}[\rho,M,\hat{\theta}]\geq J_{\theta_{0}}^{(C,\rho,M)^{-1}},\label{eq:classical_cramer_u}
\end{equation}
where
\[
J_{\theta_{0}}^{(C,\rho,M)}:=\left[\sum_{x\in\X}\frac{\left\{ \partial_{i}\Tr(\Gamma_{\theta_{0}}\otimes\id_{\H_{a}})(\rho)M(x)\right\} \left\{ \partial_{j}\Tr(\Gamma_{\theta_{0}}\otimes\id_{\H_{a}})(\rho)M(x)\right\} }{\Tr(\Gamma_{\theta_{0}}\otimes\id_{\H_{a}})(\rho)M(x)}\right]_{1\leq i,j\leq d}
\]
is the classical Fisher information matrix at $\theta=\theta_{0}$
with restrict to $\rho$ and $M$.  The equality of (\ref{eq:classical_cramer_u})
is achieved when $\hat{\theta}^{i}(x)=\theta_{0}^{i}+\sum_{j=1}^{d}\left[J_{\theta_{0}}^{(C,\rho,M)^{-1}}\right]^{ij}\frac{\partial_{j}\Tr(\Gamma_{\theta_{0}}\otimes\id_{a})(\rho)M(x)}{\Tr(\Gamma_{\theta_{0}}\otimes\id_{a})(\rho)M(x)}.$
Thus seeking an estimator $(\rho,M,\hat{\theta})$ that minimizes
$\Tr WV[\rho,M,\hat{\theta}]$ is reduced to seeking a pair $(\rho,M)$
that minimizes $\Tr WJ_{\theta_{0}}^{(C,\rho,M)^{-1}}$ for a given
$d\times d$ real positive matrix $W$. We call a state-measurement
pair $(\rho,M)$ a channel measurement. When $\rho=\ket{\psi}\bra{\psi}$
is a pure state, we also denote a channel measurement as $(\psi,M)$.
We show that it is sufficient to restrict to $\dim\H_{a}\leq\dim\H_{in}$
and $\rho\in\S(\H_{in}\otimes\H_{a})$ to a pure state as follows.

\begin{thm}
\label{thm:ch_eqiv}Let $\left\{ \Gamma_{\theta}:\B(\H_{in})\to\B(\H_{out})\mid\,\theta\in\Theta\subset\R^{d}\right\} $
be a parametric family of quantum channels. For any pair $(\rho,M)$
of a quantum state $\rho\in\S(\H_{in}\otimes\H_{a})$ and a POVM $M\in\M(\H_{out}\otimes\H_{a},\X)$
taking values in a finite set $\X$ on a Hilbert space $\H_{out}\otimes\H_{a}$with
an ancilla Hilbert space $\H_{a}$, there exists a pair $(\ket{\psi}\bra{\psi},N)$
of a pure state $\ket{\psi}\bra{\psi}\in\S(\H_{in}\otimes\H_{b})$
and a POVM $N\in\M(\H_{out}\otimes\H_{b},\X)$ on $\H_{out}\otimes\H_{b}$
with another Hilbert space $\H_{b}$ such that
\[
\dim\H_{b}\leq\dim\H_{in}
\]
and
\[
p_{\theta}^{(\rho,M)}(x)=p_{\theta}^{(\psi,N)}(x)
\]
for any $\theta\in\Theta$ and any $x\in\X$, where
\begin{align*}
p_{\theta}^{(\psi,N)}(x) & =\Tr\left(\Gamma_{\theta}\otimes\id_{\H_{b}}\right)(\ket{\psi}\bra{\psi})N(x).
\end{align*}
\end{thm}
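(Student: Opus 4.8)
The plan is to prove the statement by a single purification of the input state followed by a compression of the measurement, the crucial observation being that $\Gamma_{\theta}$ acts only on $\H_{in}$ and leaves the ancilla untouched. Because of this, the effective dimension of the ancilla is governed by the Schmidt rank of a purification of $\rho$ across the cut $\H_{in}\mid(\text{ancilla})$, and this rank never exceeds $\dim\H_{in}$ however large $\H_{a}$ is. This is exactly what will let me replace the possibly huge $\H_{a}$ by an $\H_{b}$ of dimension at most $\dim\H_{in}$.

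First I would purify $\rho$ in the standard way. Taking a spectral decomposition $\rho=\sum_{k}\lambda_{k}\ket{\phi_{k}}\bra{\phi_{k}}$ with $\ket{\phi_{k}}\in\H_{in}\otimes\H_{a}$, I introduce a reference space $\H_{c}$ with $\dim\H_{c}=\operatorname{rank}\rho$ and set $\ket{\Phi}=\sum_{k}\sqrt{\lambda_{k}}\,\ket{\phi_{k}}\otimes\ket{k}\in\H_{in}\otimes\H_{a}\otimes\H_{c}$, so that $\Tr_{\H_{c}}\ket{\Phi}\bra{\Phi}=\rho$. Viewing $\ket{\Phi}$ as a bipartite pure state across $\H_{in}\mid(\H_{a}\otimes\H_{c})$, its Schmidt rank equals $\operatorname{rank}(\Tr_{\H_{a}}\rho)\le\dim\H_{in}$. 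I then take $\H_{b}\subseteq\H_{a}\otimes\H_{c}$ to be the span of the Schmidt vectors lying on the $(\H_{a}\otimes\H_{c})$-side; then $\ket{\psi}:=\ket{\Phi}$ is a unit vector of $\H_{in}\otimes\H_{b}$ with $\dim\H_{b}\le\dim\H_{in}$, giving the required pure input.

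Next I would compress the POVM to $\H_{out}\otimes\H_{b}$. Writing $P_{b}$ for the orthogonal projection of $\H_{a}\otimes\H_{c}$ onto $\H_{b}$, I define $N(x):=(I_{\H_{out}}\otimes P_{b})(M(x)\otimes I_{\H_{c}})(I_{\H_{out}}\otimes P_{b})$ as an operator on $\H_{out}\otimes\H_{b}$. Each $N(x)$ is positive, and $\sum_{x}N(x)=I_{\H_{out}}\otimes P_{b}$ equals the identity on $\H_{out}\otimes\H_{b}$ because $\sum_{x}M(x)=I_{\H_{out}\otimes\H_{a}}$, so $N$ is a POVM valued in the same set $\X$. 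The remaining and, I expect, only delicate point is to verify $p_{\theta}^{(\psi,N)}(x)=p_{\theta}^{(\rho,M)}(x)$. Here one uses that $\ket{\Phi}$ is supported on $\H_{in}\otimes\H_{b}$ and that $\Gamma_{\theta}$ leaves the ancilla untouched, so the output $(\Gamma_{\theta}\otimes\id_{\H_{b}})(\ket{\Phi}\bra{\Phi})$ remains supported on $\H_{out}\otimes\H_{b}$ and the sandwiching by $I_{\H_{out}}\otimes P_{b}$ is lossless; cyclicity of the trace then reduces $\Tr[(\Gamma_{\theta}\otimes\id_{\H_{b}})(\ket{\psi}\bra{\psi})N(x)]$ to $\Tr[(\Gamma_{\theta}\otimes\id_{\H_{a}\otimes\H_{c}})(\ket{\Phi}\bra{\Phi})(M(x)\otimes I_{\H_{c}})]$, and tracing out $\H_{c}$ with $\Tr_{\H_{c}}\ket{\Phi}\bra{\Phi}=\rho$ returns $\Tr(\Gamma_{\theta}\otimes\id_{\H_{a}})(\rho)M(x)$. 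Once the preservation of ancilla support and this partial-trace identity are confirmed the probabilities coincide, while the dimension bound $\dim\H_{b}\le\dim\H_{in}$ is automatic from the Schmidt-rank estimate.
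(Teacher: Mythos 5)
Your proposal is correct and takes essentially the same route as the paper: the paper isolates your purification-plus-Schmidt-rank step as Lemma \ref{lem:to_pure} and then pulls the POVM back via $N(x)=\left(\id_{\H_{out}}\otimes\Lambda\right)^{*}(M(x))$ with $\Lambda(B)={\rm Tr}_{\H_{a}'}EBE^{*}$, whose adjoint is exactly your compression $(I_{\H_{out}}\otimes P_{b})(M(x)\otimes I_{\H_{c}})(I_{\H_{out}}\otimes P_{b})$ once $\H_{b}$ is identified with the span of the Schmidt vectors via the isometry $E$. The only difference is presentational — you use the projection directly where the paper routes the same construction through the channel $\Lambda$ and its Heisenberg-picture adjoint — and your support-preservation and partial-trace verifications match the paper's computation.
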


\begin{proof}
Because of a generalized purification given in Lemma \ref{lem:to_pure},
there exists a Hilbert space $\H_{b}$ and a pure state $\ket{\psi}\bra{\psi}\in\S(\H_{in}\otimes\H_{b})$
on $\H_{in}\otimes\H_{b}$ and a quantum channel $\Lambda:\B(\H_{b})\to\B(\H_{a})$
such that $\id_{\H_{in}}\otimes\Lambda(\ket{\psi}\bra{\psi})=\rho$
and $\dim\H_{b}\leq\dim\H_{in}$. It can be proved that a POVM $N\in\M(\H_{out}\otimes\H_{b},\X)$
defined by 
\[
N(x)=\left(\id_{\H_{out}}\otimes\Lambda\right)^{*}(M(x))
\]
satisfies
\begin{align*}
p_{\theta}^{(\psi,N)}(x) & =\Tr\left(\Gamma_{\theta}\otimes\id_{\H_{b}}\right)\left(\ket{\psi}\bra{\psi}\right)N(x)\\
 & =\Tr\left(\Gamma_{\theta}\otimes\id_{\H_{b}}\right)\left(\ket{\psi}\bra{\psi}\right)\left(\id_{\H_{out}}\otimes\Lambda\right)^{*}(M(x))\\
 & =\Tr\left[\left(\id_{\H_{out}}\otimes\Lambda\right)\circ\left(\Gamma_{\theta}\otimes\id_{\H_{b}}\right)\left(\ket{\psi}\bra{\psi}\right)\right]M(x)\\
 & =\Tr\left[\left(\Gamma_{\theta}\otimes\id_{\H_{a}}\right)\circ\left(\id_{\H_{in}}\otimes\Lambda\right)\left(\ket{\psi}\bra{\psi}\right)\right]M(x)\\
 & =\Tr\left[\left(\Gamma_{\theta}\otimes\id_{\H_{a}}\right)(\rho)\right]M(x)=p_{\theta}^{(\rho,M)}(x).
\end{align*}
\end{proof}
Next, we show that the set of classical Fisher information matrices
\[
\F_{\theta_{0}}=\left\{ \left.J_{\theta_{0}}^{(C,\psi,M)}\right|\ket{\psi}\bra{\psi}\in\S(\H_{in}\otimes\H_{a}),\,M\text{ is a POVM}\right\} 
\]
has a convex structure for a fixed point $\theta_{0}\in\Theta$ and
an ancilla Hilbert space $\H_{a}$ with $\dim\H_{a}=\dim\H_{in}$.
Let $\rho_{1},\rho_{2}\in\S(\H_{in}\otimes\H_{a})$ be quantum states
on $\H_{in}\otimes\H_{a}$, and let $M_{1}\in\M(\H_{in}\otimes\H_{a},\X)$
and $M_{2}\in\M(\H_{out}\otimes\H_{a},\Y)$ be POVMs on $\H_{out}\otimes\H_{a}$
taking values in $\X$ and $\Y$. Let us consider a randomized channel
measurement such that the channel measurement $\left(\rho_{1},M_{1}\right)$
is applied with probability $q$ and the channel measurement $\left(\rho_{2},M_{2}\right)$
is applied with probability $1-q$. We denote such a randomized channel
measurement by 
\[
q\left(\rho_{1},M_{1}\right)\oplus(1-q)\left(\rho_{2},M_{2}\right).
\]
Applying this randomized channel measurement yields a family of classical
probability distributions
\[
\left\{ \left.p_{\theta}^{\left(q\left(\rho_{1},M_{1}\right)\oplus(1-q)\left(\rho_{2},M_{2}\right)\right)}(z)\right|\,\theta\in\Theta\subset\R^{d},z\in\X\cup\Y\right\} ,
\]
where
\begin{equation}
p_{\theta}^{\left(q\left(\rho_{1},M_{1}\right)\oplus(1-q)\left(\rho_{2},M_{2}\right)\right)}(z)=\begin{cases}
qp_{\theta}^{\left(\rho_{1},M_{1}\right)} & \text{if \ensuremath{z\in\X},}\\
(1-q)p_{\theta}^{\left(\rho_{2},M_{2}\right)} & \text{if \ensuremath{z\in\Y}}.
\end{cases}\label{eq:mix_prob}
\end{equation}

\begin{thm}
\label{thm:convex_fisher}For a randomized channel measurement $q\left(\rho_{1},M_{1}\right)\oplus(1-q)\left(\rho_{2},M_{2}\right)$,
there exists a channel measurement $(\psi,N)$ with a pure state $\ket{\psi}\bra{\psi}\in\S(\H_{in}\otimes\H_{a})$
and a POVM $N\in\M(\H_{out}\otimes\H_{a},\X\cup\Y)$ on $\H_{out}\otimes\H_{a}$
such that
\begin{equation}
\Tr\left(\Gamma_{\theta}\otimes\id_{\H_{a}}\right)(\ket{\psi}\bra{\psi})N(z)=p_{\theta}^{\left(q\left(\rho_{1},M_{1}\right)\oplus(1-q)\left(\rho_{2},M_{2}\right)\right)}(z),\label{eq:mix2single}
\end{equation}
for any $\theta\in\Theta$ and any $z\in\X\cup\Y$. Further, the classical
Fisher information matrix $J_{\theta}^{\left(C,q\left(\rho_{1},M_{1}\right)\oplus(1-q)\left(\rho_{2},M_{2}\right)\right)}$
of $\left\{ p_{\theta}^{\left(q\left(\rho_{1},M_{1}\right)\oplus(1-q)\left(\rho_{2},M_{2}\right)\right)}(z)\right\} _{\theta}$
satisfies
\begin{align}
J_{\theta}^{\left(C,\psi,N\right)} & =J_{\theta}^{\left(C,q\left(\rho_{1},M_{1}\right)\oplus(1-q)\left(\rho_{2},M_{2}\right)\right)}\label{eq:mixFisher1}\\
 & =qJ_{\theta}^{\left(C,\rho_{1},M_{1}\right)}+(1-q)J_{\theta}^{\left(C,\rho_{2},M_{2}\right)}\label{eq:mixFisher2}
\end{align}
for any $\theta\in\Theta$. In particular, $\F_{\theta}$ is convex
for any $\theta\in\Theta$. 
\end{thm}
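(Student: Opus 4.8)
The plan is to realize the randomized channel measurement as a single pure-state channel measurement in two moves: first append a two-dimensional flag register to the ancilla that records which strategy is selected, then compress the enlarged ancilla back to dimension $\dim\H_{in}$ via Theorem~\ref{thm:ch_eqiv}. Concretely, I introduce a flag space $\C^{2}$ with orthonormal basis $\ket1,\ket2$, take $\X$ and $\Y$ to be disjoint, and form the flagged input state
\[
\sigma=q\,\rho_{1}\otimes\ket1\bra1+(1-q)\,\rho_{2}\otimes\ket2\bra2\in\S(\H_{in}\otimes(\H_a\otimes\C^{2}))
\]
together with the flagged POVM $N_{0}\in\M(\H_{out}\otimes(\H_a\otimes\C^{2}),\X\cup\Y)$ defined by $N_{0}(z)=M_{1}(z)\otimes\ket1\bra1$ for $z\in\X$ and $N_{0}(z)=M_{2}(z)\otimes\ket2\bra2$ for $z\in\Y$. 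Since $\sum_{z\in\X}M_{1}(z)=\sum_{z\in\Y}M_{2}(z)=I$, one checks that $\sum_{z}N_{0}(z)=I\otimes(\ket1\bra1+\ket2\bra2)=I$, so $N_{0}$ is a genuine POVM.

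Because $\Gamma_{\theta}$ acts only on $\H_{in}$ and leaves the flag untouched, $(\Gamma_{\theta}\otimes\id)(\sigma)$ is block diagonal in the flag basis, so pairing it with the flag-projected $N_{0}(z)$ selects the matching block and a direct trace computation gives $\Tr(\Gamma_{\theta}\otimes\id)(\sigma)N_{0}(z)=q\,p_{\theta}^{(\rho_{1},M_{1})}(z)$ for $z\in\X$ and $=(1-q)\,p_{\theta}^{(\rho_{2},M_{2})}(z)$ for $z\in\Y$, which is exactly (\ref{eq:mix_prob}). The one genuine obstacle is dimensional: $\sigma$ lives on the ancilla $\H_a\otimes\C^{2}$ of dimension $2\dim\H_{in}$, whereas $\F_{\theta}$ is defined with an ancilla of dimension exactly $\dim\H_{in}$. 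This is resolved precisely because the bound in Theorem~\ref{thm:ch_eqiv} depends only on $\dim\H_{in}$ and not on the incoming ancilla dimension: applying it (equivalently the generalized purification of Lemma~\ref{lem:to_pure}) to the pair $(\sigma,N_{0})$ produces a pure state $\ket\psi\bra\psi\in\S(\H_{in}\otimes\H_b)$ with $\dim\H_b\leq\dim\H_{in}$ and a POVM reproducing the same probabilities for all $\theta$ and $z$. Since $\dim\H_b\leq\dim\H_{in}=\dim\H_a$, an isometric embedding $\H_b\hookrightarrow\H_a$ together with a harmless completion of the POVM to the full space yields the desired $(\psi,N)$ on $\H_{in}\otimes\H_a$ satisfying (\ref{eq:mix2single}).

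The Fisher-information identities then follow with little effort. Equation (\ref{eq:mixFisher1}) is immediate, since the classical Fisher information matrix is a functional of the probability family alone and (\ref{eq:mix2single}) shows that $(\psi,N)$ generates exactly the randomized distribution. For (\ref{eq:mixFisher2}) I would compute directly from the definition of $J_{\theta}^{(C,\cdot)}$: as $q$ is a $\theta$-independent constant, $\partial_{i}(q\,p_{\theta}^{(\rho_{1},M_{1})})=q\,\partial_{i}p_{\theta}^{(\rho_{1},M_{1})}$, and because the outcomes split over the disjoint sets $\X$ and $\Y$ the defining sum separates, each term on $\X$ carrying the factor $q^{2}/q=q$ and each term on $\Y$ the factor $(1-q)^{2}/(1-q)=1-q$; this gives $qJ_{\theta}^{(C,\rho_{1},M_{1})}+(1-q)J_{\theta}^{(C,\rho_{2},M_{2})}$. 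Convexity of $\F_{\theta}$ is then immediate: for any two of its elements, realized by pure states on $\H_{in}\otimes\H_a$ and POVMs, and any $q\in[0,1]$, the construction above realizes their convex combination as $J_{\theta}^{(C,\psi,N)}\in\F_{\theta}$.
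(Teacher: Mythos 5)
Your proposal is correct and follows essentially the same route as the paper: form the flagged mixture $q\,\rho_{1}\otimes\ket{1}\bra{1}+(1-q)\,\rho_{2}\otimes\ket{2}\bra{2}$ with the block-diagonal POVM, verify it reproduces the randomized distribution (\ref{eq:mix_prob}), and invoke Theorem~\ref{thm:ch_eqiv} to compress to a pure-state channel measurement with ancilla dimension at most $\dim\H_{in}$, after which (\ref{eq:mixFisher1}) and (\ref{eq:mixFisher2}) follow exactly as you say. Your added details on POVM normalization, the isometric embedding of $\H_{b}$ into $\H_{a}$, and the direct additivity computation for the Fisher matrix are points the paper glosses over, and they are all handled correctly.
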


\begin{proof}
Let $\left\{ \ket{e_{i}}\right\} _{i=1}^{2}$ be an orthonormal basis
of $\H_{c}=\C^{2}$. Let
\[
\tilde{\rho}:=q\rho_{1}\otimes\ket{e_{1}}\bra{e_{1}}+(1-q)\rho_{2}\otimes\ket{e_{2}}\bra{e_{2}}
\]
and
\[
\tilde{M}(z):=\begin{cases}
M_{1}(z)\otimes\ket{e_{1}}\bra{e_{1}} & \text{if \ensuremath{z\in\X},}\\
M_{2}(z)\otimes\ket{e_{2}}\bra{e_{2}} & \text{if \ensuremath{z\in\Y}},
\end{cases}
\]
be a quantum state and a POVM taking values in $\X\cup\Y$. We can
easily check the classical probability distributions satisfy
\[
p_{\theta}^{\left(q\left(\rho_{1},M_{1}\right)\oplus(1-q)\left(\rho_{2},M_{2}\right)\right)}(z)=p_{\theta}^{\left(\tilde{\rho},\tilde{M}\right)}(z)
\]
for any $z\in\X\cup\Y$ and $\theta\in\Theta$. Due to Theorem \ref{thm:ch_eqiv},
there exists a channel measurement $(\psi,N)$ with a pure state $\ket{\psi}\bra{\psi}\in\S(\H_{in}\otimes\H_{a})$
and a POVM $N\in\M(\H_{out}\otimes\H_{a},\X\cup\Y)$ such that
\begin{align*}
\Tr\left(\Gamma_{\theta}\otimes\id_{\H_{a}}\right)(\ket{\psi}\bra{\psi})N(z) & =\Tr\left(\Gamma_{\theta}\otimes\id_{\H_{a}}\otimes\id_{\H_{c}}\right)(\tilde{\rho})\tilde{M}(z)\\
 & =p_{\theta}^{\left(\tilde{\rho},\tilde{M}\right)}(z)
\end{align*}
for any $\theta\in\Theta$ and any $z\in\X\cup\Y$. This proves (\ref{eq:mix2single})
and (\ref{eq:mixFisher1}). The equation (\ref{eq:mixFisher2}) is
proved from the direct calculation of the classical Fisher information
matrix of (\ref{eq:mix_prob}). 
\end{proof}

\section{Estimation for pure state model\label{sec:pure_est}}

In this section, we review estimation theory for pure state motel.
Let $\left\{ \rho_{\theta}\mid\,\theta\in\Theta\subset\R^{d}\right\} $
be a quantum statistical model comprising pure states on a finite
dimensional Hilbert space $\H$. The observables 
\begin{equation}
L_{\theta_{0},i}:=2\partial_{i}\rho_{\theta_{0}}\label{eq:sld_pure}
\end{equation}
for $1\leq i\leq d$ satisfy
\begin{equation}
\partial_{i}\rho_{\theta_{0}}=\frac{1}{2}\left(\rho_{\theta_{0}}L_{\theta_{0},i}+L_{\theta_{0},i}\rho_{\theta_{0}}\right),\label{eq:SLD_def}
\end{equation}
and such observables are called symmetric logarithmic derivatives
(SLDs) at $\theta_{0}\in\Theta$. Note that the observables that satisfies
(\ref{eq:SLD_def}) is not unique. In fact, a observable $L_{\theta_{0},i}+K_{i}$
also satisfies (\ref{eq:SLD_def}) for any observable $K_{i}$ such
that $\rho_{\theta_{0}}K_{i}=0$. Nevertheless, the SLD Fisher information
matrix $J_{\theta_{0}}^{(S)}$ at $\theta_{0}\in\Theta$ defined by
\begin{equation}
J_{\theta_{0},ij}^{(S)}=\re\Tr\rho_{\theta_{0}}L_{\theta_{0},i}L_{\theta_{0},j}=4\re\Tr\rho_{\theta_{0}}(\partial_{i}\rho_{\theta_{0}})(\partial_{j}\rho_{\theta_{0}})\label{eq:SLDFisher}
\end{equation}
is unique since $K_{i}$ is vanished in (\ref{eq:SLDFisher}). 

Let $\ket{\psi}\in\H$ be an unit vector such that $\rho_{\theta_{0}}=\ket{\psi}\bra{\psi}$,
and let 
\[
\ket{l_{i}}=L_{\theta_{0},i}\ket{\psi}
\]
be the vector representation of the SLD $L_{\theta_{0},i}$\cite{matsumoto_pure}.
By using these vectors, we can obtain
\[
\partial_{i}\rho_{\theta_{0}}=\frac{1}{2}\left(\ket{\psi}\bra{l_{i}}+\ket{l_{i}}\bra{\psi}\right)
\]
and
\[
J_{\theta_{0},ij}^{(S)}={\rm Re}\braket{l_{i}}{l_{j}}.
\]

The inverse SLD Fisher information matrix $J_{\theta_{0}}^{(S)^{-1}}$
is known as a lower bound of the covariance matrix $V_{\theta_{0}}[M,\hat{\theta}]$
of any locally unbiased estimator $\left(M,\hat{\theta}\right)$ for
a pure state model, and the following theorem gives a necessary and
sufficient condition to achieve the SLD lower bound\cite{matsumoto_pure}. 
\begin{thm}
\label{thm:sld_ineq_pure}For any POVM $M$, the classical Fisher
information matrix $J_{\theta_{0}}^{(C,M)}$ with respect to $M$
satisfies
\begin{equation}
J_{\theta_{0}}^{(C,M)}\leq J_{\theta_{0}}^{(S)}.\label{eq:sld_ineq_pure}
\end{equation}
There exists a POVM $M$ to achieve the equality if and only if
\begin{equation}
Z_{\theta_{0}}:=\left[\Tr\rho_{\theta_{0}}L_{\theta_{0},j}L_{\theta_{0},i}\right]_{1\leq i,j\leq d}=\left[\braket{l_{i}}{l_{j}}\right]_{1\leq i,j\leq d}\label{eq:sld_ineq_pure_cond}
\end{equation}
is a real matrix.
\end{thm}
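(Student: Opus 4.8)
The plan is to reduce the matrix inequality to a scalar Cauchy--Schwarz estimate and then read off the equality condition from the cases of equality in that estimate. First I would fix a real vector $v=(v^{1},\dots,v^{d})$ and set $\ket{l}:=\sum_{i}v^{i}\ket{l_{i}}$, so that $v^{\top}J_{\theta_{0}}^{(S)}v=\re\braket{l}{l}=\|l\|^{2}$ (using $\braket{\psi}{l_{i}}=\Tr\rho_{\theta_{0}}L_{\theta_{0},i}=0$). On the classical side, from $\partial_{i}\rho_{\theta_{0}}=\frac{1}{2}(\ket{\psi}\bra{l_{i}}+\ket{l_{i}}\bra{\psi})$ and the Hermiticity of $M_{x}$ one gets $\Tr\partial_{i}\rho_{\theta_{0}}M_{x}=\re\bra{\psi}M_{x}\ket{l_{i}}$ and $\Tr\rho_{\theta_{0}}M_{x}=\bra{\psi}M_{x}\ket{\psi}$, hence $v^{\top}J_{\theta_{0}}^{(C,M)}v=\sum_{x}(\re\bra{\psi}M_{x}\ket{l})^{2}/\bra{\psi}M_{x}\ket{\psi}$, summed over outcomes with $p_{x}:=\bra{\psi}M_{x}\ket{\psi}>0$. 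Writing $M_{x}=\sum_{k}\ket{m_{x,k}}\bra{m_{x,k}}$, I would bound each summand by $(\re z_{x})^{2}\le|z_{x}|^{2}$ with $z_{x}=\bra{\psi}M_{x}\ket{l}=\sum_{k}\braket{\psi}{m_{x,k}}\braket{m_{x,k}}{l}$, then apply Cauchy--Schwarz over $k$, obtaining $v^{\top}J_{\theta_{0}}^{(C,M)}v\le\sum_{x,k}|\braket{m_{x,k}}{l}|^{2}=\bra{l}\bigl(\sum_{x}M_{x}\bigr)\ket{l}=\|l\|^{2}$ by completeness $\sum_{x}M_{x}=I$. As $v$ is arbitrary, this gives $J_{\theta_{0}}^{(C,M)}\le J_{\theta_{0}}^{(S)}$.

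For the ``only if'' direction I would track the equality cases. Matrix equality means equality in the scalar estimate for every $\ket{l}$ in the real span of the $\ket{l_{i}}$, which forces, for each $x$ with $p_{x}>0$, both (i) $\bra{\psi}M_{x}\ket{l}\in\R$ and (ii) the Cauchy--Schwarz proportionality $M_{x}\ket{l}=\lambda_{x}M_{x}\ket{\psi}$, while every outcome with $p_{x}=0$ must satisfy $M_{x}\ket{l_{i}}=0$. Taking $\ket{l}=\ket{l_{i}}$ defines $\lambda_{x,i}$, and (i) gives $\lambda_{x,i}p_{x}=\bra{\psi}M_{x}\ket{l_{i}}\in\R$, so $\lambda_{x,i}\in\R$. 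Then $\bra{l_{i}}M_{x}\ket{l_{j}}=\lambda_{x,i}\lambda_{x,j}p_{x}$ is real, and summing over $x$ via $\sum_{x}M_{x}=I$ yields $\braket{l_{i}}{l_{j}}=\sum_{x}\lambda_{x,i}\lambda_{x,j}p_{x}\in\R$, i.e.\ $Z_{\theta_{0}}$ is real.

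For the converse I would construct the optimal measurement from the reality of $Z_{\theta_{0}}$. Since the Gram matrix of $\{\ket{\psi},\ket{l_{1}},\dots,\ket{l_{d}}\}$ is then real symmetric, there is a conjugation (antiunitary involution) $C$ on $\H_{0}:=\mathrm{span}\{\ket{\psi},\ket{l_{i}}\}$ fixing $\ket{\psi}$ and every $\ket{l_{i}}$; I would verify this by letting $C$ fix a maximal independent subset and checking, through real invertibility of the Gram matrix, that the remaining vectors are real combinations and so are fixed too. I would then choose a $C$-real orthonormal basis $\{\ket{m_{x}}\}$ of $\H_{0}$ with $\braket{\psi}{m_{x}}\ne0$ for all $x$ (a generic real rotation achieves this, as the bad configurations form a proper subvariety), and take the projective measurement $\{\ket{m_{x}}\bra{m_{x}}\}$ completed by an orthonormal basis of $\H_{0}^{\perp}$. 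Since all $\braket{\psi}{m_{x}}$ and $\braket{m_{x}}{l_{i}}$ are real and no informative outcome is null, the computation collapses to $J_{\theta_{0},ij}^{(C,M)}=\sum_{x}\braket{m_{x}}{l_{i}}\braket{m_{x}}{l_{j}}=\bra{l_{i}}P_{\H_{0}}\ket{l_{j}}=\braket{l_{i}}{l_{j}}=J_{\theta_{0},ij}^{(S)}$.

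The step I expect to be the main obstacle is the bookkeeping of zero-probability outcomes. In the inequality the two estimates must be arranged so that outcomes with $p_{x}=0$ feed the completeness sum but not the classical Fisher information; in the achievability construction this same point is decisive, since an orthonormal basis aligned with $\ket{\psi}$ would push all of the $\ket{l_{i}}$ overlap onto null outcomes and yield a strict inequality. Producing genuine equality therefore hinges precisely on selecting the $C$-real basis so that every measurement vector overlaps $\ket{\psi}$.
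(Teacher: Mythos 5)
Your proof is correct, but it follows a different route from the paper. The paper does not argue directly: it derives Theorem \ref{thm:sld_ineq_pure} as a corollary of the Holevo-bound machinery in the appendices --- Theorem \ref{thm:pure_holevo} shows that for pure-state models the Holevo bound $C_{\theta_{0},W}^{(H)}$ is attained by a locally unbiased estimator (constructed from commuting observables $\tilde{T}_{i}$ on an extended space $\H\otimes\hat{\H}$), and Lemma \ref{lem:sld_holevo} shows $C_{\theta_{0},W}^{(H)}=C_{\theta_{0},W}^{(S)}$ for every weight $W$ if and only if $Z_{\theta_{0}}$ is real; the matrix inequality (\ref{eq:sld_ineq_pure}) is then recovered from $\Tr WJ_{\theta_{0}}^{(C,M)^{-1}}\geq\Tr WJ_{\theta_{0}}^{(S)^{-1}}$ holding for all $W>0$. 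You instead give a self-contained Braunstein--Caves-type argument: the scalar Cauchy--Schwarz estimate for each direction $v$, a per-outcome tightness analysis ($\im\bra{\psi}M_{x}\ket{l}=0$, the proportionality $M_{x}\ket{l}=\lambda_{x}M_{x}\ket{\psi}$, and annihilation of the $\ket{l_{i}}$ on null outcomes) for the ``only if'' direction, and an antiunitary conjugation $C$ fixing $\mathrm{span}\{\ket{\psi},\ket{l_{i}}\}$ to build the optimal projective measurement --- essentially Matsumoto's original approach, which the paper cites but does not reproduce. Each route has its merits: yours is elementary, avoids the Holevo bound entirely, and makes the equality conditions and the zero-probability bookkeeping fully explicit; the paper's route is less direct but yields strictly more, namely achievability of $C_{\theta_{0},W}^{(H)}$ even when $Z_{\theta_{0}}$ is not real, which the paper needs as background theory. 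Note also that your key construction device --- a real orthonormal basis of the relevant subspace with all overlaps $\braket{\psi}{m_{x}}$ nonvanishing --- is the same trick the paper's Appendix \ref{sec:pure_holevo} uses for the basis $\{\ket{\tilde{e}_{k}}\}$, so the two achievability constructions are close cousins even though your inequality and only-if arguments genuinely differ from the Lemma \ref{lem:sld_holevo} route. All steps you give check out, including the reality of $\lambda_{x,i}=\bra{\psi}M_{x}\ket{l_{i}}/p_{x}$, the resulting reality of $\braket{l_{i}}{l_{j}}=\sum_{x}\lambda_{x,i}\lambda_{x,j}p_{x}$, and the real-Gram-matrix argument showing the dependent vectors are real combinations of an independent subset, so the conjugation fixes all of them.
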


More generally, for the quantum statistical model comprising pure
states, a more informative and sharp lower bound of weighted traces
of covariances is known as the Holevo bound (\ref{eq:holevo_bound})
(see Theorem \ref{thm:pure_holevo} in Appendix \ref{sec:pure_holevo}).
It is known that when (\ref{eq:sld_ineq_pure_cond}) is a real matrix,
$\Tr WJ_{\theta_{0}}^{(S)^{-1}}$ and the Holevo bound $c_{\theta_{0},W}^{(H)}$
coincide for any weight matrix $W$ (see Lemma \ref{lem:sld_holevo}
in Appendix \ref{sec:SLD_Holevo}). Theorem \ref{thm:sld_ineq_pure}
can be proved as a corollary of Lemma \ref{lem:sld_holevo} and Theorem
\ref{thm:pure_holevo}. 

\section{Estimation of $SU(2)$ channel by using the convex structure\label{sec:est_su2}}

\subsection{$SU(2)$ channel model\label{subsec:su2_model}}

Let 
\begin{equation}
\left\{ \Gamma_{\theta}:\rho\mapsto U_{\theta}\rho U_{\theta}^{*}\mid\theta\in\Theta\subset\R^{d}\right\} 
\end{equation}
be a parametric family of $SU(2)$ channels on a Hilbert space $\H=\C^{2}$
where $U_{\theta}$ is in $SU(2)$ with $1\leq d\leq3$. In this section,
we consider the estimation of the $n$-i.i.d. extended model $\left\{ \Gamma_{\theta}^{\otimes n}\mid\theta\in\Theta\subset\R^{3}\right\} $.
Due to Theorem \ref{thm:ch_eqiv}, the input states for the the optimal
estimator can be restricted to pure states on a Hilbert space $\H^{\otimes n}\otimes\H_{a}^{(n)}$
with $\dim\H_{a}^{(n)}=2^{n}$. Due to Theorem \ref{thm:convex_fisher},
the set
\begin{equation}
\F_{\theta_{0}}^{(n)}=\left\{ \left.J_{\theta_{0}}^{(C,n,\psi,M)}\right|\,\ket{\psi}\bra{\psi}\in\S(\H^{\otimes n}\otimes\H_{a}^{(n)}),\,M\in\M(\H^{\otimes n}\otimes\H_{a}^{(n)},\X)\right\} 
\end{equation}
of classical Fisher information matrices is convex for any $\theta_{0}\in\Theta$
and any $n\in\N$, where $J_{\theta_{0}}^{(C,n,\psi,M)}$ is the classical
Fisher information matrices with restrict to an input pure state $\ket{\psi}\bra{\psi}\in\S(\H^{\otimes n}\otimes\H_{a}^{(n)})$
and a POVM $M\in\M(\H^{\otimes n}\otimes\H_{a}^{(n)},\X)$ on $\H^{\otimes n}\otimes\H_{a}^{(n)}$
with an ancilla Hilbert space $\H_{a}^{(n)}=\C^{2^{n}}$.  When a
input state $\rho\in\S(\H^{\otimes n}\otimes\H_{a}^{(n)})$ is a pure
state, the output state $\left(\Gamma_{\theta}^{\otimes n}\otimes\id_{\H_{a}^{(n)}}\right)(\rho)=(U_{\theta}^{\otimes n}\otimes I_{\H_{a}^{(n)}})\rho(U_{\theta}^{\otimes n}\otimes I_{\H_{a}^{(n)}})^{*}$
is also a pure state. Thus the estimation theory for pure states given
in Section \ref{sec:pure_est} plays an important role in the estimation
of unitary channels. 

Let us define a $d\times d$ positive matrix
\begin{equation}
J_{\theta_{0}}^{(U)}:=2\left[\Tr\left(\partial_{i}U_{\theta_{0}}\right)^{*}\left(\partial_{j}U_{\theta_{0}}\right)\right]_{1\leq i,j\leq d}
\end{equation}
for $\theta_{0}\in\Theta$. Note that $\left(\partial_{i}U_{\theta_{0}}\right)U_{\theta_{0}}^{*}$
is skew-Hermitian and $J_{\theta_{0}}^{(U)}$ is a real matrix because
$\partial_{i}\left(U_{\theta_{0}}U_{\theta_{0}}^{*}\right)=\left(\partial_{i}U_{\theta_{0}}\right)U_{\theta_{0}}^{*}+U_{\theta_{0}}\left(\partial_{i}U_{\theta_{0}}\right)^{*}=0$.
Let
\begin{equation}
X_{i}=\frac{2}{\sqrt{-1}}\sum_{j=1}^{d}K_{\theta_{0},ij}\left(\partial_{j}U_{\theta_{0}}\right)U_{\theta_{0}}^{*}\qquad(1\leq i\leq d)\label{eq:X_def}
\end{equation}
be observables, where
\begin{equation}
K_{\theta_{0}}=O\sqrt{J_{\theta_{0}}^{(U)^{-1}}}\label{eq:K_def}
\end{equation}
with any $d\times d$ real orthogonal matrix $O$. The observables
$X_{1},X_{2},X_{3}$ satisfy 
\begin{equation}
\left\{ X_{i},X_{j}\right\} =\frac{1}{2}\left(X_{i}X_{j}+X_{j}X_{i}\right)=\delta_{ij}I\label{eq:pauli_relation}
\end{equation}
like Pauli matrices, where $X_{d+1},\dots,X_{3}$ are appropriate
operators when $d<3$. It can be seen from (\ref{eq:X_def}) that
\begin{equation}
\partial_{i}U_{\theta_{0}}=\frac{\sqrt{-1}}{2}\left(\sum_{j=1}^{d}\left(K_{\theta_{0}}^{-1}\right)_{ij}X_{j}\right)U_{\theta_{0}},
\end{equation}
since
\begin{align}
\sum_{i=1}^{d}\left(K_{\theta_{0}}^{-1}\right)_{si}X_{i} & =\frac{2}{\sqrt{-1}}\sum_{i=1}^{d}\sum_{j=1}^{d}\left(K_{\theta_{0}}^{-1}\right)_{si}K_{\theta_{0},ij}\left(\partial_{j}U_{\theta_{0}}\right)U_{\theta_{0}}^{*}\nonumber \\
 & =\frac{2}{\sqrt{-1}}\sum_{j=1}^{d}\delta_{sj}\left(\partial_{j}U_{\theta_{0}}\right)U_{\theta_{0}}^{*}=\frac{2}{\sqrt{-1}}\left(\partial_{s}U_{\theta_{0}}\right)U_{\theta_{0}}^{*}.\label{eq:derivate_dU}
\end{align}
Similarly, we have
\begin{equation}
\partial_{i}\left(U_{\theta_{0}}^{\otimes n}\right)=\frac{\sqrt{-1}}{2}\left(\sum_{j=1}^{d}\left(K_{\theta_{0}}^{-1}\right)_{ij}X_{j}^{(n)}\right)U_{\theta_{0}}^{\otimes n},
\end{equation}
for $n\in\N$, where
\begin{equation}
X_{j}^{(n)}:=\sum_{k=1}^{n}I^{\otimes(k-1)}\otimes X_{j}\otimes I^{\otimes(n-k)}\quad(j=1,2,3).
\end{equation}
By fixing an input vector $\ket{\psi}\in\H^{\otimes n}\otimes\H_{a}^{(n)}$,
we have a family of pure states

\[
\left\{ \left.\left(\Gamma_{\theta}^{\otimes n}\otimes id_{\H_{a}^{(n)}}\right)\left(\ket{\psi}\bra{\psi}\right)\in\S(\H^{\otimes n}\otimes\H_{a}^{(n)})\right|\,\theta\in\Theta\subset\R^{d}\right\} ,
\]
and we have an $i$th SLD
\begin{align}
L_{\theta_{0},i}^{(S,n,\psi)} & :=2\partial_{i}\left(\Gamma_{\theta_{0}}^{\otimes n}\otimes id_{\H_{a}^{(n)}}\right)\left(\ket{\psi}\bra{\psi}\right)\\
 & =2\partial_{i}\left((U_{\theta_{0}})^{\otimes n}\otimes I_{\H_{a}^{(n)}}\ket{\psi}\bra{\psi}(U_{\theta_{0}}^{*})^{\otimes n}\otimes I_{\H_{a}^{(n)}}\right)\\
 & =\sqrt{-1}\sum_{j=1}^{d}\left(K_{\theta_{0}}^{-1}\right)_{ij}\left\{ \left(X_{j}^{(n)}\otimes I_{\H_{a}^{(n)}}\right)\ket{\psi_{\theta_{0}}}\bra{\psi_{\theta_{0}}}-\ket{\psi_{\theta_{0}}}\bra{\psi_{\theta_{0}}}\left(X_{j}^{(n)}\otimes I_{\H_{a}^{(n)}}\right)\right\} 
\end{align}
at $\theta_{0}\in\Theta$ for $1\leq i\leq d$ by using (\ref{eq:sld_pure}),
where $\ket{\psi_{\theta_{0}}}=\left((U_{\theta_{0}})^{\otimes n}\otimes I_{\H_{a}^{(n)}}\right)\ket{\psi}$.
The vector representation of the $i$th SLD is
\begin{align*}
\ket{l_{i}^{(n,\psi)}} & :=L_{\theta_{0},i}^{(S,n,\psi)}\ket{\psi_{\theta_{0}}}\\
 & =\sqrt{-1}\sum_{j=1}^{3}\left(K_{\theta_{0}}^{-1}\right)_{ij}\left(X_{j}^{(n)}\otimes I_{\H_{a}^{(n)}}-\bra{\psi_{\theta_{0}}}\left(X_{j}^{(n)}\otimes I_{\H_{a}^{(n)}}\right)\ket{\psi_{\theta_{0}}}\right)\ket{\psi_{\theta_{0}}}.
\end{align*}
Let $Z_{\theta_{0}}^{(n,\psi)}$ be a $d\times d$ complex matrix
defined by
\begin{align}
Z_{\theta_{0},ij}^{(n,\psi)} & :=\braket{l_{i}^{(n,\psi)}}{l_{j}^{(n,\psi)}}\\
 & =\sum_{1\leq s,t\leq3}\left(K_{\theta_{0}}^{-1}\right)_{jt}\left(K_{\theta_{0}}^{-1}\right)_{is}\tilde{Z}_{st}^{(n,\psi)}\label{eq:z_n}
\end{align}
for $1\leq i,j\leq3$, where
\[
\tilde{Z}_{st}^{(n,\psi)}=\bra{\psi_{\theta_{0}}}X_{t}^{(n)}X_{s}^{(n)}\otimes I_{\H_{a}^{(n)}}\ket{\psi_{\theta_{0}}}-\bra{\psi_{\theta_{0}}}X_{t}^{(n)}\otimes I_{\H_{a}^{(n)}}\ket{\psi_{\theta_{0}}}\bra{\psi_{\theta_{0}}}X_{s}^{(n)}\otimes I_{\H_{a}^{(n)}}\ket{\psi_{\theta_{0}}}.
\]
Because of Theorem \ref{thm:sld_ineq_pure}, a matrix inequality
\begin{equation}
J_{\theta_{0}}^{(C,n,\psi,M)}\leq J_{\theta_{0}}^{(S,n,\psi)}\label{eq:sld_ineq_su2}
\end{equation}
holds, where $J_{\theta_{0}}^{(S,n,\psi)}={\rm Re}Z_{\theta_{0}}^{(n,\psi)}$
is the SLD Fisher information matrix with respect to the input $\ket{\psi}$.
When $Z_{\theta_{0}}^{(n,\psi)}$ is a real matrix, there exists a
POVM $M(\psi)$ that achieves the equality. Let 
\begin{equation}
\A_{\theta_{0}}^{(n)}:=\left\{ \ket{\psi}\in\H^{\otimes n}\otimes\H_{a}^{(n)}\mid Z_{\theta_{0}}^{(n,\psi)}\text{ is a real matrix}\right\} \subset\H^{\otimes n}\otimes\H_{a}^{(n)}\label{eq:Aset}
\end{equation}
be a set of input vectors that can achieve the equality of (\ref{eq:sld_ineq_su2}). 

Let $\ket{e_{i}^{+}},\ket{e_{i}^{-}}$ be normalized eigenvectors
of $X_{i}$ corresponding to the eigenvalues $+1,-1$ for $1\leq i\leq3$.
By using (\ref{eq:pauli_relation}), it can be seen that
\begin{align}
X_{j}\ket{e_{i}^{+}} & =c_{ji}\ket{e_{i}^{-}}\label{eq:eigen1}\\
X_{j}\ket{e_{i}^{-}} & =\bar{c}_{ji}\ket{e_{i}^{+}}\label{eq:eigen2}\\
X_{k}\ket{e_{i}^{+}} & =c_{ki}\ket{e_{i}^{-}}\label{eq:eigen3}\\
X_{k}\ket{e_{i}^{-}} & =\bar{c}_{ki}\ket{e_{i}^{+}}\label{eq:eigen4}
\end{align}
 for $(i,j,k)\in\left\{ (1,2,3),(2,3,1),(3,1,2)\right\} $ with $c_{ji},c_{ki}\in\C$
such that $\left|c_{ji}\right|=\left|c_{ki}\right|=1$ and ${\rm Re}\,\bar{c}_{ji}c_{ki}=0$.
Let
\[
\ket{e_{i}^{n,t}}:=\frac{1}{\sqrt{n!}}\sum_{\sigma\in S_{n}}\bigotimes_{p=1}^{n}\ket{e_{i}^{\sigma(p)}},\qquad e_{i}^{p}=\begin{cases}
e_{i}^{+} & (1\leq p\leq t)\\
e_{i}^{-} & (t+1\leq p\leq n)
\end{cases}
\]
with the permutation group $S_{n}$ of $\{1,\dots,n\}$ and $0\leq k\leq n$.
By using (\ref{eq:eigen1}), (\ref{eq:eigen2}), (\ref{eq:eigen3}),
(\ref{eq:eigen4}) multiple times with some calculations, we can obtain
\begin{equation}
X_{j}^{(n)}\ket{e_{i}^{n,t}}=c_{ji}\sqrt{t(n-t+1)}\ket{e_{i}^{n,(t-1)}}+\bar{c}_{ji}\sqrt{(t+1)(n-t)}\ket{e_{i}^{n,(t+1)}}\label{eq:Xn_1}
\end{equation}
\begin{equation}
X_{k}^{(n)}\ket{e_{i}^{n,t}}=c_{ki}\sqrt{t(n-t+1)}\ket{e_{i}^{n,(t-1)}}+\bar{c}_{ki}\sqrt{(t+1)(n-t)}\ket{e_{i}^{n,(t+1)}},\label{eq:Xn_2}
\end{equation}
and
\begin{equation}
X_{i}^{(n)}\ket{e_{i}^{n,t}}=(2t-n)\ket{e_{i}^{n,t}}.\label{eq:Xn_3}
\end{equation}
By using this, we can obtain
\[
\left(X_{1}^{(n)^{2}}+X_{2}^{(n)^{2}}+X_{3}^{(n)^{2}}\right)\ket{e_{i}^{n,t}}=\left(n^{2}+2n\right)\ket{e_{i}^{n,t}}.
\]
for any $t=0,1,\dots,n.$ $C^{(n)}:=X_{1}^{(n)^{2}}+X_{2}^{(n)^{2}}+X_{3}^{(n)^{2}}$
is known as a Casimir operator. 

\subsection{Matrix inequality for Fisher information matrices\label{subsec:matrix_inequality}}

The real matrix $J_{\theta_{0}}^{(U)}$ can be regarded as a Fisher
information matrix for $SU(2)$ channels, since we can show the following
theorem. 
\begin{thm}
\label{thm:bound_matrix}A real matrix $F\in\F_{\theta_{0}}^{(n)}$
satisfies a matrix inequality
\begin{equation}
F\leq n^{2}J_{\theta_{0}}^{(U)}.\label{eq:ineq_matrix}
\end{equation}

When $d=1$, the equality of (\ref{eq:ineq_matrix}) is achieved by
\begin{equation}
\ket{\psi}=\frac{1}{\sqrt{2}}U_{\theta_{0}}^{*}\left(\ket{e_{1}^{+}}^{\otimes n}+\ket{e_{1}^{+}}^{\otimes n}\right)\label{eq:inputMatD1}
\end{equation}
with unit eigenvectors $\ket{e_{1}^{\pm}}$ of $X_{1}$ corresponding
to eigenvalues $\pm1$ for all $n\in\N$. 

When $n=1$ and $d=1,2,3$, the equality of (\ref{eq:ineq_matrix})
is achieved by a maximally entangled input vector
\begin{equation}
\ket{\psi}=\ket{\psi^{ME}}:=\frac{1}{\sqrt{2}}\left(\ket{e_{+}}\otimes\ket{a_{+}}+\ket{e_{-}}\otimes\ket{a_{-}}\right)\label{eq:inputMatN1}
\end{equation}
with any orthonormal basis $\left\{ \ket{e_{+}},\ket{e_{-}}\right\} $
of the Hilbert space $\H=\C^{2}$, and any orthonormal basis $\left\{ \ket{a_{+}},\ket{a_{-}}\right\} $
of an ancilla Hilbert space $\H_{a}=\C^{2}$. 

When $n=2$ and $d=2$, the equality of (\ref{eq:ineq_matrix}) is
achieved by
\begin{equation}
\ket{\psi}=\frac{1}{\sqrt{2}}U_{\theta_{0}}^{*}\left(\ket{e_{3}^{+}}\otimes\ket{e_{3}^{-}}+\ket{e_{3}^{-}}\otimes\ket{e_{3}^{+}}\right)\label{eq:inputMat22}
\end{equation}
with unit eigenvectors $\ket{e_{3}^{\pm}}$ of $X_{3}$ corresponding
to eigenvalues $\pm1$. 

When $(d-1)n>2$, the matrix inequality (\ref{eq:ineq_matrix}) is
not sharp. 
\end{thm}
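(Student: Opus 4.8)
The plan is to reduce every part of the statement to the single-input SLD Fisher matrix $\re Z_{\theta_0}^{(n,\psi)}$, and then to the auxiliary matrix $\tilde{Z}^{(n,\psi)}$ on which the operators $X_j^{(n)}$ act explicitly through (\ref{eq:Xn_1})--(\ref{eq:Xn_3}). For the inequality (\ref{eq:ineq_matrix}), note that by (\ref{eq:sld_ineq_su2}) every $F\in\F_{\theta_0}^{(n)}$ satisfies $F=J_{\theta_0}^{(C,n,\psi,M)}\le J_{\theta_0}^{(S,n,\psi)}=\re Z_{\theta_0}^{(n,\psi)}$ for some input $\ket{\psi}$, so it suffices to bound the right-hand side for all $\psi$. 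From (\ref{eq:z_n}) one has $Z^{(n,\psi)}=K_{\theta_0}^{-1}\tilde{Z}^{(n,\psi)}(K_{\theta_0}^{-1})^{T}$, and (\ref{eq:K_def}) gives $K_{\theta_0}J_{\theta_0}^{(U)}K_{\theta_0}^{T}=I$; hence congruence by $K_{\theta_0}$ shows that $\re Z^{(n,\psi)}\le n^{2}J_{\theta_0}^{(U)}$ is equivalent to $\re\tilde{Z}^{(n,\psi)}\le n^{2}I$. The entries of $\re\tilde{Z}^{(n,\psi)}$ form the symmetrized covariance matrix of $X_1^{(n)},\dots,X_d^{(n)}$ in the state $\ket{\psi_{\theta_0}}$, so for a unit vector $v\in\R^{d}$ one has $v^{T}(\re\tilde{Z}^{(n,\psi)})v=\mathrm{Var}_{\psi_{\theta_0}}(Y)$ with $Y:=\big(\sum_s v_s X_s^{(n)}\big)\otimes I_{\H_a^{(n)}}$. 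Since $\sum_s v_s X_s$ squares to $I$ on one copy by (\ref{eq:pauli_relation}), the single-copy operator has spectrum $\{\pm1\}$, so $-nI\le\sum_s v_s X_s^{(n)}\le nI$ and $Y^{2}\le n^{2}I$; therefore $\mathrm{Var}(Y)\le\langle Y^{2}\rangle\le n^{2}$, which gives $\re\tilde{Z}^{(n,\psi)}\le n^{2}I$ and hence (\ref{eq:ineq_matrix}).

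For the three achievability claims I would exhibit the stated inputs and check that $\re\tilde{Z}^{(n,\psi)}=n^{2}I$ on the relevant $d\times d$ block and that $Z^{(n,\psi)}$ is real; by Theorem \ref{thm:sld_ineq_pure} the latter yields a POVM attaining $J_{\theta_0}^{(C,n,\psi,M)}=\re Z^{(n,\psi)}=K_{\theta_0}^{-1}(n^{2}I)(K_{\theta_0}^{-1})^{T}=n^{2}J_{\theta_0}^{(U)}$. For $d=1$ the input (\ref{eq:inputMatD1}) gives $\ket{\psi_{\theta_0}}=\tfrac{1}{\sqrt2}(\ket{e_1^{n,n}}+\ket{e_1^{n,0}})$, so by (\ref{eq:Xn_3}) $\langle X_1^{(n)}\rangle=0$ and $\langle X_1^{(n)2}\rangle=n^{2}$, i.e.\ $\tilde{Z}_{11}=n^{2}$. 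For $n=1$ the maximally entangled input (\ref{eq:inputMatN1}) keeps $\ket{\psi_{\theta_0}}$ maximally entangled, so $\langle X_s\rangle=\tfrac12\Tr X_s=0$ and $\tilde{Z}_{st}=\tfrac12\Tr X_t X_s=\delta_{st}$ by (\ref{eq:pauli_relation}). For $n=2,d=2$ the input (\ref{eq:inputMat22}) gives $\ket{\psi_{\theta_0}}=\ket{e_3^{2,1}}$, on which (\ref{eq:Xn_1})--(\ref{eq:Xn_2}) send $X_1^{(2)}$ and $X_2^{(2)}$ into the orthogonal pair $\ket{e_3^{2,0}},\ket{e_3^{2,2}}$, yielding $\langle X_j^{(2)}\rangle=0$, $\langle X_j^{(2)2}\rangle=4$, and an off-diagonal entry proportional to $\re(\bar{c}_{13}c_{23})=0$. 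In each case the relevant block of $\tilde{Z}^{(n,\psi)}$ is the real matrix $n^{2}I$.

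For non-sharpness when $(d-1)n>2$, equality $F=n^{2}J_{\theta_0}^{(U)}$ forces $\re\tilde{Z}^{(n,\psi)}=n^{2}I$, whose $s$-th diagonal entry is $\mathrm{Var}_{\psi_{\theta_0}}(X_s^{(n)})=n^{2}$ for every $s\le d$. As $\mathrm{Var}(X_s^{(n)})\le\langle X_s^{(n)2}\rangle\le n^{2}$, this forces $\langle X_s^{(n)2}\rangle=n^{2}$, and by (\ref{eq:Xn_3}) the eigenvalue $\pm n$ of $X_s^{(n)}$ is attained only at $t=n,0$; hence $\ket{\psi_{\theta_0}}\in\V_s\otimes\H_a^{(n)}$ with $\V_s:=\mathrm{span}\{\ket{e_s^{+}}^{\otimes n},\ket{e_s^{-}}^{\otimes n}\}$. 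Decomposing $\ket{\psi_{\theta_0}}$ along an orthonormal ancilla basis shows every component vector lies in $\bigcap_{s=1}^{d}\V_s$, so it remains to prove $\bigcap_{s=1}^{d}\V_s=\{0\}$ whenever $(d-1)n>2$. Expanding the spin-coherent states $\ket{e_s^{\pm}}^{\otimes n}$ in the $X_3^{(n)}$-eigenbasis $\{\ket{e_3^{n,t}}\}_{t=0}^{n}$, membership in $\V_1$ fixes the coefficients (up to the factor $\sqrt{\binom{n}{t}}$) according to $t\bmod 2$ and membership in $\V_2$ according to $t\bmod 4$; these are jointly satisfiable by a nonzero vector only when some residue modulo $4$ fails to be realized by $0\le t\le n$, i.e.\ only for $n\le 2$. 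This already settles $d=2$ (where $(d-1)n>2$ means $n\ge 3$), and for $d=3,n=2$ the additional requirement from $\V_3$, namely support on $t\in\{0,n\}$, removes the surviving one-dimensional overlap.

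The main obstacle is this last step: pinning the threshold to the precise condition $(d-1)n>2$ rests on the residue analysis of the coherent-state coefficients and on the factorization $\bigcap_s(\V_s\otimes\H_a^{(n)})=(\bigcap_s\V_s)\otimes\H_a^{(n)}$, which is what guarantees that no ancilla can rescue the equality. By contrast, the inequality (\ref{eq:ineq_matrix}) and the three equality cases reduce, once the congruence to $\re\tilde{Z}^{(n,\psi)}\le n^{2}I$ is set up, to the operator bound $Y^{2}\le n^{2}I$ and to short evaluations using (\ref{eq:Xn_1})--(\ref{eq:Xn_3}).
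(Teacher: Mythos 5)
Your proof of the inequality (\ref{eq:ineq_matrix}) and of the three equality cases is correct and essentially identical to the paper's: the same congruence by $K_{\theta_0}$ (using $K_{\theta_0}J_{\theta_0}^{(U)}K_{\theta_0}^{*}=I$) reducing the claim to $\re\tilde{Z}^{(n,\psi)}\leq n^{2}I$, the same variance bound via the spectrum $[-n,n]$ of $\sum_{s}v_{s}X_{s}^{(n)}$, and the same direct evaluations giving $\tilde{Z}^{(n,\psi)}=n^{2}I$, $I$, and $4I$ for the inputs (\ref{eq:inputMatD1}), (\ref{eq:inputMatN1}), (\ref{eq:inputMat22}) respectively, with realness of $Z$ invoked through Theorem \ref{thm:sld_ineq_pure}. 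Where you genuinely diverge is the non-sharpness claim for $(d-1)n>2$: the paper disposes of it in one line by deferring to Theorem \ref{thm:bound3}, whose Casimir-operator bound $\Tr\tilde{Z}^{(n,\psi)}\leq\bra{\psi_{\theta_0}}C^{(n)}\otimes I_{\H_{a}^{(n)}}\ket{\psi_{\theta_0}}\leq n^{2}+2n$ (and $n^{2}+2n-1$ for $d=2$, $n$ odd) contradicts $\Tr J_{\theta_0}^{(U)^{-1}}\left(n^{2}J_{\theta_0}^{(U)}\right)=dn^{2}$ exactly when $(d-1)n>2$. You instead give a self-contained rigidity argument: equality forces each variance to equal $n^{2}$, hence $\langle X_{s}^{(n)2}\rangle=n^{2}$, hence support in $\V_{s}\otimes\H_{a}^{(n)}$ with $\V_{s}={\rm span}\left\{ \ket{e_{s}^{+}}^{\otimes n},\ket{e_{s}^{-}}^{\otimes n}\right\}$, and a phase analysis in the $X_{3}$ eigenbasis showing $\bigcap_{s}\V_{s}=\{0\}$ precisely when $(d-1)n>2$; this is valid (your observation that the $d=2$, $n=2$ survivor is exactly the paper's optimal input $\ket{e_{3}^{2,1}}$ is a good consistency check), though the ``residue modulo 4'' phrasing is cleaner stated as: period-2 versus anti-period-2 forces the coefficient at $t$ to vanish whenever $t+2$ or $t-2$ also lies in $[0,n]$, which kills every $t$ iff $n\geq3$. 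Each route buys something: yours is independent of Theorem \ref{thm:bound3} and pinpoints which states are rigid, while the paper's trace bound is shorter and quantitatively stronger, since it shows $\F_{\theta_0}^{(n)}$ is bounded away from $n^{2}J_{\theta_0}^{(U)}$; your argument as written only excludes exact attainment, so if ``not sharp'' were read as ``not even approachable'' you would need a small compactness supplement (inputs range over a compact sphere since $\dim\H_{a}^{(n)}\leq2^{n}$ by Theorem \ref{thm:ch_eqiv}, and $\psi\mapsto\tilde{Z}^{(n,\psi)}$ is continuous), whereas the trace bound gives this for free.
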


\begin{proof}
Due to the matrix inequality (\ref{eq:sld_ineq_pure}), the classical
Fisher information matrix with respect to any channel measurement
$(\psi,M)$ satisfies
\begin{equation}
J_{\theta_{0}}^{(C,n,\psi,M)}\leq J_{\theta_{0}}^{(S,n,\psi)}=\left(K_{\theta_{0}}\right)^{-1}{\rm Re}\tilde{Z}^{(n,\psi)}\left(K_{\theta_{0}}^{*}\right)^{-1}.
\end{equation}
Let $v\in\R^{d}$ be any real unit vector in $\R^{d}$. Since the
maximum and the minimum eigenvalues of $\sum_{i=1}^{d}v_{i}X_{i}^{(n)}$
is $\pm n$, 

\begin{align}
v^{\top}\tilde{Z}^{(n,\psi)}v & =\bra{\psi_{\theta_{0}}}\left(\sum_{i=1}^{d}v_{i}X_{i}^{(n)}\right)^{2}\otimes I_{\H_{a}^{(n)}}\ket{\psi_{\theta_{0}}}-\bra{\psi_{\theta_{0}}}\left(\sum_{i=1}^{d}v_{i}X_{i}^{(n)}\right)\otimes I_{\H_{a}^{(n)}}\ket{\psi_{\theta_{0}}}^{2}\\
 & \leq\bra{\psi_{\theta_{0}}}\left(\sum_{i=1}^{d}v_{i}X_{i}^{(n)}\right)^{2}\otimes I_{\H_{a}^{(n)}}\ket{\psi_{\theta_{0}}}\leq n^{2}.
\end{align}
This implies ${\rm Re}\tilde{Z}^{(n,\psi)}\leq n^{2}I,$ and we have
\begin{equation}
J_{\theta_{0}}^{(S,n,\psi)}\leq n^{2}\left(K_{\theta_{0}}\right)^{-1}\left(K_{\theta_{0}}^{*}\right)^{-1}=n^{2}J_{\theta_{0}}^{(U)}.
\end{equation}
This proves (\ref{eq:ineq_matrix}).

When $d=1$, the input vector (\ref{eq:inputMatD1}) satisfies
\begin{align*}
X_{1}^{(n)}U_{\theta_{0}}^{\otimes n}\ket{\psi} & =\frac{1}{\sqrt{2}}X_{1}^{(n)}\left(\ket{e_{1}^{+}}^{\otimes n}+\ket{e_{1}^{-}}^{\otimes n}\right)\\
 & =\frac{n}{\sqrt{2}}\left(\ket{e_{1}^{+}}^{\otimes n}-\ket{e_{1}^{-}}^{\otimes n}\right).
\end{align*}
By using this, we have
\[
\tilde{Z}^{(n,\psi)}=\bra{\psi}U_{\theta_{0}}^{*\otimes n}X_{1}^{(n)^{2}}U_{\theta_{0}}^{\otimes}\ket{\psi}-\bra{\psi}U_{\theta_{0}}^{*\otimes n}X_{1}^{(n)}U_{\theta_{0}}^{\otimes n}\ket{\psi}^{2}=n^{2}I.
\]
Since $\tilde{Z}^{(n,\psi)}\in\R$, $\psi\in\A_{\theta_{0}}^{(n)}$
and there exists a POVM $M$ such that $J_{\theta_{0}}^{(C,n,\psi,M)}=J_{\theta_{0}}^{(S,n,\psi)}=n^{2}J_{\theta_{0}}^{(U)}$. 

When $n=1$, since the input vector (\ref{eq:inputMatN1}) satisfies
\[
\bra{\psi_{\theta_{0}}^{ME}}X_{t}^{(1)}X_{s}^{(1)}\otimes I_{\H_{a}}\ket{\psi_{\theta_{0}}^{ME}}=\frac{1}{2}\Tr X_{t}^{(1)}X_{s}^{(1)}=\frac{1}{4}\delta_{ts}\quad(1\leq s,t\leq d)
\]
and
\[
\bra{\psi_{\theta_{0}}^{ME}}X_{t}^{(1)}\otimes I_{\H_{a}}\ket{\psi_{\theta_{0}}^{ME}}=0\quad(1\leq t\leq d),
\]
it can be seen that
\begin{equation}
\tilde{Z}^{(1,\psi^{ME})}=I,\label{eq:best_1}
\end{equation}
where $\ket{\psi_{\theta_{0}}^{ME}}:=\left(U_{\theta_{0}}\otimes I\right)\ket{\psi^{ME}}$.
Thus, due to (\ref{eq:z_n}),
\begin{equation}
Z_{\theta_{0}}^{(1,\psi^{ME})}=\left(K_{\theta_{0}}\right)^{-1}\tilde{Z}^{(1,\psi^{ME})}\left(K_{\theta_{0}}^{*}\right)^{-1}=J_{\theta_{0}}^{(U)}.
\end{equation}
Since $Z_{\theta_{0}}^{(1,\psi^{ME})}=I$ is a real matrix, $\psi^{ME}\in\A_{\theta_{0}}^{(1)}$
and the equality of (\ref{eq:ineq_matrix}) is achieved by the channel
measurement $(\psi^{ME},M(\psi^{ME}))$. 

When $n=2$ and $d=2$, the input vector (\ref{eq:inputMat22}) satisfies
\begin{align*}
X_{i}^{(2)}U_{\theta_{0}}^{\otimes2}\ket{\psi} & =\frac{1}{\sqrt{2}}X_{i}^{(2)}\left(\ket{e_{3}^{+}}\otimes\ket{e_{3}^{-}}+\ket{e_{3}^{-}}\otimes\ket{e_{3}^{+}}\right)\\
 & =\frac{2}{\sqrt{2}}\left(\bar{c}_{i3}\ket{e_{3}^{+}}\otimes\ket{e_{3}^{+}}+c_{i3}\ket{e_{3}^{-}}\otimes\ket{e_{3}^{-}}\right)
\end{align*}
for $i=1,2$. By using this, we have $\tilde{Z}^{(2,\psi)}=4I$. Since
$\tilde{Z}^{(2,\psi)}$ is a real matrix, $\psi\in\A_{\theta_{0}}^{(2)}$
and there exists a POVM $M$ such that $J_{\theta_{0}}^{(C,2,\psi,M)}=J_{\theta_{0}}^{(S,2,\psi)}=4I$. 

The non-sharpness of (\ref{eq:ineq_matrix}) for $(d-1)n>2$ can be
given by the following Theorem \ref{thm:bound3}.
\end{proof}
The Fisher information $J_{\theta_{0}}^{(U)}$ for a $SU(2)$ model
is same as a quantum channel Fisher information $J_{QC}$ introduced
in \cite{rev10} if $d=1$. The quantum channel Fisher information
was extended to multi-parameter models by considering the maximization
of SLD Fisher information matrices with respect to input states in
\cite{rev11}. Due to Theorem \ref{thm:bound_matrix}, we can see
there is no maximum of the SLD Fisher information matrices when $(d-1)n>2$,
and $J_{\theta_{0}}^{(U)}$ is not included in \cite{rev11}. As for
a $SU(d)$ channel model with $d\geq3$, we can define a matrix similar
to $J_{\theta_{0}}^{(U)}$, however, it does not have the same properties
as Theorem \ref{thm:bound_matrix} even when $d=1$. For example,
a quantum channel Fisher information $J_{QC}$ of a $SU(3)$ model
$\left\{ \sum_{i=1}^{3}e^{\ii a_{i}\theta}\ket i\bra i\mid\theta\in\R\right\} $
at $\theta=0$ with $a_{1}<a_{2}<a_{3}$ and $a_{1}+a_{2}+a_{3}=0$
is $(a_{3}-a_{1})^{2}$, and it does not depend on $a_{2}$, while
$J_{\theta_{0}}^{(U)}=2(a_{1}^{2}+a_{2}^{2}+a_{3}^{2})$. 

\subsection{Inequality for trace of Fisher information matrices\label{subsec:tr_inquality}}

Next, we show inequalities for inverse $J_{\theta_{0}}^{(U)}$ weighted
traces of a classical Fisher information matrices by using Casimir
operators. These inequalities play an essential role in the estimation
of SU(2) channels.
\begin{thm}
\label{thm:bound3}Let $F\in\F_{\theta_{0}}^{(n)}$ be a $d\times d$
real matrix. When $d=1$,
\begin{equation}
\Tr J_{\theta_{0}}^{(U)^{-1}}F\leq n^{2}.\label{eq:ineqTr1}
\end{equation}
When $d=2$ and $n$ is an even number or $d=3$,
\begin{equation}
\Tr J_{\theta_{0}}^{(U)^{-1}}F\leq n^{2}+2n.\label{eq:ineqTrEven}
\end{equation}
When $d=2$ and $n$ is an odd number,
\begin{equation}
\Tr J_{\theta_{0}}^{(U)^{-1}}F\leq n^{2}+2n-1.\label{eq:ineqTrOdd}
\end{equation}
The inequalities (\ref{eq:ineqTr1}), (\ref{eq:ineqTrEven}), and
(\ref{eq:ineqTrOdd}) are sharp. 
\end{thm}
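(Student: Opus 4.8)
The plan is to first reduce the weighted trace $\Tr J_{\theta_{0}}^{(U)^{-1}}F$ to a sum of variances of the operators $X_{s}^{(n)}$, and then to bound that sum spectrally using the Casimir operator $C^{(n)}$. By the convexity of $\F_{\theta_{0}}^{(n)}$ (Theorem \ref{thm:convex_fisher}) and the linearity of $F\mapsto\Tr J_{\theta_{0}}^{(U)^{-1}}F$, it suffices to treat an extreme point $F=J_{\theta_{0}}^{(C,n,\psi,M)}$ coming from a single channel measurement $(\psi,M)$. As established in the proof of Theorem \ref{thm:bound_matrix}, $J_{\theta_{0}}^{(U)^{-1}}=K_{\theta_{0}}^{*}K_{\theta_{0}}\geq0$, and by (\ref{eq:sld_ineq_su2}) we have $F\leq J_{\theta_{0}}^{(S,n,\psi)}=K_{\theta_{0}}^{-1}\,\re\tilde{Z}^{(n,\psi)}\,(K_{\theta_{0}}^{*})^{-1}$. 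Hence trace monotonicity for positive matrices and cyclicity give $\Tr J_{\theta_{0}}^{(U)^{-1}}F\leq\Tr K_{\theta_{0}}^{*}\,\re\tilde{Z}^{(n,\psi)}\,(K_{\theta_{0}}^{*})^{-1}=\Tr\re\tilde{Z}^{(n,\psi)}=\sum_{s=1}^{d}\tilde{Z}_{ss}^{(n,\psi)}$. Each diagonal entry $\tilde{Z}_{ss}^{(n,\psi)}$ is exactly the variance $\bra{\psi_{\theta_{0}}}(X_{s}^{(n)})^{2}\otimes I\ket{\psi_{\theta_{0}}}-\bra{\psi_{\theta_{0}}}X_{s}^{(n)}\otimes I\ket{\psi_{\theta_{0}}}^{2}$, so the whole problem reduces to maximizing $\sum_{s=1}^{d}\mathrm{Var}(X_{s}^{(n)})$ over input vectors.

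For the spectral bound I would discard the (nonnegative) squared-mean terms and bound the second moments. When $d=1$, $(X_{1}^{(n)})^{2}\leq n^{2}I$ gives $\Tr J_{\theta_{0}}^{(U)^{-1}}F\leq n^{2}$ at once. When $d=3$, the sum of second moments is $\bra{\psi_{\theta_{0}}}C^{(n)}\otimes I\ket{\psi_{\theta_{0}}}$, and since the Casimir identity shows $C^{(n)}$ has top eigenvalue $n^{2}+2n$, we get $\sum_{s=1}^{3}\mathrm{Var}(X_{s}^{(n)})\leq n^{2}+2n$. When $d=2$, I would write $(X_{1}^{(n)})^{2}+(X_{2}^{(n)})^{2}=C^{(n)}-(X_{3}^{(n)})^{2}$ and diagonalize $C^{(n)}$ and $X_{3}^{(n)}$ simultaneously: the $X_{i}^{(n)}$ are twice the generators of the $n$-fold tensor $SU(2)$ representation, so on the spin-$j$ sector $C^{(n)}$ has eigenvalue $4j(j+1)$ while $X_{3}^{(n)}$ has eigenvalues $2m$ with $m\in\{-j,\dots,j\}$, whence $C^{(n)}-(X_{3}^{(n)})^{2}$ has eigenvalues $4(j(j+1)-m^{2})$. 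Maximizing over $j\leq n/2$ and admissible $m$ yields top eigenvalue $n^{2}+2n$ at $(j,m)=(n/2,0)$ when $n$ is even, and $n^{2}+2n-1$ at $(j,m)=(n/2,\pm\tfrac{1}{2})$ when $n$ is odd (all spins being half-integral then, so $m=0$ is forbidden). This parity case split is the crux of the argument and the place I expect to spend the most care.

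For sharpness I would exhibit explicit attaining vectors and verify they lie in $\A_{\theta_{0}}^{(n)}$ of (\ref{eq:Aset}), so that Theorem \ref{thm:sld_ineq_pure} supplies a POVM making $F=J_{\theta_{0}}^{(S,n,\psi)}$ achievable. For $d=1$ the cat state (\ref{eq:inputMatD1}) already gives $\tilde{Z}^{(n,\psi)}=n^{2}$. For $d=3$ the same cat state $\tfrac{1}{\sqrt{2}}U_{\theta_{0}}^{*}(\ket{e_{1}^{n,n}}+\ket{e_{1}^{n,0}})$ works: it lies in the symmetric sector, so the Casimir expectation is $n^{2}+2n$, and all first moments vanish by the $t\mapsto t\pm1$ action in (\ref{eq:Xn_1})--(\ref{eq:Xn_3}), hence $\sum_{s}\mathrm{Var}=n^{2}+2n$. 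For $d=2$ I would take the Dicke states $\ket{e_{3}^{n,n/2}}$ ($n$ even) and $\ket{e_{3}^{n,(n+1)/2}}$ ($n$ odd), for which $(X_{3}^{(n)})^{2}$ equals $0$ resp. $1$ and the first moments of $X_{1}^{(n)},X_{2}^{(n)}$ vanish, giving exactly $n^{2}+2n$ resp. $n^{2}+2n-1$. In every case the off-diagonal entries of $\tilde{Z}^{(n,\psi)}$ reduce, via (\ref{eq:Xn_1})--(\ref{eq:Xn_2}), to multiples of $\re(\bar{c}_{ji}c_{ki})$, which vanish by the relations $|c_{ji}|=1$ and $\re\bar{c}_{ji}c_{ki}=0$ recorded after (\ref{eq:eigen4}); thus $\tilde{Z}^{(n,\psi)}$ is real and the vector belongs to $\A_{\theta_{0}}^{(n)}$. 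Carrying out this reality check is the remaining technical point, with extra attention to the interference at small $n$ (e.g. $n=2$, where $X_{2}^{(n)}$ and $X_{3}^{(n)}$ send the two cat components into a common middle vector); the Casimir identity nonetheless guarantees that the diagonal still sums to the claimed value.
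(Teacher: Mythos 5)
Your reduction and upper bounds follow the paper's proof essentially verbatim: you pass from $\Tr J_{\theta_{0}}^{(U)^{-1}}F$ to $\Tr\re\tilde{Z}^{(n,\psi)}$ via (\ref{eq:sld_ineq_su2}) and cyclicity with the invertible $K_{\theta_{0}}$, then invoke the Casimir bound for $d=3$ and the spectrum of $C^{(n)}-X_{3}^{(n)^{2}}$ for $d=2$; your joint $(j,m)$ analysis is equivalent to the paper's separate $\max C^{(n)}$ / $\min X_{3}^{(n)^{2}}$ argument, since the two operators commute. (The convexity reduction is superfluous --- every $F\in\F_{\theta_{0}}^{(n)}$ is by definition of the form $J_{\theta_{0}}^{(C,n,\psi,M)}$ --- but harmless.) The attaining states for $d=1$, for $d=3$ with $n\geq3$, and for $d=2$ with even $n$ also coincide with the paper's.

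The genuine gap is your sharpness construction for $d=2$ with odd $n$: the bare Dicke state $\ket{\psi}=U_{\theta_{0}}^{*\otimes n}\ket{e_{3}^{n,(n+1)/2}}$ does \emph{not} attain (\ref{eq:ineqTrOdd}). At $t=(n+1)/2$ the two transition amplitudes in (\ref{eq:Xn_1}) are unequal, $\sqrt{t(n-t+1)}=\tfrac{n+1}{2}$ versus $\sqrt{(t+1)(n-t)}=\tfrac{\sqrt{(n+3)(n-1)}}{2}$, so writing $\bar{c}_{13}c_{23}=\ii\sigma$ with $\sigma=\pm1$ (forced by $\re\bar{c}_{13}c_{23}=0$ and unimodularity) one finds
\[
\tilde{Z}_{12}^{(n,\psi)}=\bar{c}_{23}c_{13}\,\frac{(n+1)^{2}}{4}+c_{23}\bar{c}_{13}\,\frac{(n+3)(n-1)}{4}=-\ii\sigma,
\]
purely imaginary and nonzero. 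Hence $\psi\notin\A_{\theta_{0}}^{(n)}$, Theorem \ref{thm:sld_ineq_pure} supplies no POVM with $J_{\theta_{0}}^{(C,n,\psi,M)}=J_{\theta_{0}}^{(S,n,\psi)}$, and since $J_{\theta_{0}}^{(C,n,\psi,M)}\leq J_{\theta_{0}}^{(S,n,\psi)}$ with $K_{\theta_{0}}$ invertible, equality of the $J_{\theta_{0}}^{(U)^{-1}}$-weighted traces would force exactly that matrix equality --- so no measurement on this input reaches $n^{2}+2n-1$. This is precisely why the paper uses the ancilla-assisted two-branch state (\ref{eq:opt_odd}): the $\ket{a_{\pm}}$ branches at $t=(n\pm1)/2$ contribute conjugate phases whose imaginary parts cancel, yielding the real matrix $\tilde{Z}^{(n,\psi)}=\tfrac{1}{2}(n^{2}+2n-1)I$. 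Your even-$n$ state survives only because at $t=n/2$ the two amplitudes coincide, making $\tilde{Z}_{12}\propto\re\bar{c}_{13}c_{23}=0$. By contrast, your flagged worry about $d=3$, $n=2$ resolves favorably: for the cat state one gets $\tilde{Z}_{23}^{(2,\psi)}=4(\re c_{21})(\re c_{31})$, which is real though generally nonzero --- reality, not vanishing, is all that membership in $\A_{\theta_{0}}^{(2)}$ requires --- and the diagonal still sums to $8$; the paper instead defers this case to the Dicke-state construction of Theorem \ref{thm:n2}.
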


\begin{proof}
When $d=1$, the inequality (\ref{eq:ineqTr1}) and its sharpness
were proved in Theorem \ref{thm:bound_matrix}. 

When $d=3$, due to the inequality (\ref{eq:sld_ineq_su2}), for any
channel measurement $\left(\psi,M\right)$,
\begin{align*}
\Tr J_{\theta_{0}}^{(U)^{-1}}J_{\theta_{0}}^{(C,n,\psi,M)} & =\Tr K_{\theta_{0}}J_{\theta_{0}}^{(C,n,\psi,M)}K_{\theta_{0}}^{*}\\
 & \leq\Tr K_{\theta_{0}}J_{\theta_{0}}^{(S,n,\psi)}K_{\theta_{0}}^{*}=\Tr\tilde{Z}^{(n,\psi)}\\
 & \leq\bra{\psi}C^{(n)}\otimes I_{\H_{a}^{(n)}}\ket{\psi}\\
 & \leq n^{2}+2n,
\end{align*}
 where $C^{(n)}$ is the Casimir operator. It is known that the maximum
eigenvalue of $C^{(n)}$ is $n^{2}+2n$. When $n\geq3$, we can see
that an input vector 
\[
\ket{\psi}=\frac{1}{\sqrt{2}}U_{\theta_{0}}^{*\otimes n}\left(\ket{e_{1}^{+}}^{\otimes n}+\ket{e_{1}^{-}}^{\otimes n}\right)
\]
with unit eigenvectors $\ket{e_{1}^{\pm}}$ of $X_{1}$ corresponding
to eigenvalues $\pm1$ for all $n\in\N$ satisfies
\[
\tilde{Z}^{(n,\psi)}=\begin{pmatrix}n^{2} & 0 & 0\\
0 & n & 0\\
0 & 0 & n
\end{pmatrix}.
\]
Since $\tilde{Z}^{(n,\psi)}$ is a real matrix and $\Tr\tilde{Z}^{(n,\psi)}=n^{2}+2n$,
this input vector $\ket{\psi}$ can achieve the inequality of (\ref{eq:ineqTrEven}).
When $n=2$, channel measurements to achieve (\ref{eq:ineqTrEven})
is given in Theorem \ref{thm:n2}. 

When $d=2$, for any channel measurement $\left(\psi,M\right)$,
\begin{align*}
\Tr J_{\theta_{0}}^{(U)^{-1}}J_{\theta_{0}}^{(C,n,\psi,M)} & =\Tr K_{\theta_{0}}J_{\theta_{0}}^{(C,n,\psi,M)}K_{\theta_{0}}^{*}\\
 & \leq\Tr K_{\theta_{0}}J_{\theta_{0}}^{(S,n,\psi)}K_{\theta_{0}}^{*}=\Tr\tilde{Z}^{(n,\psi)}\\
 & \leq\bra{\psi}\left(C^{(n)}-X_{3}^{(n)^{2}}\right)\otimes I_{\H_{a}^{(n)}}\ket{\psi}.
\end{align*}
To maximize this, we consider the maximization of $\bra{\psi}C^{(n)}\otimes I_{\H_{a}^{(n)}}\ket{\psi}$
and the minimization of $\bra{\psi}X_{3}^{(n)^{2}}\otimes I_{\H_{a}^{(n)}}\ket{\psi}$.
Since the maximum eigenvalue of $C^{(n)}$ is $n^{2}+2n$, $\max_{\psi}\bra{\psi}C^{(n)}\otimes I_{\H_{a}^{(n)}}\ket{\psi}=n^{2}+2n$.
When $n$ is an even number, the eigenvalues of $X_{3}^{(n)}$ are
$n,n-2,\dots,2,0,-2,\dots,-n+2,-n$, thus $\min_{\psi}\bra{\psi}X_{3}^{(n)^{2}}\otimes I_{\H_{a}^{(n)}}\ket{\psi}=0$.
Therefore $\bra{\psi}\left(C^{(n)}-X_{3}^{(n)^{2}}\right)\otimes I_{\H_{a}^{(n)}}\ket{\psi}\leq n^{2}+2n$,
and the optimal input vector is 
\begin{equation}
\ket{\psi}=U_{\theta_{0}}^{*}\ket{e_{3}^{n,n/2}}.\label{eq:opt_even}
\end{equation}
In fact, by using (\ref{eq:Xn_1}), (\ref{eq:Xn_2}), (\ref{eq:Xn_3}),
it can be seen
\[
X_{j}^{(n)}\ket{e_{3}^{n,n/2}}=c_{j,3}\sqrt{\frac{1}{4}(n^{2}+2n)}\ket{e_{3}^{n,(n/2-1)}}+\bar{c}_{j,3}\sqrt{\frac{1}{4}(n^{2}+2n)}\ket{e_{3}^{n,(n/2+1)}}
\]
for $j=1,2$, and it can be seen that
\[
\tilde{Z}^{(n,\psi)}=\frac{1}{2}(n^{2}+2n)I
\]
due to $\re c_{1,3}\bar{c}_{2,3}=0$. Since $\tilde{Z}^{(n,\psi)}$
is a real matrix, there exists a POVM $M$ such that $J_{\theta_{0}}^{(C,n,\psi,M)}=J_{\theta_{0}}^{(S,n,\psi)}$.
When $n$ is an odd number, the eigenvalues of $X_{3}^{(n)}$ are
$n,n-2,\dots,1,-1,\dots,-n+2,-n$, thus $\min_{\psi}\bra{\psi}X_{3}^{(n)^{2}}\otimes I_{\H_{a}^{(n)}}\ket{\psi}=1$.
Therefore $\bra{\psi}\left(C^{(n)}-X_{3}^{(n)^{2}}\right)\otimes I_{\H_{a}^{(n)}}\ket{\psi}\leq n^{2}+2n-1$,
and the optimal input vector is 
\begin{equation}
\ket{\psi}=\frac{1}{\sqrt{2}}\left(U_{\theta_{0}}^{*\otimes n}\otimes I\right)\left\{ \ket{e_{3}^{n,(n+1)/2}}\ket{a_{+}}+\ket{e_{3}^{n,(n-1)/2}}\ket{a_{-}}\right\} ,\label{eq:opt_odd}
\end{equation}
with an orthonormal basis $\ket{a_{\pm}}$ of an ancilla Hilbert space
$\H_{a}\simeq\C^{2}$. In fact, by using (\ref{eq:Xn_1}), (\ref{eq:Xn_2}),
(\ref{eq:Xn_3}), it can be seen
\begin{align*}
X_{j}^{(n)}\left(U_{\theta_{0}}^{\otimes n}\otimes I\right)\ket{\psi} & =\left\{ c_{j,3}\frac{n+1}{2\sqrt{2}}\ket{e_{3}^{n,(n-1)/2}}+\bar{c}_{j,3}\sqrt{\frac{1}{8}(n+3)(n-1)}\ket{e_{3}^{n,(n+3)/2}}\right\} \ket{a_{+}}\\
 & \qquad+\left\{ \bar{c}_{j,3}\frac{n+1}{2\sqrt{2}}\ket{e_{3}^{n,(n+1)/2}}+c_{j,3}\sqrt{\frac{1}{8}(n+3)(n-1)}\ket{e_{3}^{n,(n-3)/2}}\right\} \ket{a_{-}}
\end{align*}
for $j=1,2$, and it can be seen that
\[
\tilde{Z}^{(n,\psi)}=\frac{1}{2}(n^{2}+2n-1)I
\]
due to $\re c_{1,3}\bar{c}_{2,3}=0$. Since $\tilde{Z}^{(n,\psi)}$
is a real matrix, there exists a POVM $M$ such that $J_{\theta_{0}}^{(C,n,\psi,M)}=J_{\theta_{0}}^{(S,n,\psi)}$. 
\end{proof}
Imai and Fujiwara \cite{imai_sud} showed a similar inequality to
this theorem for the estimation of $SU(\dim\H)$ channels with respect
to group covariant weights. On the other hand, this inequality has
a very similar form to the inequality introduced by Gill and Massar
for the estimation of qubit states \cite{gill_massar,yama_tomo}.
In the following subsection, we show the lower bound of weighted traces
of inverse Fisher information matrices for the estimation of $SU(2)$
channels with arbitrary weights, inspired by Gill and Massar's method.

\subsection{Three parameter estimation\label{subsec:3para}}

When $d=3$, we obtain Gill and Massar type bound as follows.
\begin{thm}
\label{thm:gill_massar_single}When $d=3$, for any $3\times3$ positive
real matrix $W$, a real matrix $F\in\F_{\theta_{0}}^{(n)}$ satisfies
the inequality
\begin{equation}
\Tr WF^{-1}\geq\frac{1}{n^{2}+2n}\left(\Tr\sqrt{\tilde{W}}\right)^{2}\label{eq:gill_massar}
\end{equation}
 where $\tilde{W}:=\sqrt{J_{\theta_{0}}^{(U)^{-1}}}W\sqrt{J_{\theta_{0}}^{(U)^{-1}}}$.
The equality is achieved if and only if
\begin{equation}
F=\frac{n^{2}+2n}{\Tr\sqrt{\tilde{W}}}\sqrt{J_{\theta_{0}}^{(U)}}\sqrt{\tilde{W}}\sqrt{J_{\theta_{0}}^{(U)}}.\label{eq:gill_massar_cond}
\end{equation}
\end{thm}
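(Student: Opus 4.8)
The plan is to reduce the inequality $\Tr WF^{-1}\geq\frac{1}{n^{2}+2n}(\Tr\sqrt{\tilde{W}})^{2}$ to a pure scalar optimization by exploiting the single trace constraint on $\F_{\theta_{0}}^{(n)}$ already established in Theorem \ref{thm:bound3}. Concretely, for $d=3$ that theorem gives $\Tr J_{\theta_{0}}^{(U)^{-1}}F\leq n^{2}+2n$ for every real $F\in\F_{\theta_{0}}^{(n)}$, so it suffices to prove the stronger statement: for any real positive $3\times3$ matrix $F$ subject only to $\Tr J_{\theta_{0}}^{(U)^{-1}}F\leq n^{2}+2n$, one has $\Tr WF^{-1}\geq\frac{1}{n^{2}+2n}(\Tr\sqrt{\tilde{W}})^{2}$, with equality exactly at the $F$ of (\ref{eq:gill_massar_cond}). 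This is precisely the convex-optimization step that underlies the classical Gill and Massar argument, now transported to the channel setting.

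First I would absorb $J_{\theta_{0}}^{(U)}$ into a change of variables. Writing $G:=\sqrt{J_{\theta_{0}}^{(U)^{-1}}}\,F\,\sqrt{J_{\theta_{0}}^{(U)^{-1}}}$, a symmetric positive matrix, the constraint becomes simply $\Tr G\leq n^{2}+2n$, and $\Tr WF^{-1}=\Tr\bigl(\sqrt{J_{\theta_{0}}^{(U)^{-1}}}W\sqrt{J_{\theta_{0}}^{(U)^{-1}}}\bigr)G^{-1}=\Tr\tilde{W}G^{-1}$. So the problem is to minimize $\Tr\tilde{W}G^{-1}$ subject to $\Tr G\leq n^{2}+2n$ over symmetric positive $G$. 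This is the clean normalized form in which the bound should drop out.

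The core inequality I would then invoke is the matrix Cauchy--Schwarz / trace inequality stating that for positive matrices $A,B$, $\Tr A\cdot\Tr B\geq(\Tr\sqrt{A^{1/2}BA^{1/2}})^{2}$, or equivalently the scalar consequence that $(\Tr\sqrt{\tilde{W}})^{2}\leq(\Tr G)(\Tr\tilde{W}G^{-1})$ for all positive $G$. The standard route is to diagonalize in the eigenbasis of $\tilde{W}$ and apply the ordinary Cauchy--Schwarz inequality to the vectors built from $\sqrt{w_i}$ and the diagonal/square-root entries, or alternatively to set $G=\sqrt{\tilde W}/c$ as an ansatz and verify optimality by a Lagrange-multiplier computation. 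Combining $(\Tr\sqrt{\tilde{W}})^{2}\leq(\Tr G)(\Tr\tilde{W}G^{-1})$ with $\Tr G\leq n^{2}+2n$ gives $\Tr\tilde{W}G^{-1}\geq\frac{1}{n^{2}+2n}(\Tr\sqrt{\tilde{W}})^{2}$, which is (\ref{eq:gill_massar}) after undoing the substitution. For the equality case, the Cauchy--Schwarz step forces $G\propto\sqrt{\tilde{W}}$, and the constraint must be saturated, so $\Tr G=n^{2}+2n$ fixes the proportionality constant as $G=\frac{n^{2}+2n}{\Tr\sqrt{\tilde{W}}}\sqrt{\tilde{W}}$; translating back through $F=\sqrt{J_{\theta_{0}}^{(U)}}\,G\,\sqrt{J_{\theta_{0}}^{(U)}}$ yields exactly (\ref{eq:gill_massar_cond}).

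The one genuine subtlety I expect is the \emph{attainability} direction of the ``if and only if''. The inequality direction is purely algebraic and needs no reference to $\F_{\theta_{0}}^{(n)}$, but asserting that the minimizing $F$ of (\ref{eq:gill_massar_cond}) is actually achieved requires that this particular $F$ lie in $\F_{\theta_{0}}^{(n)}$ and be realized by an admissible (possibly randomized) channel measurement. This is where I would lean on the convexity of $\F_{\theta_{0}}^{(n)}$ from Theorem \ref{thm:convex_fisher} together with the extremal generators constructed in Theorem \ref{thm:bound3}, and where the stated condition $n\geq\max\{3,(\sqrt{w_{2}}+\sqrt{w_{3}})/\sqrt{w_{1}}-1\}$ becomes relevant; establishing that the target $F$ is expressible as a convex combination of achievable Fisher matrices (rather than merely satisfying the trace bound) is the main work beyond the scalar optimization, and I would treat it separately from the inequality itself.
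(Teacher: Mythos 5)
Your proposal is correct and takes essentially the same route as the paper's proof: the same substitution $\tilde{F}=\sqrt{J_{\theta_{0}}^{(U)^{-1}}}F\sqrt{J_{\theta_{0}}^{(U)^{-1}}}$, the trace constraint $\Tr\tilde{F}\leq n^{2}+2n$ from Theorem \ref{thm:bound3}, and the Cauchy--Schwarz trace inequality $\left(\Tr\sqrt{\tilde{W}}\right)^{2}\leq\left(\Tr\tilde{F}\right)\left(\Tr\tilde{W}\tilde{F}^{-1}\right)$ with the identical equality analysis forcing $\tilde{F}\propto\sqrt{\tilde{W}}$ with saturated trace. Your decision to treat attainability of (\ref{eq:gill_massar_cond}) within $\F_{\theta_{0}}^{(n)}$ as a separate matter also mirrors the paper, which defers that construction to Theorem \ref{thm:optM3n}.
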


\begin{proof}
Let $\tilde{F}:=\sqrt{J_{\theta_{0}}^{(U)^{-1}}}F\sqrt{J_{\theta_{0}}^{(U)^{-1}}}$.
Since $\frac{1}{n^{2}+2n}\Tr\tilde{F}\leq1$ due to Theorem \ref{thm:bound3},
\begin{align}
\Tr WF^{-1} & =\Tr\tilde{W}\tilde{F}^{-1}\\
 & \geq\left(\frac{1}{n^{2}+2n}\Tr\tilde{F}\right)\left(\Tr\sqrt{\tilde{F}^{-1}}\tilde{W}\sqrt{\tilde{F}^{-1}}\right)\label{eq:gill_masser_su2_proof1}\\
 & \geq\frac{1}{n^{2}+2n}\left(\Tr\left\{ \sqrt{\tilde{F}}\right\} \left\{ \sqrt{\tilde{W}}\sqrt{\tilde{F}^{-1}}\right\} \right)^{2}\label{eq:gill_masser_su2_proof2}\\
 & =\frac{1}{n^{2}+2n}\left(\Tr\sqrt{\tilde{W}}\right)^{2},
\end{align}
where the Cauchy-Schwarz inequality is used in the second inequality,
and the equality is achieved if and only if $\sqrt{\tilde{F}}=k\sqrt{\tilde{W}}\sqrt{\tilde{F}^{-1}}$
with a real value $k$. The equality of (\ref{eq:gill_masser_su2_proof1})
is achieved when $k=\frac{n^{2}+2n}{\Tr\sqrt{\tilde{W}}}$. 
\end{proof}
In the following theorems, we construct concrete randomized channel
measurement that can achieve the inequality (\ref{eq:gill_massar}).
Let us prepare observables $\left\{ X_{i}\right\} _{i=1}^{3}$ defined
at (\ref{eq:X_def}) with a $3\times3$ real orthogonal matrix $O$
such that
\[
O\tilde{W}O^{*}={\rm diag}(w_{1},w_{2},w_{3}),
\]
and $0\leq w_{1}\leq w_{2}\leq w_{3}$.
\begin{thm}
\label{thm:optM3n}When
\begin{equation}
n\geq\max\left\{ 3,\frac{\sqrt{w_{2}}+\sqrt{w_{3}}}{\sqrt{w_{1}}}-1\right\} ,\label{eq:ineq_n_cond}
\end{equation}
the equality of (\ref{eq:gill_massar}) is achievable by a randomized
channel measurement
\[
\bigoplus_{i=1}^{3}s_{i}\left(\psi_{i}^{(n)},M(\psi_{i}^{(n)})\right),
\]
where
\[
\ket{\psi_{i}^{(n)}}:=\frac{1}{\sqrt{2}}U_{\theta_{0}}^{*\otimes n}\left(\ket{e_{i}^{+}}^{\otimes n}+\ket{e_{i}^{-}}^{\otimes n}\right)\in\H^{\otimes n}\qquad(1\leq i\leq3),
\]
and 
\[
s_{i}=\frac{(n+2)\frac{\sqrt{w_{i}}}{\sqrt{w_{1}}+\sqrt{w_{2}}+\sqrt{w_{3}}}-1}{n-1}.
\]
\end{thm}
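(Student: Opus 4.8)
The plan is to combine the convexity of $\F_{\theta_0}^{(n)}$ established in Theorem \ref{thm:convex_fisher} with the achievability criterion of Theorem \ref{thm:gill_massar_single}. By Theorem \ref{thm:convex_fisher}, as long as $(s_1,s_2,s_3)$ is a genuine probability vector, the randomized channel measurement $\bigoplus_{i=1}^{3}s_i(\psi_i^{(n)},M(\psi_i^{(n)}))$ realizes the classical Fisher information matrix $F=\sum_{i=1}^{3}s_i J_{\theta_0}^{(C,n,\psi_i,M(\psi_i))}\in\F_{\theta_0}^{(n)}$. Since Theorem \ref{thm:gill_massar_single} asserts that equality in (\ref{eq:gill_massar}) holds precisely when $F$ equals the right-hand side of (\ref{eq:gill_massar_cond}), it suffices to check two things: that the weights $s_i$ are nonnegative and sum to one, and that the resulting $F$ equals $\frac{n^{2}+2n}{\Tr\sqrt{\tilde W}}\sqrt{J_{\theta_0}^{(U)}}\sqrt{\tilde W}\sqrt{J_{\theta_0}^{(U)}}$.

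First I would confirm that $(s_1,s_2,s_3)$ is a probability distribution. Writing $S:=\sqrt{w_1}+\sqrt{w_2}+\sqrt{w_3}=\Tr\sqrt{\tilde W}$, the three numerators $(n+2)\sqrt{w_i}/S-1$ sum to $(n+2)-3=n-1$, so $\sum_i s_i=1$ follows at once. Nonnegativity $s_i\ge0$ is equivalent to $(n+2)\sqrt{w_i}\ge S$ for each $i$; as $w_1\le w_2\le w_3$ the binding case is $i=1$, which rearranges to $n\ge(\sqrt{w_2}+\sqrt{w_3})/\sqrt{w_1}-1$. This is exactly the second term in the hypothesis (\ref{eq:ineq_n_cond}), so all $s_i\ge0$.

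Next I would evaluate each summand $J_{\theta_0}^{(C,n,\psi_i,M(\psi_i))}$. The input $\ket{\psi_i^{(n)}}$ is the $X_i$-adapted analogue of the GHZ-type vector used in the proof of Theorem \ref{thm:bound3}; by the same computation, which is legitimate because $n\ge3$ (the first term of (\ref{eq:ineq_n_cond})) and is symmetric under cyclically permuting the roles of $X_1,X_2,X_3$ in (\ref{eq:pauli_relation}), one obtains $\tilde Z^{(n,\psi_i)}=nI+(n^{2}-n)E_{ii}$, where $E_{ii}$ denotes the matrix with a single $1$ in the $(i,i)$ entry in the basis defining the $X_j$. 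As $\tilde Z^{(n,\psi_i)}$ is real we have $\ket{\psi_i^{(n)}}\in\A_{\theta_0}^{(n)}$, and Theorem \ref{thm:sld_ineq_pure} furnishes a POVM $M(\psi_i^{(n)})$ achieving $J_{\theta_0}^{(C,n,\psi_i,M(\psi_i))}=J_{\theta_0}^{(S,n,\psi_i)}=K_{\theta_0}^{-1}\tilde Z^{(n,\psi_i)}(K_{\theta_0}^{*})^{-1}$.

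Finally I would assemble the convex combination. Using $\sum_i s_i=1$ and the diagonal values $n+(n^{2}-n)s_i=(n^{2}+2n)\sqrt{w_i}/S$ obtained from the explicit $s_i$, one finds $\sum_i s_i\tilde Z^{(n,\psi_i)}=\frac{n^{2}+2n}{S}\,{\rm diag}(\sqrt{w_1},\sqrt{w_2},\sqrt{w_3})=\frac{n^{2}+2n}{S}O\sqrt{\tilde W}O^{*}$, the last equality because $O\tilde WO^{*}={\rm diag}(w_1,w_2,w_3)$. Substituting $K_{\theta_0}^{-1}=\sqrt{J_{\theta_0}^{(U)}}\,O^{*}$ from (\ref{eq:K_def}), so that $K_{\theta_0}^{-1}O=\sqrt{J_{\theta_0}^{(U)}}$ and $O^{*}(K_{\theta_0}^{*})^{-1}=\sqrt{J_{\theta_0}^{(U)}}$, the orthogonal factors collapse and give $F=\frac{n^{2}+2n}{\Tr\sqrt{\tilde W}}\sqrt{J_{\theta_0}^{(U)}}\sqrt{\tilde W}\sqrt{J_{\theta_0}^{(U)}}$, which is exactly (\ref{eq:gill_massar_cond}). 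I expect the main obstacle to be not this final assembly but the verification of $\tilde Z^{(n,\psi_i)}$ itself, specifically the vanishing of the off-diagonal entries $\bra{\psi_i}X_t^{(n)}X_s^{(n)}\ket{\psi_i}$ for $s\ne t$: this rests on the relation $\re\bar c_{si}c_{ti}=0$ accompanying (\ref{eq:eigen1})--(\ref{eq:eigen4}), together with the fact that, only for $n\ge3$, the states reached from $\ket{e_i^{\pm}}^{\otimes n}$ by two net spin flips stay orthogonal to $\ket{e_i^{\pm}}^{\otimes n}$, so that the lone surviving contribution to each off-diagonal entry is the flip-and-return term carrying the coefficient $\re\bar c_{si}c_{ti}$.
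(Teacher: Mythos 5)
Your proposal is correct and follows essentially the same route as the paper's proof: computing $\tilde{Z}^{(n,\psi_i^{(n)})}=nI+(n^{2}-n)\ket{e_i}\bra{e_i}$ via the ladder relations (\ref{eq:eigen1})--(\ref{eq:eigen4}), invoking reality of $Z_{\theta_0}^{(n,\psi_i^{(n)})}$ and Theorem \ref{thm:sld_ineq_pure} to obtain the achieving POVMs, mixing by Theorem \ref{thm:convex_fisher}, and matching the result to the equality condition (\ref{eq:gill_massar_cond}) of Theorem \ref{thm:gill_massar_single}, with (\ref{eq:ineq_n_cond}) guaranteeing $s_i\geq0$. You even make explicit two points the paper leaves implicit --- that $\sum_i s_i=1$ and that the off-diagonal entries vanish for $n\geq3$ precisely because two-flip states are orthogonal to $\ket{e_i^{\pm}}^{\otimes n}$, leaving only the flip-and-return term with coefficient $\re\bar{c}_{si}c_{ti}=0$ --- which correctly accounts for the failure at $n=2$ noted after the theorem.
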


\begin{proof}
When $n\geq3$, by using (\ref{eq:eigen1}), (\ref{eq:eigen2}), (\ref{eq:eigen3}),
and (\ref{eq:eigen4}), it can be seen that
\begin{align}
X_{j}^{(n)}U_{\theta_{0}}^{\otimes n}\ket{\psi_{i}^{(n)}} & =\frac{1}{\sqrt{2}}\sum_{k=1}^{n}\left(c_{ji}\ket{e_{i}^{+}}^{\otimes(k-1)}\otimes\ket{e_{i}^{-}}\otimes\ket{e_{i}^{+}}^{\otimes(n-1)}+\bar{c}_{ji}\ket{e_{i}^{-}}^{\otimes(k-n)}\otimes\ket{e_{i}^{+}}\otimes\ket{e_{i}^{-}}^{\otimes(n-k)}\right),\label{eq:xnph_j}\\
X_{k}^{(n)}U_{\theta_{0}}^{\otimes n}\ket{\psi_{i}^{(n)}} & =\frac{1}{\sqrt{2}}\sum_{k=1}^{n}\left(c_{ki}\ket{e_{i}^{+}}^{\otimes(k-1)}\otimes\ket{e_{i}^{-}}\otimes\ket{e_{i}^{+}}^{\otimes(n-1)}+\bar{c}_{ki}\ket{e_{i}^{-}}^{\otimes(k-n)}\otimes\ket{e_{i}^{+}}\otimes\ket{e_{i}^{-}}^{\otimes(n-k)}\right),\label{eq:xnph_k}\\
X_{i}^{(n)}U_{\theta_{0}}^{\otimes n}\ket{\psi_{i}^{(n)}} & =\frac{1}{\sqrt{2}}\left(n\ket{e_{i}^{+}}^{\otimes n}-n\ket{e_{i}^{-}}^{\otimes n}\right).\label{eq:xnph_i}
\end{align}
for $(i,j,k)\in\left\{ (1,2,3),(2,3,1),(3,1,2)\right\} $, and we
obtain
\begin{equation}
\left[\bra{\psi_{i}^{(n)}}U_{\theta_{0}}^{*\otimes n}X_{\alpha}^{(n)}X_{\beta}^{(n)}U_{\theta_{0}}^{\otimes n}\ket{\psi_{i}^{(n)}}\right]_{1\leq\alpha,\beta\leq3}=nI+(n^{2}-n)\ket{e_{i}}\bra{e_{i}}.\label{eq:best_n}
\end{equation}
Thus, due to (\ref{eq:z_n}),
\[
Z_{\theta_{0}}^{(n,\psi_{i}^{(n)})}=\left(K_{\theta_{0}}\right)^{-1}\left(nI+(n^{2}-n)\ket{e_{i}}\bra{e_{i}}\right)\left(K_{\theta_{0}}^{*}\right)^{-1}.
\]
Since $Z_{\theta_{0}}^{(n,\psi_{i}^{(n)})}$ is a real matrix, $\psi_{i}^{(n)}\in\A_{\theta_{0}}^{(n)}$
and the classical Fisher information matrix with respect to a channel
measurement $(\psi_{i}^{(n)},M(\psi_{i}^{(n)}))$ is
\[
J_{\theta_{0}}^{(C,n,\psi_{i}^{(n)},M(\psi_{i}^{(n)}))}=\left(K_{\theta_{0}}\right)^{-1}\left(nI+(n^{2}-n)\ket{e_{i}}\bra{e_{i}}\right)\left(K_{\theta_{0}}^{*}\right)^{-1}.
\]

The classical Fisher information matrix with respect to the randomized
channel measurement $\bigoplus_{i=1}^{3}s_{i}\left(\psi_{i}^{(n)},M(\psi_{i}^{(n)})\right)$is
\begin{align*}
J_{\theta_{0}}^{(C,n,\bigoplus_{i=1}^{3}s_{i}\left(\psi_{i}^{(n)},M(\psi_{i}^{(n)})\right))} & =\left(K_{\theta_{0}}\right)^{-1}\left\{ nI+(n^{2}-n)\sum_{i=1}^{3}s_{i}\ket{e_{i}}\bra{e_{i}}\right\} \left(K_{\theta_{0}}^{*}\right)^{-1}\\
 & =\sqrt{J_{\theta_{0}}^{(U)}}\left\{ nI+(n^{2}-n)\sum_{i=1}^{3}s_{i}\ket{w_{i}}\bra{w_{i}}\right\} \sqrt{J_{\theta_{0}}^{(U)}}
\end{align*}
and this is identical to the optimal $F$ in Theorem \ref{thm:gill_massar_single}.
Note that $n\geq\frac{\sqrt{w_{2}}+\sqrt{w_{3}}}{\sqrt{w_{1}}}-1$
implies $s_{i}\geq0$ for $1\leq i\leq3$. 
\end{proof}
When $n\geq3$ and $n<\frac{\sqrt{w_{2}}+\sqrt{w_{3}}}{\sqrt{w_{1}}}-1$,
it is hard to obtain analytical optimal channel measurements. However,
by taking $n$ large enough, condition (\ref{eq:ineq_n_cond}) can
be always satisfied for any $W>0$. If input states (\ref{eq:opt_even})
and (\ref{eq:opt_odd}) are incorporated into random measurements,
the range (\ref{eq:ineq_n_cond}) can be expanded slightly.

When $n=2$, $X_{j}^{(n)}U_{\theta_{0}}^{\otimes n}\ket{\psi_{i}^{(n)}}$
and $X_{k}^{(n)}U_{\theta_{0}}^{\otimes n}\ket{\psi_{i}^{(n)}}$ at
(\ref{eq:xnph_j}) and (\ref{eq:xnph_k}) are not orthogonal, and
(\ref{eq:best_n}) is not satisfied. Therefore, this theorem is not
valid. 

In multi-parameter quantum metrology, some researchs reported that
simultaneous estimation of several parameters is better than doing
individual experiments for each parameter \cite{rev6,rev7,rev8,rev9}.
Theorem \ref{thm:optM3n} shows the lower bound is achievable by using
a randomized strategy, and it seems to negate the advantage of simultaneous
estimation over individual estimation. The reason the randomised strategy
can be optimal is that the optimal strategies for the individual parameters
in this theorem also acquire information on other parameters. In fact,
the eigenvalues of (\ref{eq:best_n}) are $\{n^{2},n,n\}$, not $\{n^{2},0,0\}$.

\subsection{Two-parameter Estimation \label{subsec:2para}}

When $d=2$, we also obtain Gill and Massar type bound as follows.
\begin{thm}
\label{thm:gill_massar2}When $d=2$, for any $2\times2$ positive
real matrix $W$, a real matrix $F\in\F_{\theta_{0}}^{(n)}$ satisfies
the inequality
\begin{equation}
\Tr WF^{-1}\geq\begin{cases}
\frac{1}{n^{2}+2n}\left(\Tr\sqrt{\tilde{W}}\right)^{2} & \text{if }n\text{ is even,}\\
\frac{1}{n^{2}+2n-1}\left(\Tr\sqrt{\tilde{W}}\right)^{2} & \text{if }n\text{ is odd.}
\end{cases}\label{eq:gill_massar2}
\end{equation}
 where $\tilde{W}:=\sqrt{J_{\theta_{0}}^{(U)^{-1}}}W\sqrt{J_{\theta_{0}}^{(U)^{-1}}}$. 

When $W=J_{\theta_{0}}^{(U)}$, this equality is achieved by a input
state
\[
\ket{\psi}=\begin{cases}
U_{\theta_{0}}^{*}\ket{e_{3}^{n,n/2}} & \text{if }n\text{ is even,}\\
\frac{1}{\sqrt{2}}\left(U_{\theta_{0}}^{*\otimes n}\otimes I\right)\left\{ \ket{e_{3}^{n,(n+1)/2}}\ket{a_{+}}+\ket{e_{3}^{n,(n-1)/2}}\ket{a_{-}}\right\}  & \text{if }n\text{ is odd,}
\end{cases}
\]
with an orthonormal basis $\ket{a_{\pm}}$ of an ancilla Hilbert space
$\H_{a}\simeq\C^{2}$. 
\end{thm}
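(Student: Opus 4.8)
The plan is to mirror the structure of the proof of Theorem \ref{thm:gill_massar_single} for $d=3$, with the crucial change that the trace bound on $\F_{\theta_0}^{(n)}$ now comes from the two-parameter inequalities (\ref{eq:ineqTrEven}) and (\ref{eq:ineqTrOdd}) of Theorem \ref{thm:bound3} rather than from (\ref{eq:ineqTrEven}) alone. Set $\tilde{F}:=\sqrt{J_{\theta_0}^{(U)^{-1}}}F\sqrt{J_{\theta_0}^{(U)^{-1}}}$, so that $\Tr W F^{-1}=\Tr\tilde{W}\tilde{F}^{-1}$. From Theorem \ref{thm:bound3} we have $\Tr\tilde{F}\le n^2+2n$ when $n$ is even and $\Tr\tilde{F}\le n^2+2n-1$ when $n$ is odd. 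First I would write, for the even case,
\begin{equation*}
\Tr\tilde{W}\tilde{F}^{-1}\geq\left(\frac{1}{n^{2}+2n}\Tr\tilde{F}\right)\left(\Tr\sqrt{\tilde{F}^{-1}}\tilde{W}\sqrt{\tilde{F}^{-1}}\right)\geq\frac{1}{n^{2}+2n}\left(\Tr\sqrt{\tilde{W}}\right)^{2},
\end{equation*}
exactly as in (\ref{eq:gill_masser_su2_proof1})--(\ref{eq:gill_masser_su2_proof2}), where the first inequality uses $\Tr\tilde{F}/(n^2+2n)\le1$ and the second is Cauchy--Schwarz applied to $\sqrt{\tilde{F}}$ and $\sqrt{\tilde{W}}\sqrt{\tilde{F}^{-1}}$. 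The odd case is identical with $n^2+2n$ replaced throughout by $n^2+2n-1$. This establishes the lower bound (\ref{eq:gill_massar2}) in both parities.

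For the achievability claim, I would specialize to $W=J_{\theta_0}^{(U)}$, so that $\tilde{W}=I$ and the bound reads $\Tr J_{\theta_0}^{(U)}F^{-1}\ge d^2/(n^2+2n)=4/(n^2+2n)$ (even) or $4/(n^2+2n-1)$ (odd). The claimed optimal input states are precisely the vectors (\ref{eq:opt_even}) and (\ref{eq:opt_odd}) that were already shown in the proof of Theorem \ref{thm:bound3} to satisfy $\tilde{Z}^{(n,\psi)}=\tfrac{1}{2}(n^2+2n)I$ (even) and $\tilde{Z}^{(n,\psi)}=\tfrac{1}{2}(n^2+2n-1)I$ (odd), and to lie in $\A_{\theta_0}^{(n)}$ because $\tilde{Z}^{(n,\psi)}$ is real. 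Hence there is a POVM $M$ with $J_{\theta_0}^{(C,n,\psi,M)}=J_{\theta_0}^{(S,n,\psi)}=(K_{\theta_0})^{-1}\,\mathrm{Re}\,\tilde{Z}^{(n,\psi)}\,(K_{\theta_0}^*)^{-1}$. Plugging in $\tilde{Z}^{(n,\psi)}=\tfrac12(n^2+2n)I$ gives $F=\tfrac12(n^2+2n)\,J_{\theta_0}^{(U)}$ (even), and similarly $F=\tfrac12(n^2+2n-1)\,J_{\theta_0}^{(U)}$ (odd). I would then check directly that this $F$ is the equality case (\ref{eq:gill_massar_cond}) specialized to $\tilde{W}=I$: since $\tilde{F}=\tfrac12(n^2+2n)I$ (even) is a scalar multiple of the identity, the Cauchy--Schwarz step is saturated and $\Tr J_{\theta_0}^{(U)}F^{-1}=4/(n^2+2n)$, matching the bound exactly; the odd case is the same.

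The main subtlety, rather than obstacle, is bookkeeping the two parities in parallel and verifying that the saturating $F$ of the form $c\,J_{\theta_0}^{(U)}$ is realized by the explicit POVM attached to the states (\ref{eq:opt_even}) and (\ref{eq:opt_odd}); this is immediate once one invokes the already-computed value of $\tilde{Z}^{(n,\psi)}$ from Theorem \ref{thm:bound3}, so no new calculation is needed. I note that the theorem asserts achievability only for the single weight $W=J_{\theta_0}^{(U)}$; for general $W$ one would need a randomized strategy analogous to Theorem \ref{thm:optM3n}, but since the eigenvectors of $X_3^{(n)}$ do not decompose the optimal $\tilde{F}$ into the right rank-one pieces for arbitrary $\tilde{W}$ in the $d=2$ case, that general achievability is deferred (and, per the introduction, holds only asymptotically).
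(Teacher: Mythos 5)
Your proposal is correct and follows exactly the route the paper intends: the paper's proof of this theorem consists of the two remarks that the bound follows as in Theorem \ref{thm:gill_massar_single} (with the parity-dependent trace bounds (\ref{eq:ineqTrEven}) and (\ref{eq:ineqTrOdd}) from Theorem \ref{thm:bound3}) and that achievability for $W=J_{\theta_{0}}^{(U)}$ is witnessed by the states (\ref{eq:opt_even}) and (\ref{eq:opt_odd}), whose values $\tilde{Z}^{(n,\psi)}=\tfrac{1}{2}(n^{2}+2n)I$ and $\tfrac{1}{2}(n^{2}+2n-1)I$ were already computed there. Your fleshed-out version of the Cauchy--Schwarz chain, the verification that $F=\tfrac{1}{2}(n^{2}+2n)J_{\theta_{0}}^{(U)}$ (resp.\ $\tfrac{1}{2}(n^{2}+2n-1)J_{\theta_{0}}^{(U)}$) saturates the bound, and your observation that general $W$ is only handled asymptotically are all accurate.
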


\begin{proof}
The proof of (\ref{eq:gill_massar2}) is similar to the proof of Theorem
\ref{thm:gill_massar_single}. The optimal input states for $W=J_{\theta_{0}}^{(U)}$
are given at (\ref{eq:opt_even}) and (\ref{eq:opt_odd}). 
\end{proof}
For an arbitrary weight matrix $W$, it is hard to obtain analytical
optimal strategies when $d=2$. However, we can obtain an asymptotically
optimal sequence of strategies to achieve the Gill and Massar type
bound as follows.

\begin{thm}
For any $2\times2$ real positive matrix $W$, a sequence of randomized
strategies $\bigoplus_{i=1}^{2}s_{i}\left(\psi_{i}^{(n)},M(\psi_{i}^{(n)})\right)$
with
\[
\ket{\psi_{i}^{(n)}}=\frac{1}{\sqrt{2}}U_{\theta_{0}}^{*\otimes n}\left(\ket{e_{i}^{+}}^{\otimes n}+\ket{e_{i}^{-}}^{\otimes n}\right)
\]
and
\[
s_{i}=\frac{\sqrt{w_{i}}}{\sqrt{w_{1}}+\sqrt{w_{2}}}
\]
for $i=1,2$ satisfies
\[
\lim_{n\to\infty}(n^{2}+2n)\Tr WJ_{\theta_{0}}^{(C,n,\bigoplus_{i=1}^{2}s_{i}\left(\psi_{i}^{(n)},M(\psi_{i}^{(n)})\right))^{-1}}=\left(\Tr\sqrt{\tilde{W}}\right)^{2},
\]
where $\tilde{W}:=\sqrt{J_{\theta_{0}}^{(U)^{-1}}}W\sqrt{J_{\theta_{0}}^{(U)^{-1}}}$.
\end{thm}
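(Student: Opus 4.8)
The plan is to reuse the explicit second-moment computation (\ref{eq:best_n}) from the proof of Theorem \ref{thm:optM3n}, specialized to $d=2$. Since we are sending $n\to\infty$, I may assume $n\geq3$, so that (\ref{eq:best_n}) applies and the inputs $\ket{\psi_i^{(n)}}$ have vanishing first moments $\bra{\psi_{\theta_0}}X_s^{(n)}\otimes I_{\H_a^{(n)}}\ket{\psi_{\theta_0}}$. Restricting the $3\times3$ matrix $nI+(n^2-n)\ket{e_i}\bra{e_i}$ to its upper-left $2\times2$ block then gives $\tilde{Z}^{(n,\psi_1^{(n)})}=\mathrm{diag}(n^2,n)$ and $\tilde{Z}^{(n,\psi_2^{(n)})}=\mathrm{diag}(n,n^2)$, both real. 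Hence by Theorem \ref{thm:sld_ineq_pure} each $\psi_i^{(n)}$ admits a POVM $M(\psi_i^{(n)})$ attaining $J_{\theta_0}^{(C,n,\psi_i^{(n)},M(\psi_i^{(n)}))}=(K_{\theta_0})^{-1}\tilde{Z}^{(n,\psi_i^{(n)})}(K_{\theta_0}^{*})^{-1}$ through (\ref{eq:z_n}).

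Next I would invoke the additivity of classical Fisher information under randomization (Theorem \ref{thm:convex_fisher}, equation (\ref{eq:mixFisher2})) to write the Fisher matrix $F_n$ of the randomized strategy as $F_n=s_1 J_{\theta_0}^{(C,n,\psi_1^{(n)},M(\psi_1^{(n)}))}+s_2 J_{\theta_0}^{(C,n,\psi_2^{(n)},M(\psi_2^{(n)}))}=(K_{\theta_0})^{-1}D_n(K_{\theta_0}^{*})^{-1}$ with the diagonal matrix $D_n=\mathrm{diag}(s_1 n^2+s_2 n,\,s_1 n+s_2 n^2)$. Note that $s_1,s_2\geq0$ and $s_1+s_2=1$ hold automatically for $s_i=\sqrt{w_i}/(\sqrt{w_1}+\sqrt{w_2})$, so this is a legitimate randomized strategy for every $n$, with none of the side condition on $n$ that was required in Theorem \ref{thm:optM3n}. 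Taking $K_{\theta_0}=O\sqrt{J_{\theta_0}^{(U)^{-1}}}$ with $O$ the real orthogonal matrix that diagonalizes $\tilde{W}$, the conjugation $\tilde{F}_n:=\sqrt{J_{\theta_0}^{(U)^{-1}}}F_n\sqrt{J_{\theta_0}^{(U)^{-1}}}=O^{\top}D_n O$ together with $\tilde{W}=O^{\top}\mathrm{diag}(w_1,w_2)O$ yields, exactly as in the proof of Theorem \ref{thm:gill_massar_single}, the scalar identity $\Tr WF_n^{-1}=\Tr\tilde{W}\tilde{F}_n^{-1}=\frac{w_1}{s_1 n^2+s_2 n}+\frac{w_2}{s_1 n+s_2 n^2}$.

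Finally I would multiply by $n^2+2n$ and let $n\to\infty$: the two summands converge to $w_1/s_1$ and $w_2/s_2$ respectively, whose sum equals $(\sqrt{w_1}+\sqrt{w_2})^2=(\Tr\sqrt{\tilde{W}})^2$ after substituting the values of $s_i$. There is no genuine obstacle here; the computation is routine once (\ref{eq:best_n}) is in hand. The only conceptual point worth flagging is \emph{why} the bound of Theorem \ref{thm:gill_massar2} is reached merely asymptotically rather than exactly: each single-parameter-optimal input $\psi_i^{(n)}$ unavoidably carries $n$ units of Fisher information about the orthogonal direction, since the eigenvalue profile of the per-input matrix is $\{n^2,n\}$ rather than the ideal $\{n^2,0\}$, and this parasitic $O(n)$ term contributes a relative error of order $1/n$ to $(n^2+2n)\Tr WF_n^{-1}$ that vanishes only in the limit.
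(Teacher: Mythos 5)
Your proposal is correct and follows exactly the route the paper intends: the paper's proof is just the remark ``similar to the proof of Theorem \ref{thm:optM3n},'' and you have fleshed out precisely that argument --- reusing the second-moment computation (\ref{eq:best_n}) restricted to the $2\times2$ block, the reality of $\tilde{Z}^{(n,\psi_i^{(n)})}$ to invoke Theorem \ref{thm:sld_ineq_pure}, the convex combination (\ref{eq:mixFisher2}), and the resulting identity $\Tr WF_n^{-1}=\frac{w_1}{s_1n^2+s_2n}+\frac{w_2}{s_1n+s_2n^2}$ before passing to the limit. Your closing observation, that the fixed $s_i$ need no side condition on $n$ (unlike (\ref{eq:ineq_n_cond})) and that the parasitic $O(n)$ eigenvalue explains why achievability is only asymptotic, is accurate and consistent with the paper's discussion.
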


The proof is similar to the proof of Theorem \ref{thm:optM3n}. 

\subsection{Three parameter estimation for $n=2$\label{subsec:estN2}}

When $d=3$ and $n=2$, we see that Theorem \ref{thm:optM3n} is not
valid. Instead, more informative inequality can be derivatived by
considering both inequality (\ref{eq:ineq_matrix}) and inequality
(\ref{eq:ineqTrEven}) as follows. 
\begin{thm}
\label{thm:gill_massar_both}When $d=3$ and $n=2$, any real matrix
$F\in\F_{\theta_{0}}^{(2)}$ satisfies
\begin{equation}
\Tr WF^{-1}\geq\begin{cases}
\frac{1}{8}\left(\Tr\sqrt{\tilde{W}}\right)^{2} & \text{if }\frac{\sqrt{w_{3}}}{\sqrt{w_{1}}+\sqrt{w_{2}}+\sqrt{w_{3}}}<\frac{1}{2},\\
\frac{1}{4}\left(\sqrt{w_{1}}+\sqrt{w_{2}}\right)^{2}+\frac{w_{3}}{4} & \text{if }\frac{\sqrt{w_{3}}}{\sqrt{w_{1}}+\sqrt{w_{2}}+\sqrt{w_{3}}}\geq\frac{1}{2}.
\end{cases}\label{eq:gill_massar_32}
\end{equation}
When $\frac{\sqrt{w_{3}}}{\sqrt{w_{1}}+\sqrt{w_{2}}+\sqrt{w_{3}}}<\frac{1}{2}$,
the equality is achieved if and only if
\begin{equation}
F=\frac{8}{\Tr\sqrt{\tilde{W}}}\sqrt{J_{\theta_{0}}^{(U)}}\sqrt{\tilde{W}}\sqrt{J_{\theta_{0}}^{(U)}}.\label{eq:gill_massar_both_cond1}
\end{equation}
When $\frac{\sqrt{w_{3}}}{\sqrt{w_{1}}+\sqrt{w_{2}}+\sqrt{w_{3}}}\geq\frac{1}{2}$,
the equality is achieved if and only if
\begin{equation}
F=\frac{4}{\sqrt{w_{1}}+\sqrt{w_{2}}}\sqrt{J_{\theta_{0}}^{(U)}}\left\{ \left(\sqrt{w_{1}}\ket{w_{1}}\bra{w_{1}}+\sqrt{w_{2}}\ket{w_{2}}\bra{w_{2}}\right)+4\ket{w_{3}}\bra{w_{3}}\right\} \sqrt{J_{\theta_{0}}^{(U)}}.\label{eq:gill_massar_both_cond2}
\end{equation}
Where $w_{1},w_{2},w_{3}$ are eigenvalues of $\tilde{W}$ such that
$0\leq w_{1}\leq w_{2}\leq w_{3}$, and $\ket{w_{1}},\ket{w_{2}},\ket{w_{3}}\in\R^{3}$
are corresponding orthonormal eigenvectors. 
\end{thm}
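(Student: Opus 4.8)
The plan is to fuse the two matrix constraints that are simultaneously available at $n=2$ — the inequality $F\le n^{2}J_{\theta_{0}}^{(U)}=4J_{\theta_{0}}^{(U)}$ from Theorem \ref{thm:bound_matrix} and the trace inequality $\Tr J_{\theta_{0}}^{(U)^{-1}}F\le n^{2}+2n=8$ from Theorem \ref{thm:bound3} — into one convex program and solve it explicitly. First I would pass to the rescaled variable $\tilde{F}:=\sqrt{J_{\theta_{0}}^{(U)^{-1}}}\,F\,\sqrt{J_{\theta_{0}}^{(U)^{-1}}}$, exactly as in the proof of Theorem \ref{thm:gill_massar_single}. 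Under this substitution the objective becomes $\Tr WF^{-1}=\Tr\tilde{W}\tilde{F}^{-1}$, the matrix inequality becomes $\tilde{F}\le 4I$, and the trace inequality becomes $\Tr\tilde{F}\le 8$. Hence a valid lower bound is obtained by minimizing $\Tr\tilde{W}\tilde{F}^{-1}$ over the convex set $\{\tilde{F}>0:\tilde{F}\le 4I,\ \Tr\tilde{F}\le 8\}$, which contains every $F\in\F_{\theta_{0}}^{(2)}$.

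Next I would reduce the matrix problem to a scalar one in the eigenvalues. Because both $\tilde{F}\le 4I$ and $\Tr\tilde{F}\le 8$ depend only on the spectrum of $\tilde{F}$, every orthogonal conjugate of a feasible $\tilde{F}$ is again feasible; the von Neumann / rearrangement trace inequality then shows that, for a fixed spectrum, $\Tr\tilde{W}\tilde{F}^{-1}$ is smallest when $\tilde{F}$ is diagonal in the eigenbasis $\ket{w_{1}},\ket{w_{2}},\ket{w_{3}}$ of $\tilde{W}$, with eigenvalues $f_{1}\le f_{2}\le f_{3}$ ordered to match $w_{1}\le w_{2}\le w_{3}$. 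It therefore suffices to minimize $\sum_{i=1}^{3}w_{i}/f_{i}$ subject to $0<f_{i}\le 4$ and $\sum_{i}f_{i}\le 8$. The total-trace constraint is active (enlarging any $f_{i}$ strictly lowers the objective, and three eigenvalues cannot all equal $4$), so $\sum_{i}f_{i}=8$, and a KKT analysis splits into two regimes. If the unconstrained Gill--Massar minimizer $f_{i}=8\sqrt{w_{i}}/\Tr\sqrt{\tilde{W}}$ already obeys $f_{3}\le 4$, i.e. $\sqrt{w_{3}}/(\sqrt{w_{1}}+\sqrt{w_{2}}+\sqrt{w_{3}})\le\tfrac12$, the Cauchy--Schwarz computation of Theorem \ref{thm:gill_massar_single} gives the value $\tfrac18(\Tr\sqrt{\tilde{W}})^{2}$. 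Otherwise the cap binds on the largest eigenvalue, $f_{3}=4$, and minimizing $w_{1}/f_{1}+w_{2}/f_{2}$ over $f_{1}+f_{2}=4$ via Cauchy--Schwarz yields $f_{i}=4\sqrt{w_{i}}/(\sqrt{w_{1}}+\sqrt{w_{2}})$ for $i=1,2$ and the value $\tfrac14(\sqrt{w_{1}}+\sqrt{w_{2}})^{2}+w_{3}/4$.

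Finally I would transport the optimal $\tilde{F}$ back through $F=\sqrt{J_{\theta_{0}}^{(U)}}\,\tilde{F}\,\sqrt{J_{\theta_{0}}^{(U)}}$, recovering (\ref{eq:gill_massar_both_cond1}) in the first regime and (\ref{eq:gill_massar_both_cond2}) in the second; one checks that the two values agree at the threshold $\sqrt{w_{3}}/\Tr\sqrt{\tilde{W}}=\tfrac12$, so the strict/non-strict split is consistent. The ``if and only if'' then follows from strict convexity of $\tilde{F}\mapsto\Tr\tilde{W}\tilde{F}^{-1}$ (for $\tilde{W}>0$), which makes the minimizer unique, together with an explicit randomized channel measurement realizing it in $\F_{\theta_{0}}^{(2)}$, of the kind constructed in Theorem \ref{thm:optM3n} and Theorem \ref{thm:n2}. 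The routine parts are the change of variables and the two Cauchy--Schwarz steps, which mirror Theorem \ref{thm:gill_massar_single}. The main obstacle is handling the interaction of the two constraints cleanly: I must argue that the cap $\tilde{F}\le 4I$ can only ever bind on the single eigenvector $\ket{w_{3}}$ of largest weight, so the program never degenerates into two active caps, and — the one step deserving genuine care rather than being taken for granted — that aligning the eigenbases of $\tilde{F}$ and $\tilde{W}$ truly lowers the objective while staying feasible under both constraints at once.
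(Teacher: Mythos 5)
Your proposal is correct, and it sets up exactly the same relaxed convex program as the paper's proof, namely minimizing $\Tr\tilde{W}\tilde{F}^{-1}$ over $\{0<\tilde{F}\leq 4I,\ \Tr\tilde{F}\leq 8\}$ after the rescaling $\tilde{F}=\sqrt{J_{\theta_{0}}^{(U)^{-1}}}F\sqrt{J_{\theta_{0}}^{(U)^{-1}}}$, cf.\ (\ref{eq:optimal_double1})--(\ref{eq:optimal_double3}). Where you diverge is in how the matrix problem is collapsed to a scalar one. The paper never invokes a rearrangement argument: it partitions $\tilde{\tilde{F}}_{0}$ into a $2\times2$ block $A$, a vector $\ket{b}$, and a scalar $c$, and uses the Schur-complement identity to get $\Tr{\rm diag}(w_{1},w_{2},w_{3})\tilde{\tilde{F}}^{-1}\geq\Tr{\rm diag}(w_{1},w_{2})A^{-1}+w_{3}c^{-1}$ (equality forcing $\ket{b}=0$, which is how the paper obtains diagonality at the optimum), then applies the Cauchy--Schwarz step of Theorem \ref{thm:gill_massar_single} to $A$ with $\Tr A\leq 8-c$ and minimizes the explicit convex function $f(c)$ of (\ref{eq:convex_func}) over $0<c\leq4$, finding the constrained minimizer $c=4$ in the second regime; the first regime is simply delegated to Theorem \ref{thm:gill_massar_single} together with the feasibility check $2\sqrt{\tilde{W}}\leq 2\sqrt{w_{3}}\,I\leq\Tr\sqrt{\tilde{W}}\,I$. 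You instead reduce via the von Neumann trace inequality (valid here because both constraints are spectral, so conjugating $\tilde{F}$ into the eigenbasis of $\tilde{W}$ with matched ordering preserves feasibility and weakly lowers the objective) and then run a KKT case split on $\min\sum_{i}w_{i}/f_{i}$ with $0<f_{i}\leq4$, $\sum_{i}f_{i}=8$. Your route buys a uniform treatment of both regimes from a single scalar program and makes transparent why at most one cap can bind (if $f_{3}=4$ then $f_{1}+f_{2}=4$ forces $f_{1},f_{2}<4$); the paper's route buys a shorter derivation in the second regime, with the scalar minimization over $c$ replacing your KKT analysis, and the Schur step doing the work that the equality case of von Neumann's inequality must do in your argument --- a point you should spell out when proving the ``only if'' direction, although your appeal to strict convexity of $\tilde{F}\mapsto\Tr\tilde{W}\tilde{F}^{-1}$ for $\tilde{W}>0$ independently yields uniqueness of the minimizer and hence the equality characterization. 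Both proofs rely on Theorem \ref{thm:n2} for realizability of the optimal $F$ within $\F_{\theta_{0}}^{(2)}$, and your threshold check that the two formulas agree at $\sqrt{w_{3}}/\Tr\sqrt{\tilde{W}}=\tfrac12$ matches the paper's assignment of the boundary case to the second regime.
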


\begin{proof}
When $\frac{\sqrt{w_{3}}}{\sqrt{w_{1}}+\sqrt{w_{2}}+\sqrt{w_{3}}}<\frac{1}{2}$,
the inequality (\ref{eq:gill_massar_32}) and the condition (\ref{eq:gill_massar_both_cond1})
were obtained in Theorem \ref{thm:gill_massar_single}. Note that
$F=\frac{8}{\Tr\sqrt{\tilde{W}}}\sqrt{J_{\theta_{0}}^{(U)}}\sqrt{\tilde{W}}\sqrt{J_{\theta_{0}}^{(U)}}\leq4J_{\theta_{0}}^{(U)}$
since $2\sqrt{\tilde{W}}\leq2\sqrt{w_{3}}I\leq\Tr\sqrt{\tilde{W}}I$. 

When $\frac{\sqrt{w_{3}}}{\sqrt{w_{1}}+\sqrt{w_{2}}+\sqrt{w_{3}}}\geq\frac{1}{2}$,
by considering both inequality (\ref{eq:ineq_matrix}) and inequality
(\ref{eq:ineqTrEven}), we have
\begin{align}
\Tr WF^{-1} & \geq\min\left\{ \Tr WF_{0}^{-1}\mid F_{0}\text{ is a \ensuremath{3\times3} real matrix},0<F_{0}\leq4J_{\theta_{0}}^{(U)},\Tr J_{\theta_{0}}^{(U)^{-1}}F_{0}\leq8\right\} \label{eq:optimal_double1}\\
 & =\min\left\{ \Tr\tilde{W}\tilde{F_{0}}^{-1}\mid\tilde{F_{0}}\text{ is a \ensuremath{3\times3} real matrix},0<\tilde{F_{0}}\leq4I,\Tr\tilde{F_{0}}\leq8\right\} \label{eq:optimal_double2}\\
 & =\min\left\{ \Tr{\rm diag}(w_{1},w_{2},w_{3})\tilde{\tilde{F}}_{0}^{-1}\mid\tilde{\tilde{F}}_{0}\text{ is a \ensuremath{3\times3} real matrix},0<\tilde{\tilde{F}}_{0}\leq4I,\Tr\tilde{\tilde{F}}_{0}\leq8\right\} .\label{eq:optimal_double3}
\end{align}
Since the $3\times3$ real matrix $\tilde{\tilde{F}}_{0}$ in (\ref{eq:optimal_double3})
can be partitioned as
\[
\tilde{\tilde{F}}_{0}=\begin{pmatrix}A & \ket b\\
\bra b & c
\end{pmatrix}
\]
with a $2\times2$ positive real matrix $A$ and a real vector $\ket b\in\R^{2}$
and a positive real number $c\in\R$, 
\begin{align*}
\Tr{\rm diag}(w_{1},w_{2},w_{3})\tilde{\tilde{F}}^{-1} & =\Tr{\rm diag}(w_{1},w_{2})\left(A-\frac{1}{c}\ket b\bra b\right)^{-1}+w_{3}\left(c-\bra bA^{-1}\ket b\right)^{-1}\\
 & \geq\Tr{\rm diag}(w_{1},w_{2})A^{-1}+w_{3}c^{-1}.
\end{align*}
Therefore,
\begin{align}
\text{(\ref{eq:optimal_double3})} & \geq\min_{A,c}\left\{ \Tr{\rm diag}(w_{1},w_{2})A^{-1}+w_{3}c^{-1}\mid0<A\leq4I,\,\Tr A\leq8-c,\,0<c\leq4\right\} \label{eq:optimal_double4}\\
 & \geq\min_{A,c}\left\{ \Tr{\rm diag}(w_{1},w_{2})A^{-1}+w_{3}c^{-1}\mid\Tr A\leq8-c,\,0<c\leq4\right\} \label{eq:optimal_double5}\\
 & =\min_{c}\left\{ \frac{1}{8-c}\left(\sqrt{w_{1}}+\sqrt{w_{2}}\right)^{2}+w_{3}c^{-1}\mid\,0<c\leq4\right\} .\label{eq:optimal_double6}
\end{align}
In Equation (\ref{eq:optimal_double6}), the similar argument as Theorem
\ref{thm:gill_massar_single} is used, and the optimal $A$ is
\begin{equation}
A(c)=(8-c)\frac{\sqrt{{\rm diag}(w_{1},w_{2})}}{\Tr\sqrt{{\rm diag}(w_{1},w_{2})}}.\label{eq:optimal_A}
\end{equation}
By considering the minimization of a convex function
\begin{equation}
f:c\mapsto\frac{1}{8-c}\left(\sqrt{w_{1}}+\sqrt{w_{2}}\right)^{2}+w_{3}c^{-1}\qquad(0<c\leq4),\label{eq:convex_func}
\end{equation}
we can obtain the optimal $c$ in (\ref{eq:optimal_double6}) as
\[
c=\min\left\{ 4,8\frac{\sqrt{w_{3}}}{\sqrt{w_{1}}+\sqrt{w_{2}}+\sqrt{w_{3}}}\right\} =4.
\]
The optimal $A$ is
\begin{align*}
A(4) & =4\frac{\sqrt{{\rm diag}(w_{1},w_{2})}}{\Tr\sqrt{{\rm diag}(w_{1},w_{2})}}.
\end{align*}
Thus, the optimal $\tilde{F}_{0}$ in (\ref{eq:optimal_double2})
is
\[
\tilde{F}_{0}=\frac{4}{\sqrt{w_{1}}+\sqrt{w_{2}}}\left(\sqrt{w_{1}}\ket{w_{1}}\bra{w_{1}}+\sqrt{w_{2}}\ket{w_{2}}\bra{w_{2}}\right)+4\ket{w_{3}}\bra{w_{3}},
\]
and the optimal $F_{0}$ in (\ref{eq:optimal_double1}) is $F_{0}=\sqrt{J_{\theta_{0}}^{(U)}}\tilde{F}_{0}\sqrt{J_{\theta_{0}}^{(U)}}$.
The minimum of (\ref{eq:convex_func}) is
\[
f(4)=\frac{1}{4}\left(\sqrt{w_{1}}+\sqrt{w_{2}}\right)^{2}+\frac{w_{3}}{4}.
\]
\end{proof}
In the following theorems, we construct a concrete randomized channel
measurement that achieve the inequality (\ref{eq:gill_massar_32}).
Let us prepare observables $\left\{ X_{i}\right\} _{i=1}^{3}$ defined
at (\ref{eq:X_def}) with a $3\times3$ real orthogonal matrix $O$
such that
\[
O\tilde{W}O^{*}={\rm diag}(w_{1},w_{2},w_{3}).
\]

\begin{thm}
\label{thm:n2}When $d=3$ and $n=2$, the equality of (\ref{eq:gill_massar_32})
is achievable by a randomized channel measurement
\[
\bigoplus_{i=1}^{3}s_{i}\left(\psi_{i}^{(2)},M\left(\psi_{i}^{(2)}\right)\right),
\]
where
\[
\ket{\psi_{i}^{(2)}}:=U_{\theta_{0}}^{*\otimes2}\ket{e_{i}^{2,1}}=\frac{1}{\sqrt{2}}U_{\theta_{0}}^{*\otimes2}\left(\ket{e_{i}^{+}}\otimes\ket{e_{i}^{-}}+\ket{e_{i}^{-}}\otimes\ket{e_{i}^{-}}\right)\in\H^{\otimes2}\qquad(1\leq i\leq3),
\]
and $s_{1},s_{2},s_{3}$ are given as follows. 

When $\frac{\sqrt{w_{3}}}{\sqrt{w_{1}}+\sqrt{w_{2}}+\sqrt{w_{3}}}<\frac{1}{2}$,
\begin{align*}
s_{i} & =\frac{\sqrt{w_{i}}}{\sqrt{w_{1}}+\sqrt{w_{2}}+\sqrt{w_{3}}}\qquad(1\leq i\leq3).
\end{align*}
When $\frac{\sqrt{w_{3}}}{\sqrt{w_{1}}+\sqrt{w_{2}}+\sqrt{w_{3}}}\geq\frac{1}{2}$,
\begin{align*}
s_{1} & =\frac{\sqrt{w_{2}}}{\sqrt{w_{1}}+\sqrt{w_{2}}},\\
s_{2} & =\frac{\sqrt{w_{1}}}{\sqrt{w_{1}}+\sqrt{w_{2}}},\\
s_{3} & =0.
\end{align*}
\end{thm}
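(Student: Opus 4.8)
The plan is to verify achievability directly: compute the classical Fisher information matrix of each component channel measurement $(\psi_i^{(2)},M(\psi_i^{(2)}))$, form the convex combination supplied by Theorem~\ref{thm:convex_fisher}, and check that the resulting $F$ coincides with the optimal matrix identified in Theorem~\ref{thm:gill_massar_both}, namely (\ref{eq:gill_massar_both_cond1}) when $\tfrac{\sqrt{w_3}}{\sqrt{w_1}+\sqrt{w_2}+\sqrt{w_3}}<\tfrac12$ and (\ref{eq:gill_massar_both_cond2}) otherwise.

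First I would compute $\tilde Z^{(2,\psi_i^{(2)})}$ for the symmetric Dicke input $\ket{e_i^{2,1}}$, directly from the ladder relations (\ref{eq:Xn_1})--(\ref{eq:Xn_3}) at $n=2,t=1$. By (\ref{eq:Xn_3}) one has $X_i^{(2)}\ket{e_i^{2,1}}=0$, while $X_j^{(2)}$ and $X_k^{(2)}$ send $\ket{e_i^{2,1}}$ to $\sqrt2\,(c_{ji}\ket{e_i^{2,0}}+\bar c_{ji}\ket{e_i^{2,2}})$ and its $k$-analogue. Hence the diagonal entries of $\tilde Z^{(2,\psi_i^{(2)})}$ are $0$ in the $i$-th slot and $2(|c_{ji}|^2+|\bar c_{ji}|^2)=4$ in the other two, using $|c_{ji}|=1$; all off-diagonal entries vanish, those touching the $i$-th slot because $X_i^{(2)}\ket{e_i^{2,1}}=0$ and the remaining one because it equals $4\re(\bar c_{ki}c_{ji})=0$ by the phase relation $\re\bar c_{ji}c_{ki}=0$. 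Thus $\tilde Z^{(2,\psi_i^{(2)})}=4(I-\ket{e_i}\bra{e_i})$ is real, so $\psi_i^{(2)}\in\A_{\theta_0}^{(2)}$ and Theorem~\ref{thm:sld_ineq_pure} furnishes a POVM $M(\psi_i^{(2)})$ attaining the SLD bound. Conjugating by $K_{\theta_0}^{-1}=\sqrt{J_{\theta_0}^{(U)}}\,O^{\top}$ as in (\ref{eq:z_n}) then gives
\[
J_{\theta_0}^{(C,2,\psi_i^{(2)})}=4\sqrt{J_{\theta_0}^{(U)}}\bigl(I-\ket{w_i}\bra{w_i}\bigr)\sqrt{J_{\theta_0}^{(U)}},
\]
where $\ket{w_i}=O^{\top}\ket{e_i}$ is the eigenvector of $\tilde W$ for the eigenvalue $w_i$.

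By Theorem~\ref{thm:convex_fisher} the randomized measurement therefore realizes
\[
F=\sum_{i=1}^3 s_i\,J_{\theta_0}^{(C,2,\psi_i^{(2)})}=4\sqrt{J_{\theta_0}^{(U)}}\Bigl(I-\sum_{i=1}^3 s_i\ket{w_i}\bra{w_i}\Bigr)\sqrt{J_{\theta_0}^{(U)}},
\]
so that $\tilde F:=\sqrt{J_{\theta_0}^{(U)^{-1}}}F\sqrt{J_{\theta_0}^{(U)^{-1}}}$ is diagonal in the eigenbasis of $\tilde W$ with entries $4(1-s_i)$; in particular $\Tr\tilde F=8$ as soon as $\sum_i s_i=1$, so the trace bound (\ref{eq:ineqTrEven}) is automatically saturated. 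It then remains to match these entries to the optimal profile of Theorem~\ref{thm:gill_massar_both}: equating $4(1-s_i)$ with the $i$-th eigenvalue of the optimal $\tilde F_0$ determines each weight. The case split is exactly governed by whether the matrix cap $\tilde F\le 4I$ coming from (\ref{eq:ineq_matrix}) is inactive (first case, where the unconstrained Gill--Massar optimum has all eigenvalues below $4$) or active (second case, where the $\ket{w_3}$ entry is pinned at $4$, forcing $s_3=0$ and distributing the remaining weight among $w_1,w_2$ by the two-parameter optimum, which reproduces $s_1=\tfrac{\sqrt{w_2}}{\sqrt{w_1}+\sqrt{w_2}},\,s_2=\tfrac{\sqrt{w_1}}{\sqrt{w_1}+\sqrt{w_2}}$). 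I would close by confirming admissibility $s_i\ge0$, $\sum_i s_i=1$, which holds precisely under the stated condition $\tfrac{\sqrt{w_3}}{\sqrt{w_1}+\sqrt{w_2}+\sqrt{w_3}}\lessgtr\tfrac12$.

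The main obstacle is the first step: pinning down $\tilde Z^{(2,\psi_i^{(2)})}$ exactly, which rests on the Dicke-state action formulas (\ref{eq:Xn_1})--(\ref{eq:Xn_3}) together with $|c_{ji}|=1$ and the orthogonality phase relation $\re\bar c_{ji}c_{ki}=0$; once its reality and diagonal form are established, the remainder is the eigenvalue bookkeeping above. A secondary subtlety, already noted after Theorem~\ref{thm:optM3n}, is that the cat-state construction used for $n\ge3$ is unavailable here because the relevant image vectors are no longer orthogonal at $n=2$, which is precisely why the symmetric inputs $\ket{e_i^{2,1}}$ (whose Fisher information is blind to the $\ket{w_i}$ direction rather than concentrated on it) must be used instead.
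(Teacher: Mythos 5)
Your route coincides with the paper's own proof: compute $\tilde Z^{(2,\psi_i^{(2)})}=4\left(I-\ket{e_i}\bra{e_i}\right)$ (the paper does this by applying (\ref{eq:eigen1})--(\ref{eq:eigen4}) to the product form of $\ket{e_i^{2,1}}$; you use the Dicke ladder formulas (\ref{eq:Xn_1})--(\ref{eq:Xn_3}) at $n=2,t=1$ --- an equivalent computation, and your evaluations $X_i^{(2)}\ket{e_i^{2,1}}=0$, norm $4$, and vanishing off-diagonals via $\re\bar c_{ji}c_{ki}=0$ are all correct), invoke reality plus Theorem \ref{thm:sld_ineq_pure} to obtain $M(\psi_i^{(2)})$, conjugate by $K_{\theta_0}^{-1}$, mix via Theorem \ref{thm:convex_fisher}, and match the resulting $\tilde F=4\left(I-\sum_i s_i\ket{w_i}\bra{w_i}\right)$ against the optimizers of Theorem \ref{thm:gill_massar_both}. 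Your explicit case-2 matching ($s_1=\sqrt{w_2}/(\sqrt{w_1}+\sqrt{w_2})$, $s_2=\sqrt{w_1}/(\sqrt{w_1}+\sqrt{w_2})$, $s_3=0$, with the $\ket{w_3}$ entry pinned at $4$) checks out exactly.

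One point deserves flagging, in the very step you (like the paper, whose proof simply asserts ``this is identical to the optimal $F$'') leave as ``equating determines each weight.'' Carrying out the case-1 matching, $4(1-s_i)=8\sqrt{w_i}/\Tr\sqrt{\tilde W}$ gives $s_i=1-2\sqrt{w_i}/\Tr\sqrt{\tilde W}=(\sqrt{w_j}+\sqrt{w_k}-\sqrt{w_i})/(\sqrt{w_1}+\sqrt{w_2}+\sqrt{w_3})$ for $\{i,j,k\}=\{1,2,3\}$, \emph{not} the statement's $s_i=\sqrt{w_i}/\Tr\sqrt{\tilde W}$; the two agree only when $w_1=w_2=w_3$. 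The corrected weights are the ones consistent with everything else: with the blindness structure you yourself observe ($\psi_i^{(2)}$ carries no information in the $\ket{w_i}$ direction, so the heaviest direction must receive the \emph{smallest} $s_3$, whereas the stated formula gives it the largest); with the complementary form of the case-2 weights ($s_1\propto\sqrt{w_2}$); with continuity at the case boundary; and with admissibility, since $s_3\ge 0$ for the corrected weights is exactly $\sqrt{w_3}\le\sqrt{w_1}+\sqrt{w_2}$, i.e.\ the stated condition $\sqrt{w_3}/(\sqrt{w_1}+\sqrt{w_2}+\sqrt{w_3})\le\tfrac12$ --- which vindicates your closing admissibility remark, but only for the corrected weights (the statement's $s_i$ are trivially nonnegative and would render the case split pointless). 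So your method is sound and, when actually executed, exposes what appears to be a misprint in the theorem's case-1 weights; as written, both your proposal and the paper's proof skip precisely the computation where this discrepancy lives. (Minor: the statement's display $\ket{e_i^{+}}\otimes\ket{e_i^{-}}+\ket{e_i^{-}}\otimes\ket{e_i^{-}}$ should read $\ket{e_i^{-}}\otimes\ket{e_i^{+}}$ in the second term, as your Dicke-state reading $\ket{e_i^{2,1}}$ correctly assumes.)
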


\begin{proof}
By using (\ref{eq:eigen1}), (\ref{eq:eigen2}), (\ref{eq:eigen3}),
and (\ref{eq:eigen4}), it can be seen that
\begin{align*}
X_{j}^{(2)}U_{\theta_{0}}^{\otimes2}\ket{\psi_{i}^{(2)}} & =\frac{2}{\sqrt{2}}\left(\bar{c}_{ji}\ket{e_{i}^{+}}\otimes\ket{e_{i}^{+}}+c_{ji}\ket{e_{i}^{-}}\otimes\ket{e_{i}^{-}}\right),\\
X_{k}^{(2)}U_{\theta_{0}}^{\otimes2}\ket{\psi_{i}^{(2)}} & =\frac{2}{\sqrt{2}}\left(\bar{c}_{ki}\ket{e_{i}^{+}}\otimes\ket{e_{i}^{+}}+c_{ki}\ket{e_{i}^{-}}\otimes\ket{e_{i}^{-}}\right),\\
X_{i}^{(2)}U_{\theta_{0}}^{\otimes2}\ket{\psi_{i}^{(2)}} & =0,
\end{align*}
for $(i,j,k)\in\left\{ (1,2,3),(2,3,1),(3,1,2)\right\} $, and we
obtain
\begin{equation}
\tilde{Z}^{(2,\psi_{i}^{(2)})}=\left[\bra{\psi_{i}^{(2)}}U_{\theta_{0}}^{*\otimes2}X_{\alpha}^{(2)}X_{\beta}^{(2)}U_{\theta_{0}}^{\otimes2}\ket{\psi_{i}^{(2)}}\right]_{1\leq\alpha,\beta\leq3}=4\left(I-\ket{e_{i}}\bra{e_{i}}\right),\label{eq:best_2-1}
\end{equation}
where $\ket{e_{1}}=\left(1,0,0\right)^{\top}$, $\ket{e_{2}}=\left(0,1,0\right)^{\top}$,
$\ket{e_{3}}=\left(0,0,1\right)^{\top}$. Thus, due to (\ref{eq:z_n}),
\[
Z_{\theta_{0}}^{(2,\psi_{i}^{(2)})}=\left(K_{\theta_{0}}\right)^{-1}4\left(I-\ket{e_{i}}\bra{e_{i}}\right)\left(K_{\theta_{0}}^{*}\right)^{-1}.
\]
Since $Z_{\theta_{0}}^{(2,\psi_{i}^{(2)})}$ is a real matrix, $\psi_{i}^{(2)}\in\A_{\theta_{0}}^{(n)}$
and the classical Fisher information matrix with respect to the channel
measurement $(\psi_{i}^{(2)},M\left(\psi_{i}^{(2)}\right))$ is
\[
J_{\theta_{0}}^{(C,2,\psi_{i}^{(2)},M\left(\psi_{i}^{(2)}\right))}=Z_{\theta_{0}}^{(2,\psi_{i}^{(2)})}=\left(K_{\theta_{0}}\right)^{-1}4\left(I-\ket{e_{i}}\bra{e_{i}}\right)\left(K_{\theta_{0}}^{*}\right)^{-1}
\]
for $1\leq i\leq3$. 

The classical Fisher information matrix with respect to the randomized
channel measurement
\[
\bigoplus_{i=1}^{3}s_{i}\left(\psi_{i}^{(2)},M(\psi_{i}^{(2)})\right)
\]
is
\begin{align*}
J_{\theta_{0}}^{(C,2,\bigoplus_{i=1}^{3}s_{i}\left(\psi_{i}^{(2)},M(\psi_{i}^{(2)})\right))} & =4\left(K_{\theta_{0}}\right)^{-1}\left\{ I-\sum_{i=1}^{3}s_{i}\ket{e_{i}}\bra{e_{i}}\right\} \left(K_{\theta_{0}}^{*}\right)^{-1}\\
 & =4\sqrt{J_{\theta_{0}}^{(U)}}\left\{ I-\sum_{i=1}^{3}s_{i}\ket{w_{i}}\bra{w_{i}}\right\} \sqrt{J_{\theta_{0}}^{(U)}},
\end{align*}
and this is identical to the optimal $F$ in Theorem \ref{thm:gill_massar_both}. 
\end{proof}

\section{Plots of classical Fisher information matrices\label{sec:plots}}

To visualize the set $\F_{\theta_{0}}^{(n)}$ of classical Fisher
information matrices of three-dimensional parametric unitary channel
model $\left\{ \Gamma_{\theta}^{\otimes n}:\B(\C^{2})\to\B(\C^{2})\mid\,\theta\in\Theta\subset\R^{3}\right\} $,
in Figure \ref{fig:fishers}, we show boundaries of 
\[
\tilde{\F}_{\theta_{0}}^{(n)}:=\left\{ (F_{11},F_{22},F_{33})\mid\,F\in J^{(U)^{-1/2}}\F_{\theta_{0}}^{(n)}J^{(U)^{-1/2}},\,\Tr F=n^{2}+2n\right\} 
\]
 for $n=2$ (top-left), $n=3$ (top-right), $n=4$ (bottom-left),
and $n=5$ (bottom-right). The outside triangle's vertices correspond
to $(F_{11},F_{22},F_{33})=(n^{2}+2n,0,0),(0,n^{2}+2n,0),(0,0,n^{2}+2n)$.
The inner triangles painted in gray indicate classical Fisher information
matrices that can be realized by using the randomized channel measurements
given in Theorem \ref{thm:optM3n} for $n=3,4,5$. For $n=2$, the
inner triangle is same as $\tilde{\F}_{\theta_{0}}^{(n)}$, and it
is realized by using randomized channel measurements given in Theorem
\ref{thm:n2}. The curves surrounding the inner triangles are the
boundaries of $\tilde{\F}_{\theta_{0}}^{(n)}$ obtained by numerical
calculations for $n=3,4,5$. These curves imply that there exist channel
measurements to achieve the equality of (\ref{eq:gill_massar}) without
satisfying (\ref{eq:ineq_n_cond}). See Appendix \ref{sec:numerical_calculations}
for details about the numerical calculations to obtain the curves.
The three black dots correspond to optimal channel measurements when
$d=2$ and weight matrix $W=J_{\theta_{0}}^{(U)}$ given in Theorem
\ref{thm:gill_massar2}. 

\begin{figure}
\begin{centering}
\includegraphics[scale=0.45]{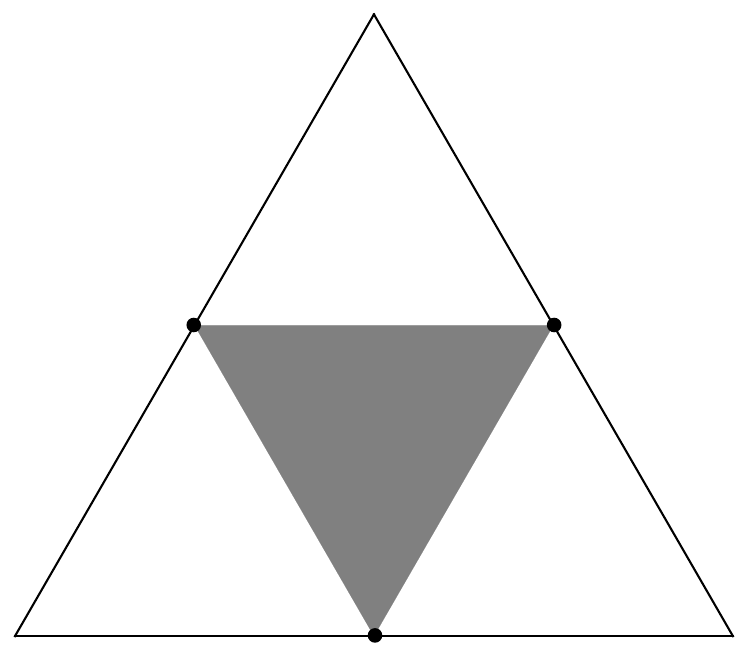}\includegraphics[scale=0.5]{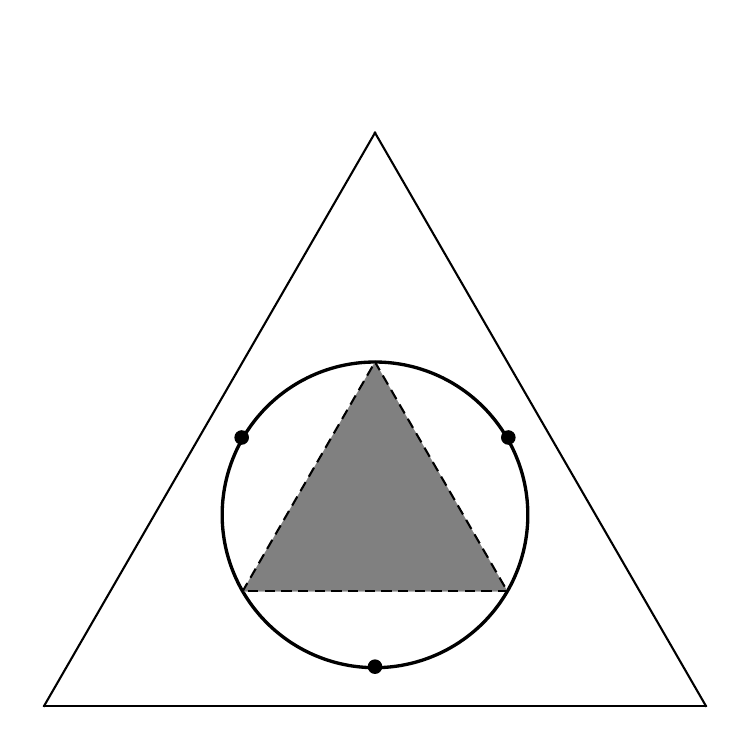}
\par\end{centering}
\centering{}\includegraphics[scale=0.5]{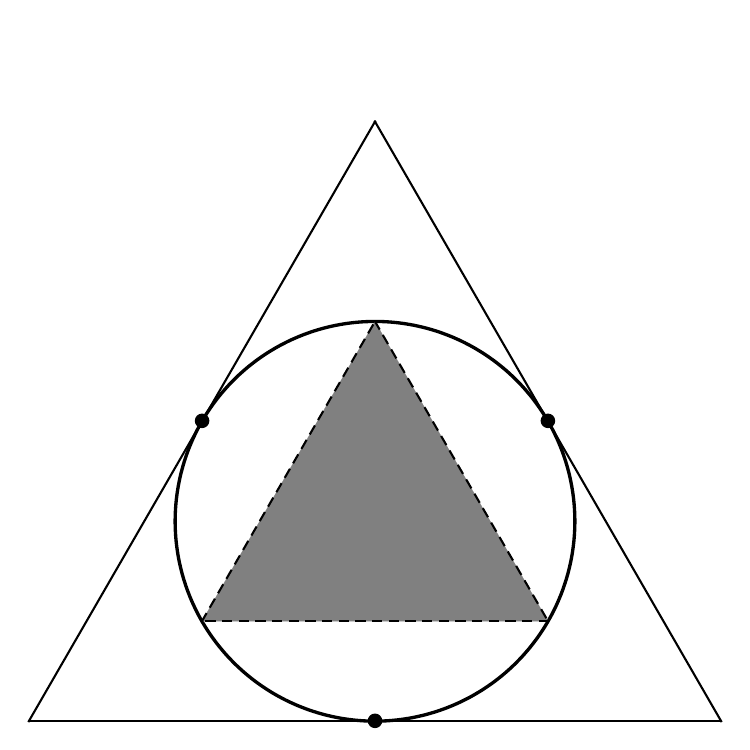}\includegraphics[scale=0.5]{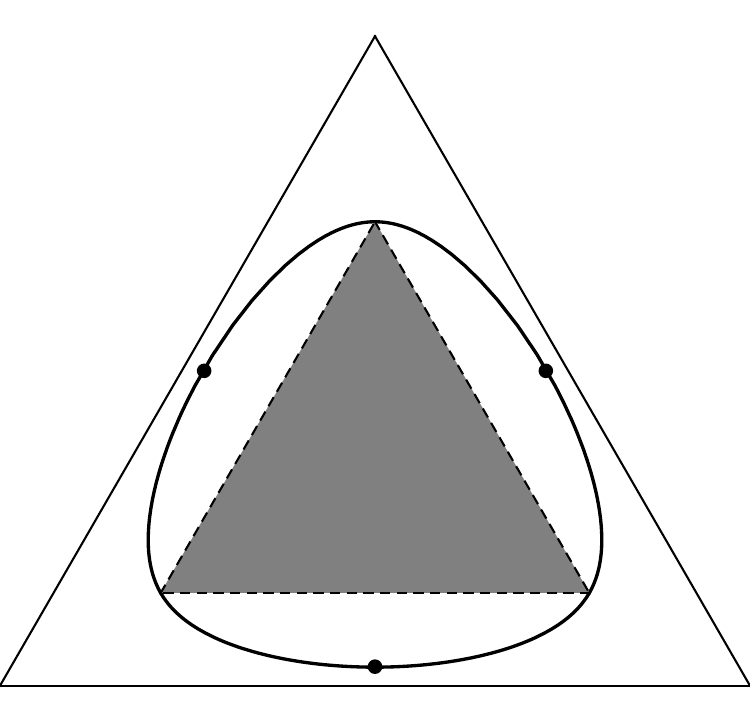}\caption{\label{fig:fishers} These figures show boundaries of $\tilde{\protect\F}_{\theta_{0}}^{(n)}:=\left\{ (F_{11},F_{22},F_{33})\mid\,F\in J^{(U)^{-1/2}}\protect\F_{\theta_{0}}^{(n)}J^{(U)^{-1/2}},\,\protect\Tr F=n^{2}+2n\right\} $,
where $\protect\F_{\theta_{0}}^{(n)}$ is the set of classical Fisher
information matrices for three-dimensional parametric unitary channel
models $\left\{ \Gamma_{\theta}^{\otimes n}:\protect\B(\protect\C^{2})\to\protect\B(\protect\C^{2})\mid\,\theta\in\Theta\subset\protect\R^{3}\right\} $
for $n=2$ (top-left), $n=3$ (top-right), $n=4$ (bottom-left), and
$n=5$ (bottom-right). The inner triangles painted in gray indicate
classical Fisher information matrices that can be realized by using
the randomized channel measurements given in Theorem \ref{thm:optM3n}
for $n=3,4,5$. For $n=2$, the inner triangle is same as $\tilde{\protect\F}_{\theta_{0}}^{(n)}$,
and it is realized by using randomized channel measurements given
in Theorem \ref{thm:n2}. The curves surrounding the inner triangles
are the boundaries of $\tilde{\protect\F}_{\theta_{0}}^{(n)}$ for
$n=3,4,5$. The three black dots correspond to optimal channel measurements
when $d=2$ and weight matrix $W=J_{\theta_{0}}^{(U)}$ given in Theorem
\ref{thm:gill_massar2}. }
\end{figure}

\section{Conclusion\label{sec:Conclusion}}

In this paper, we defined a Fisher information matrix $J_{\theta_{0}}^{(U)}$
at $\theta_{0}\in\Theta$ for the $SU(2)$ channel model $\G^{(n)}:=\left\{ \Gamma_{\theta}^{\otimes n}:\rho\mapsto U_{\theta}^{\otimes n}\rho U_{\theta}^{*\otimes n}\mid\theta\in\Theta\subset\R^{d}\right\} $
with $d=1,2,3$, and we showed a matrix inequality $n^{2}V_{\theta_{0}}[\psi,M,\hat{\theta}]\geq J_{\theta_{0}}^{(U)}$
for covariance matrix $V_{\theta_{0}}[\psi,M,\hat{\theta}]$ of locally
unbiased estimator $\left(\psi,M,\hat{\theta}\right)$. When $n=1$,
this equality is achievable by a maximally entangled input state.
When $n=2$ and $d=2$, this equality is achievable by a pure state
input given in Theorem \ref{thm:bound_matrix} without an ancilla
Hilbert space. 

When $d=3$, we proved a Gill and Massar type inequality $(n^{2}+2n)\Tr WV_{\theta_{0}}^{(n)}[\rho,M,\hat{\theta}]\geq c_{\theta_{0},W}$
with a lower bound $c_{\theta_{0},W}:=\left(\Tr\sqrt{J_{\theta_{0}}^{(U)^{-1/2}}WJ_{\theta_{0}}^{(U)^{-1/2}}}\right)^{2}$
for a $d\times d$ real positive matrix $W$. When $n\geq\max\left\{ 3,\frac{\sqrt{w_{2}}+\sqrt{w_{3}}}{\sqrt{w_{1}}}-1\right\} $,
the lower bound $c_{\theta_{0},W}$ is achievable by using a randomized
channel measurement without an ancilla Hilbert space, where $w_{1},w_{2},w_{3}$
are eigenvalues of $\tilde{W}:=J_{\theta_{0}}^{(U)^{-1/2}}WJ_{\theta_{0}}^{(U)^{-1/2}}$
such that $0\leq w_{1}\leq w_{2}\leq w_{3}$. When $d=3$ and $n=2$,
more informative and achievable bounds were obtained in Theorem \ref{thm:gill_massar_both}. 

When $d=2$ we proved $2\left\lfloor \frac{n^{2}+2n}{2}\right\rfloor \Tr WV_{\theta_{0}}^{(n)}[\rho,M,\hat{\theta}]\geq c_{\theta_{0},W}$,
and the equality is achievable if $W=J_{\theta_{0}}^{(U)}$. For an
arbitrary weight $W$ and $d=2$, the lower bound $c_{\theta_{0},W}$
is asymptotically achievable by using a sequence of randomized channel
measurements. 

There are many examples of $SU(2)$ structures in nature. For example,
a three dimensional magnetic field has $SU(2)$ structures, and its
estimation has been well studied\cite{rev9,rev12,rev13}. The results
of this paper could update these research in that it addresses general
arbitrary weights and general strategies. This paper is limited to
the $SU(2)$ channel model among channel estimation. However, as can
be seen from the Fig. \ref{fig:fishers}, the $SU(2)$ model estimation
with arbitrary weights has a sufficiently complex structure. To extend
the theory of this paper to the $SU(d)$ channel model with $d\geq3$
seems to be a very challenging problem. 

\appendix

\section*{Appendices}

\section{Generalized purification \label{sec:purification}}

In this appendix, we show a generalized purification as follows.

\begin{lem}
\label{lem:to_pure}For any quantum state $\rho\in\S(\H\otimes\H_{a})$
on a tensor product Hilbert space $\H\otimes\H_{a}$ of two Hilbert
spaces $\H$ and $\H_{a}$, there exists another Hilbert space $\H_{b}$
and a pure state $\ket{\psi}\bra{\psi}\in\S(\H\otimes\H_{b})$ on
$\H\otimes\H_{b}$ and a quantum channel $\Lambda:\B(\H_{b})\to\B(\H_{a})$
such that
\begin{equation}
\id_{\H}\otimes\Lambda(\ket{\psi}\bra{\psi})=\rho\label{eq:cest_lem_pure}
\end{equation}
and
\begin{equation}
\dim\H_{b}\leq\dim\H.\label{eq:cest_lem_dim}
\end{equation}
\end{lem}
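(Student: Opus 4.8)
The plan is to reduce the statement to the standard uniqueness-of-purification theorem by comparing two extensions of the reduced state on $\H$. First I would form the marginal $\sigma:=\Tr_{\H_a}\rho\in\S(\H)$ and fix its spectral decomposition $\sigma=\sum_{i=1}^{r}\lambda_i\ket{u_i}\bra{u_i}$ with $\lambda_i>0$ and $r=\operatorname{rank}\sigma\leq\dim\H$. Taking $\H_b=\C^{r}$ with an orthonormal basis $\{\ket{b_i}\}_{i=1}^{r}$, I set
\[
\ket{\psi}:=\sum_{i=1}^{r}\sqrt{\lambda_i}\,\ket{u_i}\otimes\ket{b_i}\in\H\otimes\H_b ,
\]
which is a purification of $\sigma$ with $\dim\H_b=r\leq\dim\H$, so that (\ref{eq:cest_lem_dim}) holds. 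This choice is forced by the fact that any channel on the second tensor factor preserves the $\H$-marginal, so the $\H$-marginal of $\ket{\psi}\bra{\psi}$ must agree with that of $\rho$, namely $\sigma$.

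Next I would take any purification $\ket{\Phi}\in\H\otimes\H_a\otimes\H_c$ of $\rho$ itself, with $\dim\H_c=\operatorname{rank}\rho$, so that $\Tr_{\H_c}\ket{\Phi}\bra{\Phi}=\rho$. Tracing out $\H_a\otimes\H_c$ gives $\sigma$ once more, so $\ket{\psi}$ and $\ket{\Phi}$ are two purifications of the same state $\sigma\in\S(\H)$. Because $\dim\H_b=\operatorname{rank}\sigma\leq\operatorname{rank}\rho=\dim\H_c\leq\dim(\H_a\otimes\H_c)$, the uniqueness-of-purification theorem supplies an isometry $V:\H_b\to\H_a\otimes\H_c$ with $(\id_\H\otimes V)\ket{\psi}=\ket{\Phi}$.

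Finally I would define $\Lambda:\B(\H_b)\to\B(\H_a)$ by $\Lambda(X):=\Tr_{\H_c}\!\big(VXV^{*}\big)$. This is completely positive, being the composition of $X\mapsto VXV^{*}$ with the partial trace $\Tr_{\H_c}$, and it is trace preserving since $\Tr\Lambda(X)=\Tr(VXV^{*})=\Tr(XV^{*}V)=\Tr X$ using $V^{*}V=I_{\H_b}$. The verification of (\ref{eq:cest_lem_pure}) is then immediate:
\[
(\id_\H\otimes\Lambda)(\ket{\psi}\bra{\psi})=\Tr_{\H_c}\!\big[(\id_\H\otimes V)\ket{\psi}\bra{\psi}(\id_\H\otimes V)^{*}\big]=\Tr_{\H_c}\ket{\Phi}\bra{\Phi}=\rho .
\]
The only delicate point is the direction of the isometry in the uniqueness theorem, which is precisely why the minimal purification $\ket{\psi}$ must live on the smaller system $\H_b$; the elementary inequality $\operatorname{rank}\sigma\leq\operatorname{rank}\rho$ guarantees that the dimensions line up so that $V$ maps $\H_b$ into $\H_a\otimes\H_c$ rather than the reverse, making $\Lambda$ well defined as a channel.
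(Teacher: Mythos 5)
Your construction is, at its core, the same argument the paper gives, with the key step outsourced to a standard theorem rather than done by hand. The paper first purifies $\rho$ to $\ket{\psi'}\in\H\otimes\H_{a}\otimes\H_{a}'$, then takes the singular value (Schmidt) decomposition of $\ket{\psi'}$ across the cut $\H$ versus $\H_{a}\otimes\H_{a}'$; since the $\H$-marginal of $\ket{\psi'}$ is exactly your $\sigma$, the resulting $\ket{\psi}=\sum_{k}\sqrt{q_{k}}\ket{f_{k}}\ket{g_{k}}$ is precisely your minimal purification of $\sigma$, the embedding $E\ket{g_{k}}=\ket{f_{k}'}$ is precisely your isometry $V$, and $\Lambda(B)={\rm Tr}_{\H_{a}'}EBE^{*}$ matches your $\Lambda(X)={\rm Tr}_{\H_{c}}(VXV^{*})$. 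So invoking the uniqueness-of-purification theorem buys you brevity, while the paper's version is self-contained; structurally the two proofs coincide.

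There is, however, a genuine error in the one step you single out as delicate: the inequality $\operatorname{rank}\sigma\leq\operatorname{rank}\rho$ is false in general. For a maximally entangled pure state $\rho$ on $\C^{2}\otimes\C^{2}$ one has $\operatorname{rank}\rho=1$ but $\sigma={\rm Tr}_{\H_{a}}\rho=\frac{1}{2}I$ of rank $2$, so the first link of your chain $\dim\H_{b}=\operatorname{rank}\sigma\leq\operatorname{rank}\rho=\dim\H_{c}\leq\dim(\H_{a}\otimes\H_{c})$ breaks. Fortunately the conclusion you actually need, $\dim\H_{b}\leq\dim(\H_{a}\otimes\H_{c})$, holds automatically for a different reason: $\operatorname{rank}\sigma$ equals the Schmidt rank of $\ket{\Phi}$ across the cut $\H$ versus $\H_{a}\otimes\H_{c}$, hence is at most $\min\left\{ \dim\H,\dim(\H_{a}\otimes\H_{c})\right\} $; equivalently, writing the partial trace as $\sigma=\sum_{k}(I\otimes\bra{k})\rho(I\otimes\ket{k})$ over a basis of $\H_{a}$ gives $\operatorname{rank}\sigma\leq\dim\H_{a}\cdot\operatorname{rank}\rho$. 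Replace the false inequality by this observation and your proof is correct as written; note that in the entangled example above the bound is tight, $\dim(\H_{a}\otimes\H_{c})=2=\operatorname{rank}\sigma$, so the direction of the isometry was never actually in danger.
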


\begin{proof}
Let 
\[
\rho=\sum_{i=1}^{r}p_{i}\ket{e_{i}}\bra{e_{i}}
\]
be a spectral decomposition of $\rho$ with an orthonormal basis $\left\{ \ket{e_{i}}\right\} _{i}$
of $\H\otimes\H_{a}$ and non-negative values $\left\{ p_{i}\right\} _{i}$
such that $\sum_{i}p_{i}=1$. Let us consider a vector
\[
\ket{\psi'}=\sum_{i=1}^{r}\sqrt{p_{i}}\ket{e_{i}}\ket{e_{i}'}\in\H\otimes\H_{a}\otimes\H_{a}'
\]
with an orthonormal basis $\left\{ \ket{e_{i}'}\right\} _{i=1}^{r}\subset\H_{a}'$
of a Hilbert space $\H_{a}'$ ($\dim\H_{a}'=r$). Note that ${\rm Tr}_{\H_{a}'}\ket{\psi'}\bra{\psi'}=\rho$.
By using the singular value decomposition, $\ket{\psi'}$ can be decomposed
to
\[
\ket{\psi'}=\sum_{k=1}^{s}\sqrt{q_{k}}\ket{f_{k}}\ket{f_{k}'},
\]
where $\left\{ \ket{f_{k}}\right\} _{k=1}^{s}\subset\H$ and $\left\{ \ket{f_{k}'}\right\} _{k=1}^{s}\subset\H_{a}\otimes\H_{a}'$
are orthonormal sets, and $\left\{ q_{k}\right\} _{k}$ are positive
values such that $\sum_{k=1}^{s}q_{k}=1$ with $s\leq\dim\H$. 

Let $\H_{b}$ be an $s$-dimensional Hilbert space with an orthonormal
basis $\left\{ \ket{g_{k}}\right\} _{k=1}^{s}\subset\H_{b}$, and
let $E:\H_{b}\rightarrow\H_{a}\otimes\H_{a}'$ be an embedding such
that $E\ket{g_{k}}=\ket{f_{k}'}$ for $k=1,\dots,s$. It can be seen
that a vector
\[
\ket{\psi}=\sum_{k=1}^{s}\sqrt{q_{k}}\ket{f_{k}}\ket{g_{k}}\in\H\otimes\H_{b}
\]
and a channel $\Lambda$ defined by
\[
\Lambda(B)={\rm Tr}_{\H_{a}'}EBE^{*}\qquad(B\in\B(\H_{b}))
\]
satisfy (\ref{eq:cest_lem_pure}). In fact,
\begin{align*}
\id_{\H}\otimes\Lambda(\ket{\psi}\bra{\psi}) & ={\rm Tr}_{\H_{a}'}(I\otimes E)\ket{\psi}\bra{\psi}(I\otimes E)^{*}\\
 & ={\rm Tr}_{\H_{a}'}\ket{\psi'}\bra{\psi'}=\rho.
\end{align*}
\end{proof}

\section{SLD bound and Holevo bound\label{sec:SLD_Holevo}}

Let $\left\{ \rho_{\theta}\in\S(\H)\mid\,\theta\in\Theta\subset\R^{d}\right\} $
be a quantum statistical model on a finite dimensional Hilbert space
$\H$. It is known that any locally unbiased estimator $(M,\hat{\theta})$
at $\theta_{0}\in\Theta$ satisfies inequalities
\begin{align}
\Tr WV_{\theta_{0}}[M,\hat{\theta}] & \geq C_{\theta_{0},W}^{(H)}\label{eq:holevo_bound}\\
 & \geq C_{\theta_{0},W}^{(S)},\label{eq:sld_bound}
\end{align}
where $C_{\theta_{0},W}^{(H)}$ is the Holevo bound for a weight $W>0$
defined by
\begin{align}
C_{\theta_{0},W}^{(H)} & :=\min_{B}\left\{ \left.\Tr WZ(B)+\Tr\left|\sqrt{W}{\rm Im}Z(B)\sqrt{W}\right|\right|Z_{ij}(B)=\Tr\rho_{\theta_{0}}B_{j}B_{i},\right.\\
 & \qquad\left.B_{1},\dots,B_{d}\text{ are Hermitian operators on \ensuremath{\H} such that }\Tr\partial_{i}\rho_{\theta_{0}}B_{j}=\delta_{ij}\right\} ,\label{eq:holevo_def}
\end{align}
and $C_{\theta_{0},W}^{(S)}$ is the SLD bound defined by $\Tr WJ_{\theta_{0}}^{(S)^{-1}}$
with the SLD Fisher information matrix $J_{\theta_{0}}^{(S)}$. In
general, the equation of (\ref{eq:holevo_bound}) is not achievable,
and Nagaoka-Hayashi bound and conic programming bound have been proposed
as more improved lower bounds \cite{nagaoka_hayashi,conic}. For pure
state models treated in this paper, the equation of (\ref{eq:holevo_bound})
is achievable (see Appendix \ref{sec:pure_holevo}). The following
lemma is known as a relation between the Holevo bound and the SLD
bound. 
\begin{lem}
\label{lem:sld_holevo}The Holevo bound and the SLD bound of any quantum
statistical model $\left\{ \rho_{\theta}\in\S(\H)\mid\,\theta\in\Theta\subset\R^{d}\right\} $
satisfy 
\[
C_{\theta_{0},W}^{(H)}=C_{\theta_{0},W}^{(S)}
\]
 for any weight $W$ if and only if
\[
\Tr\rho_{\theta_{0}}L_{\theta_{0},i}L_{\theta_{0},j}\in\R
\]
for $1\leq i,j\leq d$, where $L_{\theta_{0},i}$ is the SLD at $\theta_{0}\in\Theta$
in the $i$th direction. 
\end{lem}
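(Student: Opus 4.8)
The plan is to prove both directions by comparing an optimal $B$ in the Holevo problem (\ref{eq:holevo_def}) with the distinguished feasible point furnished by the rescaled SLDs, exploiting the fact that the real part of $Z(B)$ alone already reproduces the SLD bound. First I would introduce the real bilinear form $\langle A,B\rangle_S:=\re\Tr\rho_{\theta_0}AB$ on Hermitian operators and record two identities. Since $\partial_i\rho_{\theta_0}=\frac{1}{2}(\rho_{\theta_0}L_{\theta_0,i}+L_{\theta_0,i}\rho_{\theta_0})$, the local unbiasedness constraint $\Tr\partial_i\rho_{\theta_0}B_j=\delta_{ij}$ is exactly $\langle L_{\theta_0,i},B_j\rangle_S=\delta_{ij}$; and $\re Z_{ij}(B)=\langle B_i,B_j\rangle_S$. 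Writing each feasible $B_j=\tilde L_j+N_j$, where $\tilde L_j:=\sum_k\left(J_{\theta_0}^{(S)^{-1}}\right)_{jk}L_{\theta_0,k}$ lies in the span of the SLDs and $N_j$ is $\langle\cdot,\cdot\rangle_S$-orthogonal to that span, the constraint pins the span-component to be exactly $\tilde L_j$, and orthogonality gives $\re Z(B)=J_{\theta_0}^{(S)^{-1}}+G$ with $G:=[\langle N_i,N_j\rangle_S]\ge0$. Hence $\Tr W\re Z(B)=\Tr WJ_{\theta_0}^{(S)^{-1}}+\Tr WG\ge C_{\theta_0,W}^{(S)}$, with equality (for $W>0$) if and only if $G=0$, i.e. $N_j\rho_{\theta_0}^{1/2}=0$ for all $j$; in that case the cross terms $\Tr\rho_{\theta_0}\tilde L_jN_i$ and $\Tr\rho_{\theta_0}N_jN_i$ all vanish, so the full matrix collapses to $Z(B)=Z(\tilde L)$.

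For the ``if'' direction, assume $\Tr\rho_{\theta_0}L_{\theta_0,i}L_{\theta_0,j}\in\R$. Then $Z(\tilde L)=J_{\theta_0}^{(S)^{-1}}\left[\Tr\rho_{\theta_0}L_{\theta_0,j}L_{\theta_0,i}\right]J_{\theta_0}^{(S)^{-1}}$ is a real matrix with $\re Z(\tilde L)=J_{\theta_0}^{(S)^{-1}}$ and $\im Z(\tilde L)=0$. Evaluating the Holevo functional at the feasible point $B=\tilde L$ yields $\Tr WZ(\tilde L)+0=\Tr WJ_{\theta_0}^{(S)^{-1}}=C_{\theta_0,W}^{(S)}$, so $C_{\theta_0,W}^{(H)}\le C_{\theta_0,W}^{(S)}$; combined with the reverse inequality (\ref{eq:sld_bound}) this gives $C_{\theta_0,W}^{(H)}=C_{\theta_0,W}^{(S)}$ for every $W$.

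For the ``only if'' direction I would argue by contraposition. Suppose $\Tr\rho_{\theta_0}L_{\theta_0,i}L_{\theta_0,j}\notin\R$ for some $i,j$, so the real antisymmetric matrix $\im Z(\tilde L)=J_{\theta_0}^{(S)^{-1}}\im\left[\Tr\rho_{\theta_0}L_{\theta_0,j}L_{\theta_0,i}\right]J_{\theta_0}^{(S)^{-1}}$ is nonzero. Fix any $W>0$ and let $B^{*}$ be optimal in (\ref{eq:holevo_def}). Since $\Tr W\re Z(B^{*})\ge C_{\theta_0,W}^{(S)}$ and the term $\Tr|\sqrt W\,\im Z(B^{*})\sqrt W|$ is nonnegative, the hypothetical equality $C_{\theta_0,W}^{(H)}=C_{\theta_0,W}^{(S)}$ would force both $\Tr W\re Z(B^{*})=C_{\theta_0,W}^{(S)}$ and $\Tr|\sqrt W\,\im Z(B^{*})\sqrt W|=0$. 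By the decomposition the first equality forces $G=0$, hence $Z(B^{*})=Z(\tilde L)$ and $\im Z(B^{*})=\im Z(\tilde L)\ne0$; but then $\sqrt W\,\im Z(\tilde L)\sqrt W$ is a nonzero real antisymmetric matrix, whose eigenvalues come in pairs $\pm i\lambda$ with some $\lambda\ne0$, so its absolute-value trace is strictly positive, contradicting the second equality. Thus $C_{\theta_0,W}^{(H)}>C_{\theta_0,W}^{(S)}$ for every $W>0$, and equality for all $W$ fails.

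The main obstacle is the orthogonal-decomposition step of the first paragraph: establishing cleanly that minimizing $\Tr W\re Z(B)$ over locally unbiased $B$ reproduces $C_{\theta_0,W}^{(S)}$ and, crucially, that at that minimum the \emph{entire} matrix $Z(B)$ — imaginary part included — reduces to $Z(\tilde L)$. This needs care with the degenerate form $\langle\cdot,\cdot\rangle_S$: one must control the operator components supported off $\mathrm{supp}\,\rho_{\theta_0}$, using $\langle N_j,N_j\rangle_S=0\Rightarrow N_j\rho_{\theta_0}^{1/2}=0\Rightarrow\rho_{\theta_0}N_j=N_j\rho_{\theta_0}=0$, so that the cross terms entering $\im Z(B^{*})$ genuinely vanish. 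Once this is in place, the two one-line sandwiching arguments above finish both directions.
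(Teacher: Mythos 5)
Your proposal is correct and follows essentially the same route as the paper: the paper likewise decomposes each feasible $B_i=B_i^{(S)}+K_i$ with $B_i^{(S)}=\sum_j\left[J_{\theta_0}^{(S)^{-1}}\right]^{ij}L_{\theta_0,j}$, identifies $C_{\theta_0,W}^{(S)}$ as the minimum of the real-part term over the same constraint set, and concludes equality holds iff $\Tr\left|\sqrt{W}\,{\rm Im}\,Z(B^{(S)})\sqrt{W}\right|=0$, which is exactly the realness of $\Tr\rho_{\theta_0}L_{\theta_0,i}L_{\theta_0,j}$. Your write-up merely makes explicit two points the paper leaves implicit — that $\Tr WG=0$ with $W>0$ forces $N_j\rho_{\theta_0}^{1/2}=0$ so that the cross terms in the \emph{imaginary} part also vanish at the optimum, and that a nonzero real antisymmetric ${\rm Im}\,Z(\tilde L)$ has strictly positive absolute-value trace after conjugation by $\sqrt{W}$ — both of which are sound.
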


\begin{proof}
We can see that observable $B_{i}$ in (\ref{eq:holevo_def}) satisfies
\[
\Tr\partial_{i}\rho_{\theta_{0}}B_{j}={\rm Re}\Tr\rho_{\theta_{0}}L_{\theta_{0},i}B_{j}=\delta_{ij}.
\]
It can be seen that
\begin{align}
C_{\theta_{0},W}^{(S)} & =\min_{B}\left\{ \left.\Tr WZ(B)\right|Z_{ij}(B)=\Tr\rho_{\theta_{0}}B_{j}B_{i},\right.\\
 & \qquad\left.B_{1},\dots,B_{d}\text{ are Hermitian operators on \ensuremath{\H} such that }\Tr\partial_{i}\rho_{\theta_{0}}B_{j}=\delta_{ij}\right\} ,\label{eq:sld_min}
\end{align}
and the minimum is achieved by observables defined by
\[
B_{i}^{(S)}:=\sum_{j=1}^{d}\left[J_{\theta}^{(S)^{-1}}\right]^{ij}L_{\theta_{0},j}
\]
because
\begin{align*}
{\rm Re}Z(B)_{ij} & ={\rm Re}\Tr\rho_{\theta_{0}}\left(B_{j}^{(S)}+K_{j}\right)\left(B_{i}^{(S)}+K_{i}\right)\\
 & ={\rm Re}\Tr\rho_{\theta_{0}}B_{j}^{(S)}B_{i}^{(S)}+{\rm Re}\Tr\rho_{\theta_{0}}K_{j}K_{i}\\
 & ={\rm Re}Z(B^{(S)})_{ij}+{\rm Re}Z(K)_{ij}\\
 & ={\rm Re}\left[J_{\theta}^{(S)^{-1}}\right]_{ij}+{\rm Re}Z(K)_{ij},
\end{align*}
where $K_{i}=B_{i}-B_{i}^{(S)}$. Therefore $C_{\theta_{0},W}^{(S)}=C_{\theta_{0},W}^{(H)}$
if and only if $\Tr\left|\sqrt{W}{\rm Im}Z(B^{(S)})\sqrt{W}\right|=0$.
This condition is equivalent to $\Tr\rho_{\theta_{0}}L_{\theta_{0},i}L_{\theta_{0},j}\in\R$. 
\end{proof}

\section{Holevo bound for pure state model\label{sec:pure_holevo}}
\begin{thm}
\label{thm:pure_holevo}Let $\left\{ \rho_{\theta}\mid\,\theta\in\Theta\subset\R^{d}\right\} $
be a quantum statistical model comprising pure states on a finite
dimensional Hilbert space $\H$, and let $C_{\theta_{0},W}^{(H)}$
be the Holevo bound at $\theta_{0}\in\Theta$ for a given weight $W>0$.
There exist a locally unbiased estimator $\left(M,\hat{\theta}\right)$
at $\theta_{0}\in\Theta$ such that
\begin{align}
\Tr WV_{\theta_{0}}[M,\hat{\theta}] & =\Tr WJ_{\theta_{0}}^{(C,M)^{-1}}=C_{\theta_{0},W}^{(H)}\\
 & \geq C_{\theta_{0},W}^{(S)},\label{eq:pure_holevo2}
\end{align}
where $J_{\theta_{0}}^{(C,M)}$ is the classical Fisher information
matrix with restrict to the POVM $M$. 

The inequality (\ref{eq:pure_holevo2}) for any $W>0$ implies
\begin{equation}
J_{\theta_{0}}^{(C)}(M)\leq J_{\theta_{0}}^{(S)},\label{eq:pure_sld_bound}
\end{equation}
and the equality is achievable by a projection valued measurement
$M$ if and only if SLDs $\left\{ L_{\theta_{0},i}\right\} _{i=1}^{d}$
satisfy
\begin{equation}
\Tr\rho_{\theta_{0}}L_{\theta_{0},i}L_{\theta_{0},j}\in\R\label{eq:pure_sld_achieve}
\end{equation}
for $1\leq i,j\leq d$.
\end{thm}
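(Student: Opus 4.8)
The first displayed chain $\Tr WV_{\theta_{0}}[M,\hat\theta]=\Tr W J^{(C,M)^{-1}}=C^{(H)}_{\theta_{0},W}\ge C^{(S)}_{\theta_{0},W}$ separates into known and new content. The inequality $C^{(H)}_{\theta_{0},W}\ge C^{(S)}_{\theta_{0},W}$ and the bound $\Tr WV\ge C^{(H)}_{\theta_{0},W}$ are the general Holevo/SLD facts recorded in Appendix~\ref{sec:SLD_Holevo} as (\ref{eq:holevo_bound})--(\ref{eq:sld_bound}), and the leftmost equality $\Tr WV=\Tr W J^{(C,M)^{-1}}$ is the classical Cram\'er--Rao equality realized by the estimator displayed just after (\ref{eq:classical_cramer_u}). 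Thus the only substantive assertion is the middle equality: for the given $W>0$ one must produce a single POVM $M$ whose classical Fisher information saturates the Holevo value, $\Tr W J^{(C,M)^{-1}}=C^{(H)}_{\theta_{0},W}$.

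The plan for this achievability is to phrase the Holevo program in tangent data. Put $\ket{l_i}=L_{\theta_{0},i}\ket\psi$; normalization gives $\braket{\psi}{l_i}=\Tr\rho_{\theta_{0}}L_{\theta_{0},i}=0$. For Hermitian $B_j$ the vectors $\ket{b_j}:=B_j\ket\psi$ satisfy $\re\braket{l_i}{b_j}=\Tr\partial_i\rho_{\theta_{0}}B_j=\delta_{ij}$ and $Z(B)_{ij}=\braket{b_j}{b_i}$. Since $\braket{\psi}{b_j}=\Tr\rho_{\theta_{0}}B_j$ is real, discarding the $\ket\psi$-component of each $\ket{b_j}$ leaves $\im Z(B)$ unchanged and only decreases $\Tr W\,\re Z(B)$; hence the minimization may be taken over $\ket{b_j}\perp\ket\psi$ inside $\mathcal{K}\ominus\ket\psi$, with $\mathcal{K}:=\mathrm{span}_{\C}\{\ket\psi,\ket{l_1},\dots,\ket{l_d}\}$. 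Fixing an optimal family $\{\ket{b_j}\}$, I would set $S:=\im[\braket{b_j}{b_i}]$, a real antisymmetric matrix encoding the commutators $\bra\psi[B_i,B_j]\ket\psi=-2\ii\,S_{ij}$, and bring $S$ to block form $\bigoplus_k\left(\begin{smallmatrix}0&\lambda_k\\-\lambda_k&0\end{smallmatrix}\right)\oplus 0$ by a real orthogonal change of the parameter frame.

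The technical heart, and the step I expect to be hardest, is building the achieving POVM from this block structure. On the kernel of $S$ the transformed observables have real Gram matrix and commute on $\ket\psi$, so they are jointly realizable by a projection-valued measurement reproducing the covariance $\re Z(B)$. On each $2\times2$ block the two observables obey a canonical-commutation-type relation on $\ket\psi$, and saturating the associated $\Tr\!\left|\sqrt W\,\im Z(B)\,\sqrt W\right|$ term demands a ``canonical'' (coherent-state/heterodyne-type) POVM, possibly on an ancilla-extended space; the delicate point is to verify that such a block measurement yields exactly the prescribed block covariance and closes the gap. Assembling the blocks into a single POVM $M$ then gives $\Tr W J^{(C,M)^{-1}}=C^{(H)}_{\theta_{0},W}$, and the Cram\'er--Rao-optimal estimator attached to $M$ completes the locally unbiased pair.

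For the matrix statements, note first that for \emph{every} POVM $M$ the Cram\'er--Rao-optimal estimator is locally unbiased, so (\ref{eq:sld_bound}) gives $\Tr W J^{(C,M)^{-1}}=\Tr WV\ge C^{(S)}_{\theta_{0},W}=\Tr W J^{(S)^{-1}}$ for all $W>0$; testing against $W=\ket v\bra v+\varepsilon I$ and letting $\varepsilon\downarrow0$ forces $J^{(C,M)^{-1}}\ge J^{(S)^{-1}}$, i.e.\ (\ref{eq:pure_sld_bound}). For the equality clause I would invoke Lemma~\ref{lem:sld_holevo}: $C^{(H)}_{\theta_{0},W}=C^{(S)}_{\theta_{0},W}$ for all $W$ precisely when $\Tr\rho_{\theta_{0}}L_{\theta_{0},i}L_{\theta_{0},j}\in\R$. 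Under this reality $S=0$, so the block construction degenerates to the projective case and produces a projection-valued measurement $M$ with $J^{(C,M)}=\re Z_{\theta_{0}}=J^{(S)}$. Conversely, if any $M$ attains $J^{(C,M)}=J^{(S)}$ then $\Tr W J^{(C,M)^{-1}}=\Tr W J^{(S)^{-1}}=C^{(S)}_{\theta_{0},W}$ for every $W$; combined with $\Tr W J^{(C,M)^{-1}}\ge C^{(H)}_{\theta_{0},W}\ge C^{(S)}_{\theta_{0},W}$ this forces $C^{(H)}_{\theta_{0},W}=C^{(S)}_{\theta_{0},W}$ for all $W$, whence Lemma~\ref{lem:sld_holevo} returns the reality of $Z_{\theta_{0}}$, closing the ``only if'' direction.
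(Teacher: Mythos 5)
Your second half is sound: deriving (\ref{eq:pure_sld_bound}) by testing $\Tr WJ_{\theta_{0}}^{(C,M)^{-1}}\geq\Tr WJ_{\theta_{0}}^{(S)^{-1}}$ against rank-one weights, and settling the equality clause in both directions through Lemma \ref{lem:sld_holevo}, is exactly the paper's route. But the substantive claim of Theorem \ref{thm:pure_holevo} is the existence of a single locally unbiased estimator attaining $C_{\theta_{0},W}^{(H)}$, and this is precisely the step you yourself label ``the technical heart'' and leave unverified --- so the proposal is a plan, not a proof. Worse, the block route you sketch has two concrete defects. First, the penalty in (\ref{eq:holevo_def}) is $\Tr\left|\sqrt{W}\,\im Z(B)\sqrt{W}\right|$, so the antisymmetric matrix that must be brought to canonical form is $\sqrt{W}\,\im Z(B)\,\sqrt{W}$, not $S=\im Z(B)$: a real orthogonal reparametrization that block-diagonalizes $S$ does not in general block-diagonalize the weighted matrix, and the weighted objective does not decouple across your $2\times2$ blocks unless $W$ happens to respect them. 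Second, on a finite-dimensional $\H$ there is no genuine canonical pair, so ``coherent-state/heterodyne-type'' POVMs do not literally exist there; you would need an exact finite-dimensional measurement whose classical Fisher information closes the gap on each block, and nothing in your sketch certifies this.

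The paper closes the gap with a single ancilla construction (Matsumoto's) that avoids blocks entirely. Taking optimal $B_{i}$ in (\ref{eq:holevo_def}) and $\ket{\xi_{i}}=B_{i}\ket{\psi}$, it defines the positive semidefinite matrix $\hat{Z}:=\sqrt{W}^{-1}\left|\sqrt{W}\,\im Z(B)\sqrt{W}\right|\sqrt{W}^{-1}-\ii\,\im Z(B)$, realizes $\hat{Z}$ as the Gram matrix of vectors $\ket{\hat{\xi}_{i}}$ orthogonal to a unit vector $\ket{\hat{\psi}}$ in an ancilla $\hat{\H}=\C^{\hat{r}+1}$ with $\hat{r}={\rm rank}\,\hat{Z}$, and sets $\ket{\tilde{\xi}_{i}}=\ket{\xi_{i}}\otimes\ket{\hat{\psi}}+\ket{\psi}\otimes\ket{\hat{\xi}_{i}}$. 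By design the imaginary part of $\hat{Z}$ cancels $\im Z(B)$ while its real part adds exactly the absolute-value penalty, so all overlaps $\braket{\tilde{\psi}}{\tilde{\xi}_{i}}$ and $\braket{\tilde{\xi}_{i}}{\tilde{\xi}_{j}}$ are real; hence there is an orthonormal basis of $\H\otimes\hat{\H}$ with real coefficients, from which one builds commuting observables $\tilde{T}_{i}$ that are locally unbiased and have covariance $\re Z(B)+\sqrt{W}^{-1}\left|\sqrt{W}\,\im Z(B)\sqrt{W}\right|\sqrt{W}^{-1}$, i.e.\ $\Tr WV_{\theta_{0}}=C_{\theta_{0},W}^{(H)}$ exactly. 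When (\ref{eq:pure_sld_achieve}) holds, $\im Z(B)=0$, $\dim\hat{\H}=1$, and the construction degenerates to a projection valued measurement on $\H$ itself --- the clause your ``$S=0$'' case asserts. To repair your outline, replace the heterodyne heuristic with this (or an equivalent) exact construction; the rest of your argument can stand as written.
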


\begin{proof}
Let $B_{1},\dots,B_{d}$ be observables that attain the minimum in
(\ref{eq:holevo_def}), and let $\ket{\xi_{i}}:=B_{i}\ket{\psi}\in\H$
be vectors for $1\leq i\leq d$ with $\ket{\psi}\in\H$ such that
$\rho_{\theta_{0}}:=\ket{\psi}\bra{\psi}$. Let
\[
\hat{Z}:=\sqrt{W}^{-1}\left|\sqrt{W}{\rm Im}Z(B)\sqrt{W}\right|\sqrt{W}^{-1}-\ii{\rm Im}Z(B)
\]
be a positive-semidefinite matrix, and let $\hat{r}={\rm rank}\hat{Z}$.
There exits a Hilbert space $\hat{\H}=\C^{\hat{r}+1}$ and vectors
$\ket{\hat{\psi}},\ket{\hat{\xi}_{1}},\dots,\ket{\hat{\xi}_{d}}\in\H_{a}$
such that
\begin{align*}
\braket{\hat{\xi}_{i}}{\hat{\xi}_{j}} & =\hat{Z}_{ij}\qquad(1\leq i,j\leq d),\\
\braket{\hat{\psi}}{\hat{\xi}_{i}} & =0\qquad(1\leq i\leq d),\\
\braket{\hat{\psi}}{\hat{\psi}} & =1.
\end{align*}
Let 
\begin{align*}
\ket{\tilde{\xi}_{i}} & :=\ket{\xi_{i}}\otimes\ket{\hat{\psi}}+\ket{\psi}\otimes\ket{\hat{\xi}_{i}}\qquad(1\leq i\leq d),\\
\ket{\tilde{\psi}} & :=\ket{\psi}\otimes\ket{\hat{\psi}}
\end{align*}
be vectors in $\H\otimes\hat{\H}$. Because $\braket{\tilde{\psi}}{\tilde{\xi}_{i}}(=0)$
and $\braket{\tilde{\xi}_{i}}{\tilde{\xi}_{j}}(={\rm Re}Z(B)_{ji}+{\rm Re}\hat{Z}_{ji})$
are all real, there exist an orthonormal basis $\left\{ \ket{\tilde{e}_{k}}\right\} _{k=1}^{\dim\H\otimes\hat{\H}}$
of $\H\otimes\hat{\H}$ such that $\braket{\tilde{e}_{k}}{\tilde{\psi}}$
and $\braket{\tilde{e}_{k}}{\tilde{\xi}_{i}}$ are all real, and that
$\braket{\tilde{e}_{k}}{\tilde{\psi}}\not=0$ for all $k$. Let
\[
\tilde{T}_{i}:=\theta_{0}^{i}I_{\H\otimes\hat{\H}}+\sum_{k=1}^{\dim\H\otimes\hat{\H}}\frac{\braket{\tilde{e}_{k}}{\tilde{\xi}_{i}}}{\braket{\tilde{e}_{k}}{\tilde{\psi}}}\ket{\tilde{e}_{k}}\bra{\tilde{e}_{k}}\qquad(1\leq i\leq d),
\]
be observables on $\H\otimes\hat{\H}$. Then $\tilde{T}_{1},\dots,\tilde{T}_{d}$
are simultaneously measurable, and satisfy the local unbiasedness
condition:
\[
\Tr\left(\rho_{\theta_{0}}\otimes\ket{\hat{\psi}}\bra{\hat{\psi}}\right)\tilde{T}_{i}=\bra{\tilde{\psi}}\tilde{T}_{i}\ket{\tilde{\psi}}=\theta_{0}^{i}
\]
and
\[
\Tr\left(\partial_{i}\rho_{\theta_{0}}\otimes\ket{\hat{\psi}}\bra{\hat{\psi}}\right)\tilde{T}_{j}={\rm Re}\bra{\tilde{l}_{i}}\tilde{T}_{j}\ket{\tilde{\psi}}=\delta_{ij},
\]
where 
\[
\ket{\tilde{l}_{i}}=\left(L_{\theta_{0},i}\ket{\psi}\right)\otimes\ket{\hat{\psi}}.
\]
Further they satisfy
\begin{align*}
V_{\theta_{0}}[\tilde{T}]_{ij} & =\Tr\left(\rho_{\theta_{0}}\otimes\ket{\hat{\psi}}\bra{\hat{\psi}}\right)\left(\tilde{T}_{i}-\theta_{0}^{i}I_{\H\otimes\hat{\H}}\right)\left(\tilde{T}_{j}-\theta_{0}^{j}I_{\H\otimes\hat{\H}}\right)\\
 & =\braket{\tilde{\xi}_{i}}{\tilde{\xi}_{j}}={\rm Re}Z(B)_{ji}+{\rm Re}\hat{Z}_{ji}\\
 & =\left[{\rm Re}Z(B)+\sqrt{W}^{-1}\left|\sqrt{W}{\rm Im}Z(B)\sqrt{W}\right|\sqrt{W}^{-1}\right]_{ji}.
\end{align*}
This implies $\Tr WV_{\theta_{0}}[\tilde{T}]=C_{\theta_{0},W}^{(H)}$. 

The proof of the equality achievement condition for (\ref{eq:pure_sld_bound})
is given in Lemma \ref{lem:sld_holevo}. Note that when $\Tr\rho_{\theta_{0}}L_{\theta_{0},i}L_{\theta_{0},j}\in\R$
for $1\leq i,j\leq d$, ${\rm Im}Z(B)=0$ and $\dim\hat{\H}=1$. Thus,
$\left\{ \tilde{T}_{i}\right\} _{i=1}^{d}$ are observables on $\H$,
and they correspond to a projection valued measurement on $\H$. 
\end{proof}

\section{Numerical calculations for the curves in Figure \ref{fig:fishers}
\label{sec:numerical_calculations}}

In this appendix, we details about the numerical calculations to obtain
the curves in Figure \ref{fig:fishers}. Without loss of generality,
we can assume $X_{1},X_{2},X_{3}$ are Pauli matrices and $J_{\theta_{0}}^{(U)}=I$.
To reveal the boundary of the convex set $\tilde{\F}_{\theta_{0}}^{(n)}$,
we calculate 
\begin{equation}
F^{(i)}(t):=\argmax_{F\in\F_{\theta_{0}}^{(n)}}\Tr H_{t}^{(i)}F\label{eq:numaricalF}
\end{equation}
numerically for $t\in[0,1]$ and $i=1,2,3$, where $H_{t}^{(1)}={\rm diag}(0,t,1-t)$,
$H_{t}^{(2)}={\rm diag}(1-t,0,t)$, $H_{t}^{(3)}={\rm diag}(t,1-t,0)$.
We can see that $F^{(i)}(t)$ corresponds to a point on the boundary
of $\tilde{\F}_{\theta_{0}}^{(n)}$. To calculate $F^{(3)}(t)$, we
compute an unit eigenvector $\ket t$ of $(tX_{1}^{(n)^{2}}+(1-t)X_{2}^{(n)^{2}})\otimes I$
with respect to the maximum eigenvalue with a $2\times2$ identity
matrix $I$. Finally, we check that $\bra tX_{i}^{(n)}X_{j}^{(n)}\otimes I\ket t$
is a real value and $\bra tX_{i}^{(n)}\otimes I\ket t=0$ for $1\leq i,j\leq3$
to verify $\ket t\in\A_{\theta_{0}}^{(n)}$, then we obtain $F^{(3)}(t)=\left[\bra tX_{i}^{(n)}X_{j}^{(n)}\otimes I\ket t\right]_{ij}$.
The calculation of $F^{(1)}(t)$ and $F^{(2)}(t)$ are similar. 

\section*{Acknowledgments}

The present study was supported by JSPS KAKENHI Grant Numbers JP23H01090,
JP22K03466.

\end{document}